\setlist[itemize]{label=\textbullet}
\newcommand{\R}{{\mathbb{R}}}
\newcommand{\esp}{{\mathbb{E}}}
\newcommand{\proba}{{\mathbb{P}}}
\newcommand{\Rtensor}[2]{(\R^{#1})^{\otimes {#2}}}
\newcommand{\ind}{{\mathbf{1}}}
\newcommand{\pushright}[1]{\ifmeasuring@#1\else\omit\hfill$\displaystyle#1$\fi\ignorespaces}
\DeclareMathOperator*{\argmin}{arg\,min}
\DeclarePairedDelimiter\floor{\lfloor}{\rfloor}
\newcommand\interint[2]{\left\llbracket 1, 5\right\rrbracket}
\tikzset{>=latex}
\pgfplotsset{compat=1.16}
\colorlet{lightgray}{gray!10}
\definecolor{colorp1}{HTML}{003f5c}
\definecolor{colorp2}{HTML}{58508d}
\definecolor{colorp3}{HTML}{ff6361}
\definecolor{colord1}{HTML}{66C2A5}
\definecolor{colord2}{HTML}{8DA0CB}
\definecolor{colord3}{HTML}{FC8D62}
\definecolor{colorp1}{HTML}{4878d0}
\definecolor{colorp2}{HTML}{ee854a}
\definecolor{colorp3}{HTML}{6acc64}
\definecolor{colord1}{HTML}{003f5c}
\definecolor{colord2}{HTML}{bc5090}%bc5090
\definecolor{colord3}{HTML}{ff6361}
\tikzset{
    ncbar angle/.initial=90,
    ncbar/.style={
        to path=(\tikztostart)
        -- ($(\tikztostart)!#1!\pgfkeysvalueof{/tikz/ncbar angle}:(\tikztotarget)$)
        -- ($(\tikztotarget)!($(\tikztostart)!#1!\pgfkeysvalueof{/tikz/ncbar angle}:(\tikztotarget)$)!\pgfkeysvalueof{/tikz/ncbar angle}:(\tikztostart)$)
        -- (\tikztotarget)
    },
    ncbar/.default=0.5cm,
}
\tikzset{square left brace/.style={ncbar=0.5cm}}
\tikzset{square right brace/.style={ncbar=-0.5cm}}
\begin{document}

\title*{The insertion method to invert the signature of a path}
% Use \titlerunning{Short Title} for an abbreviated version of
% your contribution title if the original one is too long
\author{Adeline Fermanian, Jiawei Chang, Terry Lyons and Gérard Biau}
% Use \authorrunning{Short Title} for an abbreviated version of
% your contribution title if the original one is too long
\institute{Adeline Fermanian \at LOPF, Califrais’~Machine~Learning~Lab, Paris, France, \email{adeline.fermanian@califrais.fr}
\and Jiawei Chang \at Department of Mathematics, University of Oxford, United Kingdom, \email{jiawei.chang@hotmail.com}
\and Terry Lyons \at Department of Mathematics, University of Oxford, The Alan Turing Institute, United Kingdom, \email{terry.lyons@maths.ox.ac.uk}
\and G\'erard Biau \at Sorbonne Universit\'e, CNRS, Laboratoire de Probabilités, Statistique et Modélisation, F-75005 Paris, France, \email{gerard.biau@sorbonne-universite.fr}}
%
% Use the package "url.sty" to avoid
% problems with special characters
% used in your e-mail or web address
%

\maketitle

\abstract{The signature is a representation of a path as an infinite sequence of its iterated integrals. Under certain assumptions, the signature characterizes the path, up to translation and reparameterization. Therefore, a crucial question of interest is the development of efficient algorithms to invert the signature, i.e., to reconstruct the path from the information of its (truncated) signature. In this article, we study the insertion procedure, originally introduced by \citet{chang2019insertion}, from both a theoretical and a practical point of view. After describing  our version of the method, we give its rate of convergence for piecewise linear paths, accompanied by an implementation in Pytorch. The algorithm is parallelized, meaning that it is very efficient at inverting a batch of signatures simultaneously. Its performance is illustrated with both real-world and simulated examples.}

\section{Introduction}

To any multivariate path, that is, a smooth function $X:[0,1] \to \R^d$, $d \geqslant 1$, one can associate its signature, denoted by $S(X)$, which is a representation of $X$ as an infinite sequence of tensors of iterated integrals. The signature was first introduced by \citet{chen1957integration}, then was at the heart of rough path theory \citep{lyons2007differential,friz2010multidimensional}, and is now widely used in machine learning \citep{levin2013learning,primer2016,kidger2019deep,kiraly2016kernels,liao2019learning,toth2020bayesian,fermanian2021embedding}. The combination of signatures and machine learning algorithms has been successfully applied in various domains such as character recognition \citep{yang2016deepwriterid,wilson2018path}, finance \citep{lyons2014feature,perez2018derivatives}, human action recognition \citep{li2017lpsnet,yang2017leveraging}, medicine \citep{morrill2019sepsis,morrill2020utilization}, and emotion recognition \citep{wang2019path}.

The signature is a faithful representation of a path as a geometric object, in the sense that if $X$ has at least one monotone coordinate, then $S(X)$ uniquely determines $X$, up to translation and reparameterization \citep{hambly2010uniqueness}. A natural question, therefore, is to design methods and fast algorithms that can invert the signature, i.e., reconstruct $X$ from the information in $S(X)$. This question is of particular interest in the area of time series generation, where time series over a fixed horizon can be seen as paths. The goal is to generate (fake) time series that follow the same distribution as those of a dataset. Similar to the moment-generating function for random variables, the expected signature characterizes distributions of paths \citep{chevyrev2018signature}, making it a very attractive tool for generative modeling \citep{ni2020conditional,lou2023pcf}. Thus, instead of generating time series directly, a possible strategy is to generate signatures and then invert them to get back to the time domain \citep{buehler2020data}. Moreover, beyond the theoretical interest, we can also think of several other applications of signature inversion such as trend estimation or data smoothing.

Inverting signatures has been an active area of research in recent years, with methods often being theoretically oriented, but practically intractable or slow in their algorithmic execution. For example, \citet{lyons2017hyperbolic} provide a procedure for piecewise linear paths, \citet{lyons2018inverting} for ${\mathscr C}_1$ paths, and \citet{chang2017signature} for monotone paths. Despite their mathematical interest, none of these methods comes with a package that allows them to be implemented and used in a data processing pipeline. For completeness, we also mention that some authors \citep[e.g., ][]{kidger2019deep, buehler2020data} have proposed neural-network-based inversion strategies, but without any theoretical guarantees and at a high computational cost.

In this paper, inspired by the preliminary results of \citet{chang2019insertion}, we propose a new, fast, and computationally tractable method, which we call the insertion method. This method exploits the property that signatures of order $n$ and $n+1$ are connected by inserting the derivatives of the path into a tensor product, which is then integrated. Retrieving the derivatives of a path can then be written as a linear minimization problem, which can be solved very efficiently. We show through a sound theoretical analysis that this approach allows to reconstruct any piecewise linear path with a given rate of convergence, asymptotically in $n$. Working with piecewise linear paths is by no means a restriction since in practice the signature is always computed by linear interpolation of the measurement points of $X$. Besides the theoretical guarantees, our method is accompanied by a fast algorithm that, given the signature of a piecewise linear path of order $n$, outputs a piecewise linear path of $n$ pieces. This algorithm is implemented in the PyTorch framework \citep{paszke2019pytorch}, and is now part of the Signatory \citep{signatory} package, freely available to the statistics and machine learning community.

The rest of the paper is organized as follows. In Section \ref{sec:preliminaries} we recall some basic definitions and elementary results on signatures and tensors, which are essential for a good understanding of our approach. Section \ref{sec:insertion_algorithm} describes the insertion method, gives the main theoretical results, and presents the algorithm. Section \ref{sec:experimental_results} shows our method in action through some examples of signature inversion, together with a brief study of the computational cost of the algorithm. For the sake of clarity, most of the technical proofs have been moved to the Supplementary Material.

\section{Preliminaries} \label{sec:preliminaries}

In this section, we introduce the mathematical framework along with some of the notation used throughout the paper. In all the following, we assume that $\R^d$ is equipped with the Euclidean norm, denoted by $\| \cdot \|$. We refer to \citet[][Chapter 1]{friz2010multidimensional} for further details on paths of bounded variation, and to \citet{lyons2007differential} for an introduction to signatures and tensors.

\subsection{Paths of bounded variation and admissible norms} 
\label{subsec:bv_paths}

\begin{definition}%[Path of bounded variation]
	\label{def:bv}
	Let $X:[0,1] \rightarrow \R^d$ be a continuous path. For any $[u,v] \subset [0,1]$, the total variation (or length) of $X$ on $[u,v]$ is defined by
	\[\|X\|_{TV; [u,v]} = \underset{(u_0, \dots, u_n) \in D_{u,v}}{\textnormal{sup}} \  \sum_{i=1}^n \|X_{u_i} - X_{u_{i-1}} \| ,\]
	where $D_{u,v}$ denotes the set of all finite partitions of $[u,v]$, i.e.,
	\[D_{u,v}=\big\{ (u_0,\ldots,u_n) \mid n \geqslant 1,\, u=u_0 \leqslant u_1\leqslant \cdots \leqslant u_{n-1} \leqslant u_n=v \big\}.\]
	The path $X$ is said to be of bounded variation on $[u,v]$ if its total variation is finite.
\end{definition}

Throughout the document, we denote by $BV(\R^d)$ the set of continuous paths of bounded variation on $[0,1]$ with values in $\R^d$. Moreover, when $[u,v]=[0,1]$, we often write $\|X\|_{TV}$ instead of $\|X\|_{TV; [0,1]}$. 
%A useful characterization of the total variation when $X$ is continuously differentiable is
%\begin{equation}
%\label{eq:bv_norm_equals_derivative}
%\|X \|_{TV} = \int_0^1 \|X'_t\|dt,
%\end{equation}
%where $X':[0,1] \to \R^d$ denotes the derivative of $X$. 
% Note that the total variation of a constant path being null, $\| \cdot\|_{TV}$ is a semi-norm. We can define a norm on $BV(\R^d)$ by letting
% \begin{equation*}
% \|X \|_{BV(\R^d)} = \|X\|_{TV} + \underset{t \in [0,1]}{\textnormal{sup}} \|X_t\| .
% \end{equation*}
% The norm $\| \cdot\|_{BV(\R^d)}$ equips $BV(\R^d)$ with a Banach structure \citep{lyons2007differential}.

\begin{example}{Example}
%\label{ex:tv_piecewise_linear_path}
Let $X:[0,1] \to \R^d$ be a continuous piecewise linear path, and let $0=t_0< t_1< \dots < t_{M-1}<t_M=1$
be the minimal partition such that $X$ is linear on each $[t_{i-1},t_i]$. So, there exist $\alpha_1, \dots, \alpha_M, \beta_1, \dots, \beta_M \in \R^d$ such that
\begin{equation*}
X_t = \alpha_i +\beta_i t, \quad \text{ for } t \in [t_{i-1},t_i], \quad i \in \{1, \dots, M \}.
\end{equation*}
One has
\begin{equation*}
\|X\|_{TV} = \sum_{i=1}^M \|\beta_i\| (t_{i-1}-t_i).
\end{equation*}
\end{example}

%\paragraph{Tensor space.}
For $E$ and $F$ two finite-dimensional real vector spaces, their tensor product is denoted by $E \otimes F$ \citep[see][]{purbhoo2012notes}. It is also a vector space, and, if $(e_i)_{i \in I}$, $(f_j)_{j \in J}$, are bases of $E$ and $F$ respectively, then $(e_i \otimes f_j)_{i \in I, j \in J}$ is a basis of $E \otimes F$. If $\textnormal{dim}(E) = d$, $\textnormal{dim}(F) = e$, then $\textnormal{dim}(E \otimes F) = d \times e$.

For $n \geqslant 0$, the $n$th tensor power of $E$ is defined as the order $n$ tensor product of $E$ with itself, i.e., $E^{\otimes n}= E \otimes \cdots \otimes E$, with the convention $E^{\otimes 0} = \R$. If $(e_1,\ldots,e_d)$ is a basis of $E$, then any element $a$ of $E^{\otimes n}$ can be written as 
\begin{equation*}
a = \sum_{(i_1,\dots,i_n) \in \{1,\dots,d\}^n} a_{(i_1,\dots,i_n)} e_{i_1} \otimes \cdots \otimes e_{i_n}, \qquad a_{(i_1,\dots,i_n)} \in \R.
\end{equation*}
The space $E^{\otimes n}$ is of dimension $d^n$, which means that we can identify $E^{\otimes n}$ with $\R^{d^n}$. In particular, $E^{\otimes 2}$ can be identified with the space of $(d\times d)$ matrices.

From now on, it is assumed that $E = \R^d$, where $\R^d$ is equipped with its canonical basis, denoted in this article by $(e_1, \dots, e_d)$. We want to equip the tensor spaces $(\R^d)^{\otimes n}$, $n \geqslant 1$, with norms inherited from the Euclidean norm on $\R^d$. An essential condition is that these norms should behave ``well'' with respect to the tensor product, a property summarized by the notion of admissible norms. This important notion is based on the notion of permutation on $\Rtensor{d}{n}$, recalled in the definition below. 
\begin{definition}
\label{def:tensor_permutation}
Let $\sigma$ be a permutation of $\{ 1, \dots, n \}$, i.e., a bijective function of $\{1, \dots, n \}$ over itself. For any $a \in \Rtensor{d}{n}$ of the form
\[a = \sum_{(i_1, \dots, i_n) \in \{1, \dots, d\}^n} a_{(i_1, \dots, i_n)} e_{i_1}\otimes \dots \otimes e_{i_n}, \] 
we let
\begin{equation*}
	\sigma(a)=  \sum_{(i_1, \dots, i_n) \in \{ 1, \dots, d\}^n} a_{(i_1, \dots, i_n)} e_{i_{\sigma(1)}}\otimes \dots \otimes e_{i_{\sigma(n)}}.
\end{equation*}
\end{definition}
\begin{example}{Example}
Let $n=2$, $\sigma(1)=2$, and  $\sigma(2)=1$. Then, for, any $a \in (\R^2)^{\otimes 2}$, one has
\begin{align*}
a&=a_{(1,1)} e_1 \otimes e_1 + a_{(1,2)} e_1 \otimes e_2 + a_{(2,1)} e_2 \otimes e_1  + a_{(2,2)} e_2 \otimes e_2,
\end{align*}
and thus
\begin{align*}
\sigma(a)&=a_{(1,1)} e_2 \otimes e_2 + a_{(1,2)} e_2 \otimes e_1 + a_{(2,1)} e_1 \otimes e_2  + a_{(2,2)} e_1 \otimes e_1.
\end{align*}
\end{example}

\begin{definition}[Admissible norms]
	\label{def:admissible_norm}
	Assume that, for each $n \geqslant 1$, $(\R^d)^{\otimes n}$ is endowed with a norm $ \| \cdot \|_n$. The norms $(\| \cdot \|_n)_{n \geqslant 1}$ are said to be admissible if
	\begin{enumerate}
		\item[$(i)$] For any $n \geqslant 1$, any permutation $\sigma$ of $\{1, \dots, n \}$,  any $a \in (\R^d)^{\otimes n}$,
		\[\|\sigma(a) \|_n = \| a\|_n. \]
		\item[$(ii)$] For any $n,m \geqslant 1$, any $a \in (\R^d)^{\otimes n}$, any $b \in (\R^d)^{\otimes m}$,
		\begin{equation*}
		\| a \otimes b\|_{n + m} = \| a\|_n \|b\|_m .
		\end{equation*}
	\end{enumerate}
\end{definition}

Throughout the paper, to keep the vocabulary simple, we refer to ``an admissible norm'' instead of ``a collection of admissible tensor norms'' and, since no confusion is possible, we write $\| \cdot \|$ instead of $\| \cdot \|_n$. There are several admissible norms. The most natural one is the Euclidean tensor norm, defined for $a \in \Rtensor{d}{n}$ by
	\begin{equation*}
    \|a\| = \Big(\sum_{(i_1, \hdots, i_n) \in \{1,\dots,d\}^n} a_{(i_1, \hdots, i_n)}^2 \Big)^{\nicefrac{1}{2}}.
	\end{equation*}
Another important admissible norm is the projective norm, which is the largest possible admissible norm. Denote by $\| \cdot \|$ any admissible norm, and write any element $a \in (\R^d)^{\otimes n}$ as 
\begin{equation} \label{eq:representation_tensor_element}
a = \sum_{i=1}^k a_{1,i} \otimes \dots \otimes a_{n,i}, \qquad a_{1,i}, \dots, a_{n,i} \in \R^d.
\end{equation}
Note that such a representation always exists but is clearly not unique. Then, for statement $(ii)$ in Definition \ref{def:admissible_norm} to be satisfied, by the triangle inequality, we must have
\[ \|a \| \leqslant \sum_{i=1}^k \|a_{1,i}\| \cdots \| a_{n,i} \|. \]
Since this inequality is valid for any representation of the form \eqref{eq:representation_tensor_element}, we must therefore have
\[ \|a \| \leqslant \inf \Big\{ \sum_{i=1}^k \|a_{1,i}\|\cdots \| a_{n,i} \| \mid a=\sum_{i=1}^k a_{1,i} \otimes \dots \otimes a_{n,i}, \, k \geqslant 1 \Big\}.\]
It is easily shown that
\begin{equation*}\label{eq:def_projective_norm}
\| a \|_{\pi}= \inf \Big\{ \sum_{i=1}^k \|a_{1,i}\|\cdots \| a_{n,i} \| \mid a=\sum_{i=1}^k a_{1,i} \otimes \dots \otimes a_{n,i}, \, k \geqslant 1\Big\}
\end{equation*}
is an admissible norm, called the projective norm. We refer to \citet{ryan2013introduction} for more details on tensor norms.

Throughout the article, it will be assumed that the tensor powers of $\R^d$ have been equipped with an admissible norm. We close the subsection with the definition of the tensor algebra.
\begin{definition}
	The tensor algebra $T(\R^d)$ is defined by
	\begin{equation*}
	 T(\R^d) = \big\{ (a_0,\dots,a_k,\dots) \mid \forall k \geqslant 0, a_k \in (\R^d)^{\otimes k} \big\}.
	 \end{equation*}
	For $n\geqslant 0$, the truncated tensor algebra is
	\begin{equation*}
	T^n(\R^d)= \big\{ (a_0,\dots,a_n) \mid \forall k \in \{0, \dots, n\}, a_k \in (\R^d)^{\otimes k} \big\}.
	\end{equation*}
\end{definition}

\subsection{The signature of a path}
\label{sec:sig_def_notations}
We are now in a position to define the signature of a path of bounded variation and to give some elementary examples.
\begin{definition}[Signature of a path]
\label{def:signature}
	Let $X \in BV(\R^d)$. The signature of $X$ is defined by
	\begin{equation*}
		S(X) = (1,{\mathbf{X}^1}, {\mathbf{X}^2},\dots,{\mathbf{X}^n},\dots ) \in T(\R^d) ,
	\end{equation*}
	where, for each $n \geqslant 1$,
	\begin{equation*}
		{\mathbf{X}^n} = \idotsint\limits_{ 0 \leqslant u_1 \leqslant \cdots \leqslant u_n \leqslant 1 } dX_{u_1}\otimes \dots \otimes dX_{u_n} \in (\R^d)^{\otimes n}.
	\end{equation*}
\end{definition}

Recall that when $X$ is a continuously differentiable path, then ``$dX_t =X'_t dt$'', where $X'$ is the derivative of $X$ with respect to $t$. However, the integrals in Definition \ref{def:signature} are still well defined for continuous paths of bounded variation using the notion of Riemann-Stieljes integrals \citep[e.g.,][]{lyons2007differential}.
\begin{example}{Example}\renewcommand{\arraystretch}{1.5}
	Assume that $d=2$ and let $X_t=(X^1_t, X^2_t)$. Since $(\R^2)^{\otimes n}$ can be identified with $\R^{2^n}$, the signatures of order 1 and 2 are equal to
	\begin{equation*}
	{\mathbf{X}^1} =\int_{0}^{1}dX_t = \begin{pmatrix}
	\int_{0}^{1} dX^{1}_t \\
	\int_{0}^{1} dX^{2}_t \\
	\end{pmatrix}
	\end{equation*}
	\begin{equation*}
	{\mathbf{X}^2}=\int_{0}^{1} \int_{0}^{t} dX_s \otimes dX_t  = \begin{pmatrix}
	\int_{0}^{1} \int_{0}^{t} dX^1_s dX^1_t & \int_{0}^{1} \int_{0}^{t} dX^1_s dX^2_t \\
	\int_{0}^{1} \int_{0}^{t} dX^2_s dX^1_t & \int_{0}^{1} \int_{0}^{t} dX^2_s dX^2_t \\
	\end{pmatrix}.
	\end{equation*}
\end{example}
For any $n \geqslant 1$, the truncated signature of order $n$ is denoted by
	\begin{equation*}
	S^n(X) =(1,{\mathbf{X}^1}, \dots,{\mathbf{X}^n}).
	\end{equation*}
For a multi-index $I = (i_1,\dots,i_n) \in \{1,\dots,d\}^n$, the coefficient of $\mathbf{X}^n$ corresponding to $I$ is
	\begin{equation*}
	S^I(X) =  \int_{(u_1, \dots, u_n) \in \Delta_n } dX^{i_1}_{u_1} \dots dX^{i_n}_{u_n},
	\end{equation*}
	where $X_t = (X^1_t, \dots, X^d_t)$, $t \in [0,1]$, and
	\begin{equation}
	\label{eq:def_Delta_k}
	\Delta_{n} = \big\{(u_1,\dots,u_n) \in [0,1]^n \mid 0 \leqslant u_1 \leqslant \cdots \leqslant u_n \leqslant 1\big\}.
	\end{equation}
	With this notation, the signature of order $n$ takes the form
	\[\mathbf{X}^n = \sum_{I \in \{1, \dots, d \}^n} S^I(X) e_{i_1} \otimes \dots \otimes e_{i_n}.\]
We will sometimes consider the signature of a path restricted to a certain interval $[u,v] \subset [0,1]$. Its signature is then denoted by $S_{[u,v]}(X)$ (respectively $\mathbf{X}^n_{[u,v]}$, and so on). Similarly, the simplex in $[u,v]^n$ is denoted by
	\begin{equation*}
	\label{eq:def_Delta_k_u_v}
	\Delta_{n; [u,v]} = \big\{(u_1,\dots,u_n) \in [u,v]^n \mid u \leqslant u_1 \leqslant \cdots \leqslant u_n \leqslant v\big\}.
	\end{equation*}
Note that the signature $S(X)$ is an element of $T(\R^d)$ and that the truncated signature $S^n(X)$ is an element of $T^n(\R^d)$. 
\begin{example}{Example}
%	\label{ex:linear_path}
	Let $X:[0,1] \to \R^d$ be a linear path, i.e., $X_t = \alpha + \beta t$, $\alpha, \beta \in \R^d$. Then, for any $n \geqslant 1$, $[u,v] \subset [0,1]$,
	\begin{align}\label{eq:signature_linear_path}
	\mathbf{X}^n_{[u,v]} &= \int_{\Delta_{n; [u,v]}}dX_{u_1} \otimes \dots \otimes  dX_{u_n} \nonumber \\
	&=  \int_{\Delta_{n; [u,v]}} (\beta du_1) \otimes \dots \otimes (\beta du_n)  \nonumber \\
	&= \beta^{\otimes n} \int _{\Delta_{n; [u,v]}}du_1 \dots du_n = \beta^{\otimes n} \frac{ (v-u)^n}{n!}. 
	\end{align}
\end{example}

We now recall without proofs the properties of signatures that will be essential to our method of inversion, and refer to \citet[][Chapter 2]{lyons2007differential} for more material.

\begin{proposition} \label{prop:basic_sig_prop}
$ $
\begin{itemize}
	\item[$(i)$] Let $X \in BV(\R^d)$, $u,v,w \in [0,1]$ such that $u \leqslant v \leqslant w$. Then
	\begin{equation*}
	\mathbf{X}^n_{[u,w]} = \sum_{k=0}^n \mathbf{X}^k_{[u,v]} \otimes \mathbf{X}^{n-k}_{[v,w]}.
	\end{equation*}
	\item[$(ii)$] Let $X \in BV(\R^d)$, $\psi : [0,1] \rightarrow [0,1]$ a reparameterization (i.e., a continuous non-decreasing surjection), and $\widetilde{X}_t = X_{\psi(t)} $ for any $t \in [0,1]$. Then $S(\widetilde{X}) = S(X).$
\end{itemize}
\end{proposition}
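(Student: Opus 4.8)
For both parts the plan is to work from the inductive ``one integral at a time'' description of the iterated integrals, namely $\mathbf{X}^n_{[u,w]} = \int_u^w \mathbf{X}^{n-1}_{[u,t]} \otimes dX_t$ with $\mathbf{X}^0 \equiv 1$ (which follows from Definition \ref{def:signature} by Fubini), and to argue by induction on $n$. Throughout, the integrals are Riemann--Stieltjes integrals of a continuous integrand against the Stieltjes measure $dX$, which is atomless since $X$ is continuous; and $t \mapsto \mathbf{X}^{n-1}_{[u,t]}$ is itself continuous and of bounded variation, so these integrals are well defined.

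For $(i)$, the base case $n=0$ is the trivial identity $1 = 1\otimes 1$. For the inductive step I would split the outer integral at $v$:
\begin{equation*}
\mathbf{X}^n_{[u,w]} = \int_u^v \mathbf{X}^{n-1}_{[u,t]} \otimes dX_t + \int_v^w \mathbf{X}^{n-1}_{[u,t]} \otimes dX_t.
\end{equation*}
The first term is exactly $\mathbf{X}^n_{[u,v]} = \mathbf{X}^n_{[u,v]} \otimes \mathbf{X}^0_{[v,w]}$, i.e.\ the $k=n$ summand. In the second term $t$ ranges over $[v,w]$, so the induction hypothesis applies to $\mathbf{X}^{n-1}_{[u,t]}$ with split point $v$, giving $\mathbf{X}^{n-1}_{[u,t]} = \sum_{j=0}^{n-1} \mathbf{X}^j_{[u,v]} \otimes \mathbf{X}^{n-1-j}_{[v,t]}$; pulling the ($t$-independent) factors $\mathbf{X}^j_{[u,v]}$ out of the integral and using $\int_v^w \mathbf{X}^{n-1-j}_{[v,t]} \otimes dX_t = \mathbf{X}^{n-j}_{[v,w]}$ turns the second term into $\sum_{j=0}^{n-1}\mathbf{X}^j_{[u,v]}\otimes\mathbf{X}^{n-j}_{[v,w]}$, the summands $k=0,\dots,n-1$. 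Adding the two contributions yields $\sum_{k=0}^{n}\mathbf{X}^k_{[u,v]}\otimes\mathbf{X}^{n-k}_{[v,w]}$. (An equivalent, more geometric route partitions the simplex $\Delta_{n;[u,w]}$ into the pieces $\Delta_{k;[u,v]}\times\Delta_{n-k;[v,w]}$ up to a set of $dX^{\otimes n}$-measure zero and invokes Fubini; the inductive version sidesteps an explicit discussion of the boundary faces.)

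For $(ii)$ I would again induct on $n$, but prove simultaneously the stronger statement that $\widetilde{\mathbf{X}}^{n}_{[0,t_0]} = \mathbf{X}^{n}_{[0,\psi(t_0)]}$ for every $t_0\in[0,1]$, so that the induction hypothesis is available on the subintervals produced by the outer integral (note $\psi|_{[0,t_0]}$ is again a continuous non-decreasing surjection, onto $[0,\psi(t_0)]$, and that $\psi(0)=0$, $\psi(1)=1$, so the case $t_0=1$ gives $S(\widetilde X)=S(X)$). The inductive step then reduces, taking $F(s) = \mathbf{X}^{n-1}_{[0,s]}$, to the change-of-variables formula
\begin{equation*}
\int_0^{t_0} F(\psi(t)) \otimes d\widetilde{X}_t = \int_0^{\psi(t_0)} F(s) \otimes dX_s
\end{equation*}
for the Riemann--Stieltjes integral under a continuous monotone (not necessarily smooth or injective) reparameterization: a tagged partition of $[0,t_0]$ in the variable $t$ is pushed forward by $\psi$ to a tagged partition of $[0,\psi(t_0)]$ in the variable $s$ (collapsing the ties created on intervals where $\psi$ is constant, which contribute nothing since $\widetilde X = X\circ\psi$ is then constant there), the corresponding Riemann--Stieltjes sums coincide term by term, and uniform continuity of $\psi$ forces the mesh to zero, so the two integrals share the same limit.

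The routine parts are the algebraic bookkeeping in $(i)$ and the integrability of continuous integrands against a continuous BV path. \textbf{The main obstacle} is the change-of-variables step in $(ii)$: since $\psi$ may have flat stretches and be far from a diffeomorphism, the naive formula ``$d\widetilde X_t = X'(\psi(t))\psi'(t)\,dt$'' is unavailable, and one must argue at the level of Riemann--Stieltjes sums as above (alternatively, prove $(ii)$ first for piecewise linear paths, where it is transparent, and then extend to general $X\in BV(\R^d)$ by approximating $X$ in total variation and using continuity of the truncated signature map).
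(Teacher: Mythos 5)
The paper does not actually prove this proposition: it is stated ``without proofs,'' with a pointer to \citet[][Chapter 2, Theorem 2.9]{lyons2007differential}, so there is no in-paper argument to compare against. Your proof is correct and is essentially the standard textbook argument from that reference. For $(i)$, the induction on $n$ via the recursion $\mathbf{X}^n_{[u,w]} = \int_u^w \mathbf{X}^{n-1}_{[u,t]} \otimes dX_t$, splitting the outer integral at $v$ and applying the induction hypothesis to $\mathbf{X}^{n-1}_{[u,t]}$ for $t \in [v,w]$, is clean, and the bookkeeping checks out: the first piece gives the $k=n$ summand and the second gives $k=0,\dots,n-1$. For $(ii)$, you correctly isolate the one genuinely nontrivial point --- the change of variables for Riemann--Stieltjes integrals under a continuous non-decreasing surjection that need not be injective or differentiable --- and your partition-pushforward argument handles it: ties created by flat stretches of $\psi$ contribute nothing because $\widetilde{X} = X \circ \psi$ is constant there, and uniform continuity of $\psi$ drives the mesh of the pushed-forward partition to zero. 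Strengthening the inductive statement in $(ii)$ to $\widetilde{\mathbf{X}}^{n}_{[0,t_0]} = \mathbf{X}^{n}_{[0,\psi(t_0)]}$ for all $t_0$ is indeed necessary (the level-$n$ integrand involves $\widetilde{\mathbf{X}}^{n-1}_{[0,t]}$ for every $t \leqslant t_0$), and you do this. No gaps.
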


The first statement of Proposition \ref{prop:basic_sig_prop} is known as Chen's identity \citep[][Theorem 2.9]{lyons2007differential}. It provides a procedure for computing signatures of piecewise linear paths. Indeed, starting from the explicit formula for the signature of a linear path given by \eqref{eq:signature_linear_path}, this identity allows to compute the signature of a concatenation of two linear sections. Iterating this scheme allows to compute the signature of any piecewise linear path. This is the basis of the tools available in Python such as esig, iisignature \citep{reizenstein2018iisignature}, and Signatory \citep{signatory}.

Since we are interested in reconstructing a path from its signature, we need to discuss what it means for two paths to have the same signature. It is clear from their definition that signatures are invariant by translation, and according to Proposition \ref{prop:basic_sig_prop}, $(ii)$ they are also invariant by reparameterization. The appropriate notion to encapsulate how signatures characterize paths is the so-called property of tree-like equivalence, introduced by \citet{hambly2010uniqueness} and defined as follows. The notation $\overleftarrow{X}$ denotes the time-reversal of a path $X$, defined, for any $t \in [0,1]$, by $\overleftarrow{X}_t = X_{1-t}$.

\begin{definition}[Tree-like equivalence]
$ $
	\label{intro:def:tree_like}
	\begin{itemize}
	\item[$(i)$] A path $X\in BV(\R^d)$ is tree-like if there exists a continuous function $h:[0,1] \to [0,+\infty)$ such that $h(0) = h(1)=0$ and, for any $[s,t] \subset [0,1]$,
	\begin{equation*}
	\|X_s - X_t \| \leqslant h(s) + h(t) -2 \underset{u \in [s,t]}{\textnormal{inf}} h(u).
	\end{equation*}
	\item[$(ii)$] Two paths $X,Y \in BV(\R^d)$ are tree-like equivalent if $X*\overleftarrow{Y}$ is a tree-like path. This relation is denoted by $X \sim Y$.
	\end{itemize}
\end{definition}

Informally, a tree-like path is a path that retraces itself backward. This concept is important insofar as it is involved in the following uniqueness theorem, due to \citet[][Theorem 4, Corollary 1.6]{hambly2010uniqueness}.

\begin{theorem}[Uniqueness]
	\label{th:uniqueness}
	Let $X, Y \in BV(\R^d)$. Then $S(X)=S(Y)$ if and only if $X \sim Y$. Moreover, for any $X \in BV(\R^d)$, there exists a unique path (up to translation and reparameterization) of minimal length in its equivalence class, denoted by $\overline{X}$ and called the reduced path.
\end{theorem}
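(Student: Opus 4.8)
The statement has three parts: the two implications of $S(X)=S(Y)\iff X\sim Y$, and the existence and uniqueness of a minimal-length representative; I would take them in that order. For the direction $X\sim Y\Rightarrow S(X)=S(Y)$, the plan is to reduce everything to one lemma: a tree-like path $Z$ has trivial signature $S(Z)=(1,0,0,\dots)$. I would prove the lemma first when the height function $h$ of Definition~\ref{intro:def:tree_like} is piecewise linear, in which case $Z$ decomposes into finitely many linear segments occurring in out-and-back pairs, each pair contributing nothing to any iterated integral by the explicit formula \eqref{eq:signature_linear_path} and Chen's identity (Proposition~\ref{prop:basic_sig_prop}$(i)$), so the whole signature collapses to $1$; the general case then follows by approximating $h$ with piecewise linear height functions and passing to the limit, the a priori bound $\|\mathbf{Z}^n\|\le\|Z\|_{TV}^n/n!$ supplying the uniform control term by term. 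Granting the lemma, apply it to the tree-like path $Y*\overleftarrow{Y}$: Chen's identity gives $S(Y)\otimes S(\overleftarrow{Y})=1$, so $S(\overleftarrow{Y})=S(Y)^{-1}$ in $T(\R^d)$ (legitimate since the degree-zero term of $S(Y)$ is $1$); then, whenever $X*\overleftarrow{Y}$ is tree-like, $1=S(X*\overleftarrow{Y})=S(X)\otimes S(Y)^{-1}$, i.e.\ $S(X)=S(Y)$.

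For the direction $S(X)=S(Y)\Rightarrow X\sim Y$, the previous paragraph shows this is equivalent to the assertion that every $Z\in BV(\R^d)$ with $S(Z)=(1,0,0,\dots)$ is tree-like; this is the crux, and my plan is the hyperbolic development argument. For each scale $\lambda>0$, develop $Z$ into the hyperbolic plane $\mathbb{H}^2$ through a linear development at scale $\lambda$, obtaining a path $\Gamma^{\lambda}$ starting at a basepoint $o$; its endpoint is given by a series in the tensors $\mathbf{Z}^n$ that converges by the factorial bound, so the hypothesis $S(Z)=(1,0,\dots)$ forces $\Gamma^{\lambda}(1)=o$ for every $\lambda$ and every such development, i.e.\ $\Gamma^\lambda$ is a based loop of length at most $\lambda\,\|Z\|_{TV}$. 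The geometric heart is then to exploit negative curvature: using that hyperbolic geodesic triangles are uniformly thin, one shows a based loop of bounded length must, up to excursions whose total length tends to $0$ as $\lambda\to\infty$, retrace itself, and one packages this as an approximate height function for $Z$ at scale $\lambda$; an Arzel\`a--Ascoli argument extracts a limiting continuous $h:[0,1]\to[0,\infty)$ with $h(0)=h(1)=0$ and $\|Z_s-Z_t\|\le h(s)+h(t)-2\inf_{[s,t]}h$, i.e.\ $Z$ is tree-like. I expect this quantitative ``a hyperbolic loop backtracks'', uniform in $\lambda$, together with the passage to the limit that builds $h$, to be the main obstacle; everything else is bookkeeping.

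For the reduced path, let $\ell^\ast=\inf\{\|Y\|_{TV}\mid Y\sim X\}$. For existence, take a minimizing sequence, reparameterize each term to constant speed (so the paths are uniformly Lipschitz with constant $\le\ell^\ast+1$) and use Arzel\`a--Ascoli to extract a uniform limit $\overline{X}$; lower semicontinuity of total variation gives $\|\overline{X}\|_{TV}\le\ell^\ast$, while uniform convergence plus the factorial bound gives $S(\overline{X})=\lim_k S(Y_k)=S(X)$, so $\overline{X}\sim X$ by the equivalence already proved, forcing $\|\overline{X}\|_{TV}=\ell^\ast$. For uniqueness up to translation and reparameterization: if $Y_1,Y_2$ both attain $\ell^\ast$ and $Y_1\sim Y_2$, then $Y_1*\overleftarrow{Y_2}$ is tree-like and hence factors through the $\R$-tree associated with its height function; minimality of the lengths of $Y_1$ and $Y_2$ forbids either from containing a nontrivial tree-like sub-loop, which forces the two arcs to coincide up to reparameterization once their starting points are matched by a translation. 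The technical content here is the structure of the $\R$-tree attached to a tree-like path and the ``irreducibility'' of a minimal-length representative; this is lighter than the hyperbolic estimate above, but is the second place where genuine care is needed.
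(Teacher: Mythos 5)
The paper does not prove this theorem: it is stated as a known result and attributed to \citet[Theorem 4, Corollary 1.6]{hambly2010uniqueness}, so there is no in-paper argument to compare yours against. What you have written is, in outline, the Hambly--Lyons proof itself (hyperbolic development for the hard implication, trivial signature of tree-like paths for the easy one, a compactness argument for the reduced path), which is the right strategy. But judged as a proof rather than as a citation, the proposal has genuine gaps at exactly the points where the original paper does its real work.

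First, in the direction $X\sim Y\Rightarrow S(X)=S(Y)$, your limiting step is not well posed: the height function $h$ in Definition \ref{intro:def:tree_like} is only assumed continuous, and replacing $h$ by a piecewise linear approximation $h_k$ does not produce approximating \emph{paths} $Z_k\to Z$ to which your out-and-back cancellation applies --- the path $Z$ is fixed, and the inequality $\|Z_s-Z_t\|\leqslant h_k(s)+h_k(t)-2\inf_{[s,t]}h_k$ may simply fail for $h_k$. The actual argument must first upgrade the continuous height function to one of bounded variation (or factor $Z$ through the associated $\R$-tree and contract it); this reduction is a nontrivial lemma in Hambly--Lyons, not bookkeeping. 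Second, in the hard direction, the ``quantitative backtracking of a hyperbolic loop, uniform in $\lambda$, plus extraction of a limiting height function'' is not an obstacle to be expected but the entire content of the theorem; as written it is an assertion of the result, not a proof of it. Third, in the existence argument for $\overline{X}$, the step $S(\overline{X})=\lim_k S(Y_k)$ does not follow from uniform convergence and the factorial bound alone: already the L\'evy area is not continuous under uniform convergence, and you need the interpolation result that uniform convergence together with uniformly bounded $1$-variation gives convergence in $p$-variation for $p>1$, whence continuity of the signature. The uniqueness of the minimal representative likewise needs the $\R$-tree structure theory you allude to. None of these gaps indicates a wrong approach --- you have correctly reconstructed the architecture of the known proof --- but each is a place where the argument, as proposed, does not yet close.
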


Theorem \ref{th:uniqueness} can be summarized as follows: the signature loses the information about the initial position of the path, of the time parameterization, and of any tree-like piece. We conclude this subsection with Proposition \ref{prop:up_bound_norm_sig}, which gives an upper bound on the size of signature coefficients.

\begin{proposition}
	\label{prop:up_bound_norm_sig}
	Let $X \in BV(\R^d)$, $[u,v] \subset [0,1]$. Then, for any admissible tensor norm and $n \geqslant 1$, one has
	\[\|{\mathbf{X}^n_{[u,v]}} \| \leqslant \frac{ \| X\|_{TV; [u,v]}^n}{n!}.\]
\end{proposition}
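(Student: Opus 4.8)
I would argue by induction on $n$, using two ingredients: the recursive integral identity
\[
\mathbf{X}^n_{[u,v]} = \int_u^v \mathbf{X}^{n-1}_{[u,t]} \otimes dX_t,
\]
which is immediate from Definition \ref{def:signature} by singling out the last time variable $u_n = t$ in the simplex $\Delta_{n;[u,v]}$ (the remaining integral over $u\leqslant u_1\leqslant\dots\leqslant u_{n-1}\leqslant t$ is exactly $\mathbf{X}^{n-1}_{[u,t]}$); and the multiplicativity property $(ii)$ of admissible norms, which gives $\|\mathbf{X}^{n-1}_{[u,t]} \otimes dX_t\| = \|\mathbf{X}^{n-1}_{[u,t]}\|\,\|dX_t\|$. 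The base case $n=1$ is just $\|\mathbf{X}^1_{[u,v]}\| = \|X_v - X_u\| \leqslant \|X\|_{TV;[u,v]}$, which follows from Definition \ref{def:bv} with the trivial partition.

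For the inductive step, I would bound the vector-valued Riemann--Stieltjes integral by its length. Concretely, for a partition $u = s_0 < s_1 < \dots < s_m = v$ one has $\mathbf{X}^n_{[u,v]} \approx \sum_j \mathbf{X}^{n-1}_{[u,s_j]} \otimes (X_{s_{j+1}} - X_{s_j})$, so by the triangle inequality and property $(ii)$,
\[
\|\mathbf{X}^n_{[u,v]}\| \leqslant \sum_j \|\mathbf{X}^{n-1}_{[u,s_j]}\|\,\|X_{s_{j+1}} - X_{s_j}\| \leqslant \sum_j \|\mathbf{X}^{n-1}_{[u,s_j]}\|\,\big(\omega(s_{j+1}) - \omega(s_j)\big),
\]
where $\omega(t) := \|X\|_{TV;[u,t]}$ is non-decreasing with $\omega(u)=0$. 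Passing to the limit over refining partitions turns the right-hand side into $\int_u^v \|\mathbf{X}^{n-1}_{[u,t]}\|\, d\omega(t)$, and the induction hypothesis $\|\mathbf{X}^{n-1}_{[u,t]}\| \leqslant \omega(t)^{n-1}/(n-1)!$ then yields $\|\mathbf{X}^n_{[u,v]}\| \leqslant \frac{1}{(n-1)!}\int_u^v \omega(t)^{n-1}\, d\omega(t)$.

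The proof is completed by the elementary scalar inequality $\int_u^v \omega(t)^{n-1}\, d\omega(t) \leqslant \omega(v)^n / n$, valid for any non-decreasing $\omega$ with $\omega(u)=0$ (with equality when $\omega$ is continuous; if $\omega$ has jumps one uses the left limit $\omega(t-)^{n-1}$ in the Stieltjes sums, which only decreases the bound). Substituting gives $\|\mathbf{X}^n_{[u,v]}\| \leqslant \omega(v)^n/n! = \|X\|_{TV;[u,v]}^n/n!$, as claimed. The only genuinely delicate point is making the first displayed inequality of the induction step rigorous, i.e.\ controlling a vector-valued Riemann--Stieltjes integral by the total-variation measure while correctly handling atoms of $\omega$; an alternative that sidesteps the induction is to write $\mathbf{X}^n_{[u,v]}$ directly as an integral over $\Delta_{n;[u,v]}$, bound its norm by $\int_{\Delta_{n;[u,v]}} d\mu(u_1)\cdots d\mu(u_n)$ (with $\mu$ the total-variation measure), and use the $n!$-fold symmetry of the cube $[u,v]^n$ together with $\mu([u,v]) = \|X\|_{TV;[u,v]}$ — but then the diagonal contributes unless $\mu$ is atomless, so the inductive route is the cleaner one to present.
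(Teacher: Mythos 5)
Your argument is correct and is precisely the standard inductive proof of factorial decay that the paper implicitly relies on (it states this proposition without proof, as a recalled classical fact from the rough-path literature). One small simplification: since $X\in BV(\R^d)$ is by Definition \ref{def:bv} a \emph{continuous} path, the function $\omega(t)=\|X\|_{TV;[u,t]}$ is continuous, so it has no atoms and your worry about left limits in the Stieltjes sums is moot — the identity $\int_u^v\omega(t)^{n-1}\,d\omega(t)=\omega(v)^n/n$ holds exactly and the inductive step closes without any extra care.
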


\section{The insertion method} \label{sec:insertion_algorithm}

\subsection{Description}

We are now ready to describe the insertion method and its associated algorithm. Suppose that we are given a piecewise linear path $X:[0,1] \to \R^d$ as in Subsection \ref{subsec:bv_paths}. Recall that we denote by 
\[0=t_0< t_1< \dots < t_{M-1}<t_M=1\]
the minimal partition such that $X$ is linear on each $[t_{i-1},t_i]$. Thus, there exist intercepts $\alpha_1, \dots, \alpha_M \in \R^d$ and non-zero slopes $\beta_1, \dots, \beta_M \in \R^d$ such that 
\begin{equation} \label{eq:def_X_piecewise_linear}
X_t = \alpha_i +\beta_i t, \quad \text{ for } t \in [t_{i-1},t_i], \quad i \in \{1, \dots, M \}.
\end{equation}
For each $i \in \{1, \dots, M-1\}$, we let $\omega_i$ be the angle between the linear segments $[t_{i-1}, t_i]$ and $[t_i, t_{i+1}]$, i.e.,
\begin{equation*}
	\omega_i = \text{Arccos}\Big( \frac{\langle \beta_{i}, \beta_{i+1} \rangle}{\|\beta_i\| \|\beta_{i+1}\|} \Big) \in [0, \pi],
\end{equation*}
where $\langle \cdot, \cdot \rangle$ is the Euclidean scalar product and $\text{Arccos}$ is the arccosine function. Assuming that the partition is minimal means that $\omega_i \neq 0$. In addition, we assume that $X$ is the reduced path in its equivalence class, for the tree-like equivalence (see Theorem \ref{th:uniqueness}),  which means that $\omega_i \neq \pi$. (If $\omega_i=\pi$, then segment $i+1$ retraces backwards segment $i$ and is therefore a tree-like part of the path, which is excluded if $X$ is reduced.) In summary, we have the following assumption:
\begin{equation*}\label{eq:def_omega}
	(A_1) \quad  \forall i \in \{1, \dots, M\}, \quad \omega_i \in \, ]0, \pi[.
\end{equation*}

We denote by $\omega= \min_i \omega_i$ the smallest angle between two linear segments. Invariance by translation means that we cannot recover $\alpha_0$ from $S(X)$. Due to the continuity of $X$, the $\alpha_i$ for $i \in \{1, \dots, M\}$ are completely determined by $\alpha_0, \beta_1, \dots, \beta_M$. Therefore, the only information we hope to recover is the information on the slopes $\beta_i$, up to the reparameterization of the path. So we choose a specific parameterization of $X$. Let $\ell = \|X\|_{TV}$ be the length of $X$, assumed to be strictly positive, and let $\phi$ be the reparameterization defined as
\begin{align*}
	\phi: [0,1] &\to [0,1] \\
			t & \mapsto \frac{\| X \|_{TV; [0,t]}}{\ell}.
\end{align*}
The function $\phi$ is strictly increasing and continuous piecewise linear. In fact, if $t \in [t_{i-1}, t_i]$, then 
\[\phi(t) = \ell^{-1}\sum_{k=1}^{i-1} \| \beta_k\| (t_k - t_{k-1}) + \ell^{-1}\|\beta_i\| (t-t_{i-1}).\] 
The path $\overline{X}_t = X_{\phi^{-1}(t)}$ is therefore a reparameterization of $X$, and $S(X) = S(\overline{X})$. More precisely, $\overline{X}$ is piecewise linear on the partition $0=u_0 < u_1 < \dots < u_M = 1$, where
\begin{equation*}
	u_i = \ell^{-1}\sum_{k=1}^{i} \|\beta_k\| (t_k - t_{k-1}).
\end{equation*}
To see this, just note that for each $i \in \{1,\dots, M\}$ and $u \in [u_{i-1}, u_i]$, one has
\[\phi^{-1}(u) = \frac{\ell}{\| \beta_i\|} (u-u_{i-1}) + t_{i-1},\]
and
\[ \overline{X}_u = \alpha_i + \beta_i \phi^{-1}(u) = \alpha_i + \frac{\ell}{\| \beta_i\|} \beta_i (u-u_{i-1}) + \beta_i t_{i-1}.\]
The path $\overline{X}$ is thus piecewise linear and its slopes have a constant norm, equal to $\ell$. Note that recovering $X$ up to translation and reparameterizations amounts to recovering the direction of each linear segment, i.e., $\nicefrac{\beta_1}{\| \beta_1\|}, \dots, \nicefrac{\beta_M}{\| \beta_M\|}$. So, from now on, we will assume that $X$ is piecewise linear on a partition $t_0 = 0 < t_1 < \dots < t_M = 1$ and that $\|\beta_i\| = \ell$ for each $i \in \{1, \dots, M\}$:
\begin{equation*}
	(A_2) \quad \forall i \in \{1, \dots, M\}, \quad \|\beta_i\| = \ell.
\end{equation*}

Throughout, to avoid degenerate situations, it is assumed that for each $n \geqslant 1$, $\| \mathbf{X}^n \| \neq 0$. There is no harm in doing so since according to \citet{boedihardjo2019non} this assumption is satisfied for $n$ large enough when $X$ is not tree-like, which is our case under Assumption $(A_1)$. The general idea of the insertion method is as follows. Signatures of order $n$ and $n+1$ are connected by inserting the derivatives of the path into a tensor product, which is then integrated. Therefore, retrieving the derivative of the path boils down to finding a vector in $\R^d$ such that, once inserted into the tensor product of the signature of order $n$, minimizes the distance between that modified signature and the signature of order $n+1$. This can be written as a linear minimization problem, which can be solved very efficiently. It allows to retrieve the slopes $\beta_i$ of the path at different intervals and thus to reconstruct the path by integration.

The first ingredient of the method is the insertion map. It is a linear map defined as a tensor product between a vector $y \in \R^d$ and the signature of $X$ of order $n$ along the dimension $p$.

\begin{definition}[Insertion map]
\label{def:insertion_map}
Let $X \in BV(\R^d)$, $n \geqslant 1$, and $p \in \{1, \dots, n+1\}$. The insertion map $\mathscr{L}^n_{p,X}:\R^d \rightarrow (\R^d)^{\otimes (n+1)}$ is defined, for any $y \in \R^d$, by
\begin{align*}
&[p=1]\quad {\mathscr L}^n_{1,X}(y) = \int_{ \Delta_n} y \otimes dX_{u_1} \otimes \dots \otimes dX_{u_n},  \\
& [2 \leqslant p <n]\quad  {\mathscr L}^n_{p,X}(y)= \int_{\Delta_n} dX_{u_1} \otimes \dots \otimes dX_{u_{p-1}} \otimes y \otimes dX_{u_{p}} \otimes \dots \otimes dX_{u_n}, \\
& [p=n+1] \quad {\mathscr L}^n_{n+1,X}(y)= \int_{\Delta_n} dX_{u_1} \otimes \dots \otimes dX_{u_n} \otimes y.
\end{align*}
\end{definition}

A first observation is that the map ${\mathscr L}^n_{p,X}$ is linear. Moreover, it depends only on the signature of $X$ of order $n$ and is Lipschitz continuous for any admissible tensor norm (see Definition \ref{def:admissible_norm}). These properties are summarized in the next proposition.
\begin{proposition} 
\label{insertiontest}
Let $X\in BV(\R^d)$. For any $n \geqslant 1, p \in \{1, \dots, n+1\}$, $\mathscr{L}^n_{p,X}$ is linear, Lipschitz-continuous, and its Lipschitz constant is $\| \mathbf{X}^n \|$.
\end{proposition}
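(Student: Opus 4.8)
The plan is to prove the three assertions in turn, the linearity being immediate and the Lipschitz constant being the real content. Linearity of $\mathscr{L}^n_{p,X}$ follows at once from the linearity of the tensor product in the inserted slot and the linearity of the Riemann--Stieltjes integral: for each fixed $p$ the map $y \mapsto dX_{u_1} \otimes \dots \otimes y \otimes \dots \otimes dX_{u_n}$ is linear in $y$, and integrating over $\Delta_n$ preserves linearity. So it suffices to compute the operator norm $\|\mathscr{L}^n_{p,X}\| = \sup_{\|y\| \leqslant 1} \|\mathscr{L}^n_{p,X}(y)\|$, which (for a linear map) is exactly its Lipschitz constant.

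For the upper bound $\|\mathscr{L}^n_{p,X}(y)\| \leqslant \|\mathbf{X}^n\|\,\|y\|$, the key observation is that inserting $y$ in slot $p$ and then integrating is the same as first integrating and then inserting. Concretely, I would split $\Delta_n$ according to Chen-type reasoning, or more directly observe that
\[
\mathscr{L}^n_{p,X}(y) = \idotsint\limits_{\Delta_{p-1}} dX_{u_1}\otimes\cdots\otimes dX_{u_{p-1}} \;\otimes\; y \;\otimes\; \idotsint\limits_{\Delta_{n-p+1}} dX_{v_1}\otimes\cdots\otimes dX_{v_{n-p+1}}
\]
is \emph{not} quite right because the simplex does not factor; instead the clean statement is that $\mathscr{L}^n_{p,X}(y)$ is obtained from $\mathbf{X}^n = \sum_I S^I(X)\, e_{i_1}\otimes\cdots\otimes e_{i_n}$ by replacing each basis tensor $e_{i_1}\otimes\cdots\otimes e_{i_n}$ with $e_{i_1}\otimes\cdots\otimes e_{i_{p-1}}\otimes y \otimes e_{i_p}\otimes\cdots\otimes e_{i_n}$ — i.e. $\mathscr{L}^n_{p,X}(y) = \sum_I S^I(X)\, e_{i_1}\otimes\cdots\otimes y\otimes\cdots\otimes e_{i_n}$, which one sees by expanding $dX_{u_k} = \sum_{i_k} dX^{i_k}_{u_k} e_{i_k}$ inside the integral. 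Now I use the admissibility axioms from Definition \ref{def:admissible_norm}: property $(i)$ lets me permute $y$ into the last slot without changing the norm, so $\|\mathscr{L}^n_{p,X}(y)\| = \|\mathscr{L}^n_{n+1,X}(y)\|$ for every $p$; and then $\mathscr{L}^n_{n+1,X}(y) = \mathbf{X}^n \otimes y$, so property $(ii)$ gives $\|\mathscr{L}^n_{n+1,X}(y)\| = \|\mathbf{X}^n\|\,\|y\|$ exactly. This simultaneously yields the upper bound and shows it is attained (take any $y$ with $\|y\|=1$), so the Lipschitz constant equals $\|\mathbf{X}^n\|$, and this value is nonzero by the standing assumption $\|\mathbf{X}^n\|\neq 0$.

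I expect the main (minor) obstacle to be the bookkeeping that justifies the identity $\mathscr{L}^n_{p,X}(y) = \sigma\big(\mathbf{X}^n \otimes y\big)$ for the appropriate permutation $\sigma$ cycling $y$ from position $n+1$ to position $p$: one must be careful that the integration over the ordered simplex $\Delta_n$ only constrains the $u_j$ relative to each other, while $y$ carries no integration variable, so pulling $y$ out of the integral and reindexing is legitimate but should be written out coordinatewise to be safe. Once that identity is in hand, the norm computation is a one-line application of the two admissibility properties, and there is no inequality to lose — the constant is exactly $\|\mathbf{X}^n\|$ rather than merely bounded by it. A remark worth including: the map depends on $X$ only through $\mathbf{X}^n$ (indeed through $S^n(X)$), which is immediate from the coordinate formula above and is what makes the method practical, since in the inversion algorithm one only ever has access to the truncated signature.
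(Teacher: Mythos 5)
Your proof is correct and follows essentially the same route as the paper's: linearity from linearity of the tensor product, then permutation invariance (Definition \ref{def:admissible_norm}, $(i)$) to move the inserted vector to the last slot, and multiplicativity ($(ii)$) to factor out its norm, giving the exact equality $\|\mathscr{L}^n_{p,X}(y)-\mathscr{L}^n_{p,X}(z)\| = \|\mathbf{X}^n\|\,\|y-z\|$. Your coordinatewise justification of the identity $\mathscr{L}^n_{p,X}(y)=\sigma(\mathbf{X}^n\otimes y)$ simply makes explicit a step the paper performs directly on the integral notation.
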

\begin{proof}
The linearity is an immediate consequence of the linearity of the tensor product. Moreover, for $y,z \in \R^d$, we have
\begin{align*}
&\|\mathscr{L}^n_{p,X}(y) - \mathscr{L}^n_{p,X}(z)\| \nonumber\\
&\quad =\big\| \int_{\Delta_n} dX_{u_1} \otimes \dots \otimes dX_{u_{p-1}} \otimes (y-z) \otimes dX_{u_{p}}\otimes \dots \otimes dX_{u_n} \big \| \nonumber \\
 &\quad =\big\| \int_{\Delta_n} dX_{u_1} \otimes \dots \otimes dX_{u_n} \otimes (y-z) \big \| \tag*{(by Definition \ref{def:admissible_norm}, $(i)$)} \nonumber \\
& \quad = \big\| \int_{\Delta_n} dX_{u_1} \otimes \dots \otimes dX_{u_n} \big \| \cdot  \| y-z \| \tag*{(by Definition \ref{def:admissible_norm}, $(ii)$)} \nonumber \\
& \quad = \| \mathbf{X}^n \| \cdot \| y-z\|. 
\end{align*}
{\qed}
\end{proof}

The general idea of the insertion method is to solve the following optimization problem: 
\begin{align} \label{eq:min_problem_insertion_algo}
    y^\ast_{p,n} \in \argmin_{y \in \R^d} \|\mathscr{L}^n_{p,X}(y)- (n+1)\mathbf{X}^{n+1}\|.
\end{align}
The main result of the section is that the minimizer of \eqref{eq:min_problem_insertion_algo} is close to $\beta_i$ for the projective norm, provided the parameter $\nicefrac{p}{(n+1)}$ is well-chosen inside $]t_{i-1}, t_i[$. 
\begin{theorem}
\label{th:insertion_cvg}
	Let $X$ be a piecewise linear path as defined by \eqref{eq:def_X_piecewise_linear}, following assumptions $(A_1)$ and $(A_2)$. For any $n\geqslant1$, $p \in \{1, \dots, n+1\}$, define
	\begin{equation*}
		y^\ast_{p,n} \in \argmin_{y \in \R^d} \|\mathscr{L}^{n}_{p,X}(y)- (n+1)\mathbf{X}^{n+1}\|_\pi.
	\end{equation*}
	Let
	\begin{equation*}\label{eq:def_K_omega_and_n_1}
		K(\omega) = \log\Big(\frac{2}{1 - \cos(\nicefrac{\omega}{2})} \Big) \quad \text{ and } \quad n_1 = \lfloor4 e^{2(M-1) K(\omega)}\rfloor.
	\end{equation*}
Then, for any $ i \in \{1, \dots, M \}$, $n > \max (n_1, \nicefrac{2}{(t_i - t_{i-1})})$, and $p=\lfloor \nicefrac{(3t_i + t_{i-1})(n+1)}{4} \rfloor$, there exists an integer $k_n \in [n-n^{\nicefrac{3}{4}}, n + n^{\nicefrac{3}{4}}]$ such that
	\begin{equation*}
		\| y^{\ast}_{p,k_n}- \beta_i \| \leqslant 4 \ell e^{(M-1)K(\omega)}\Big(\frac{1}{\sqrt{k_n+1}} \sqrt{\frac{1 - (t_i - t_{i-1})}{t_i - t_{i-1}}} +  4\exp\Big(- \frac{k_n}{16}(t_i-t_{i-1})^2\Big) \Big).
	\end{equation*}
\end{theorem}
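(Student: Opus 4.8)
\noindent The plan is to use the variational structure of the problem. Since $y^{\ast}_{p,k_n}$ minimizes $y\mapsto\|\mathscr{L}^{k_n}_{p,X}(y)-(k_n+1)\mathbf{X}^{k_n+1}\|_\pi$ over $\R^d$ while $\beta_i$ is a competitor, the triangle inequality gives $\|\mathscr{L}^{k_n}_{p,X}(y^{\ast}_{p,k_n}-\beta_i)\|_\pi\le 2\varepsilon_{k_n}$, where $\varepsilon_{n}:=\|\mathscr{L}^{n}_{p,X}(\beta_i)-(n+1)\mathbf{X}^{n+1}\|_\pi$ is the ``defect'' of $\beta_i$. Consequently, as soon as the insertion map is bounded below, say $\|\mathscr{L}^{n}_{p,X}(v)\|_\pi\ge c_n\|v\|$ for all $v\in\R^d$, one obtains $\|y^{\ast}_{p,k_n}-\beta_i\|\le 2\varepsilon_{k_n}/c_{k_n}$, so the theorem reduces to an upper bound on the defect and a lower bound on $c_n$ whose ratio is the announced expression.

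\smallskip\noindent\emph{Bounding the defect.} Expand both tensors by Chen's identity (Proposition~\ref{prop:basic_sig_prop}) over the breakpoints $t_0,\dots,t_M$, using the linear-segment formula~\eqref{eq:signature_linear_path}: $(n+1)\mathbf{X}^{n+1}=(n+1)\sum_{|\mathbf{k}|=n+1}c_{\mathbf{k}}\,\beta_1^{\otimes k_1}\!\otimes\!\cdots\!\otimes\!\beta_M^{\otimes k_M}$ with $c_{\mathbf{k}}=\prod_j(t_j-t_{j-1})^{k_j}/k_j!$, and $\mathscr{L}^{n}_{p,X}(\beta_i)=\sum_{|\mathbf{k}|=n}c_{\mathbf{k}}\,(\beta_1^{\otimes k_1}\!\otimes\!\cdots\!\otimes\!\beta_M^{\otimes k_M}$ with $\beta_i$ inserted at slot $p)$. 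When slot $p$ falls inside the $i$-th block, inserting $\beta_i$ simply turns $\mathbf{k}$ into $\mathbf{k}'=\mathbf{k}+e_i$ and the matching coefficient becomes $\tfrac{k_i'}{t_i-t_{i-1}}c_{\mathbf{k}'}$ instead of $(n+1)c_{\mathbf{k}'}$; identifying $(n+1)!\,c_{\mathbf{k}'}$ with the law of $\mathbf{N}\sim\mathrm{Mult}(n+1;t_1-t_0,\dots,t_M-t_{M-1})$, the defect separates into a \emph{centering} part, governed by $\esp\big|\tfrac{N_i}{t_i-t_{i-1}}-(n+1)\big|\le\tfrac{1}{t_i-t_{i-1}}\sqrt{\var N_i}$, which yields the factor $\tfrac{1}{\sqrt{n+1}}\sqrt{(1-(t_i-t_{i-1}))/(t_i-t_{i-1})}$, and \emph{mis-assignment} parts --- the terms where slot $p$ lies outside the $i$-th block of $\mathbf{k}'$, or where $\beta_i$ is inserted into a wrong block --- controlled by Hoeffding's inequality applied to the relevant partial sums of the multinomial. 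Here the choice $p=\lfloor(3t_i+t_{i-1})(n+1)/4\rfloor$ is used: it keeps $p/(n+1)$ inside $\,]t_{i-1},t_i[\,$ at distance at least $\tfrac14(t_i-t_{i-1})$ from both endpoints --- the hypothesis $n>2/(t_i-t_{i-1})$ absorbing the floor --- producing the factor $\exp\!\big(-\tfrac{n}{16}(t_i-t_{i-1})^2\big)$. To keep the estimate meaningful for nearly-degenerate paths one should carry out this expansion against $c_n$ directly, so that the common geometric prefactor of $\varepsilon_n$ and $c_n$ cancels.

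\smallskip\noindent\emph{Conditioning of the insertion map.} By projective/injective duality, testing $\mathscr{L}^{n}_{p,X}(v)$ against pure tensors $\eta_1\otimes\cdots\otimes\eta_{n+1}$ with $\eta_p=v/\|v\|$ already gives $\|\mathscr{L}^{n}_{p,X}(v)\|_\pi\ge\|v\|\,\|\mathbf{X}^n\|_{\mathrm{inj}}$; but $\|\mathbf{X}^n\|_{\mathrm{inj}}$ can be too small and, more importantly, decouples badly from $\varepsilon_n$. The right lower bound uses a dual functional \emph{adapted to the path}, built from its development (in the spirit of the hyperbolic-development inversion of \citet{lyons2017hyperbolic} and of \citet{chang2019insertion}), peeled off one linear piece at a time. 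Each of the $M-1$ corners then degrades the available lower bound by exactly the factor $e^{-K(\omega)}=\tfrac12\big(1-\cos(\tfrac{\omega}{2})\big)$, the worst-case partial cancellation caused by the two directions meeting at a breakpoint. Since this quantity may dip at isolated orders, one cannot fix the order $n$; instead one applies the estimate along the window $[\,n-n^{3/4},\,n+n^{3/4}\,]$ and picks $k_n$ in it where it holds, the non-vanishing of $\mathbf{X}^m$ for large $m$ (\citet{boedihardjo2019non}) guaranteeing such a $k_n$ exists. The threshold $n_1=\lfloor 4e^{2(M-1)K(\omega)}\rfloor$ is precisely what makes $n$ large relative to the scale $e^{2(M-1)K(\omega)}$ on which the development estimate operates, and keeps the concentration bounds above effective throughout the window.

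\smallskip\noindent\emph{Conclusion and main obstacle.} Combining the three steps with $k_n$ in place of $n$ gives $\|y^{\ast}_{p,k_n}-\beta_i\|\le 2\varepsilon_{k_n}/c_{k_n}\le 4\ell\,e^{(M-1)K(\omega)}\big(\tfrac{1}{\sqrt{k_n+1}}\sqrt{\tfrac{1-(t_i-t_{i-1})}{t_i-t_{i-1}}}+4\exp(-\tfrac{k_n}{16}(t_i-t_{i-1})^2)\big)$, which is the claim. The main obstacle is the conditioning step: converting the qualitative non-degeneracy of $X$ into an explicit, turning-angle-dependent lower bound on $\mathscr{L}^{n}_{p,X}$ --- together with the matching refinement of the defect estimate so that the same geometric prefactor appears on both sides --- and absorbing the possible deterioration of this quantity at exceptional orders by sliding to a nearby $k_n$. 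Once that lower bound is available, the defect estimate is a careful but essentially routine combination of Chen's identity with elementary concentration inequalities.
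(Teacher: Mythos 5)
Your proposal follows the paper's proof in all essentials: the variational/triangle-inequality reduction to a defect bound plus a lower bound on the insertion map, the Chen-identity decomposition of the defect with Cauchy--Schwarz on the central (multinomial/binomial) term and Hoeffding on the mis-assignment terms, the hyperbolic development yielding the factor $e^{-(M-1)K(\omega)}$, and the sliding window $[n-n^{3/4},n+n^{3/4}]$ to pick a good order $k_n$. Two small corrections to your framing. First, the ``conditioning of the insertion map'' is not a separate difficulty: for any admissible norm, permutation-invariance and multiplicativity give the \emph{exact identity} $\|\mathscr{L}^n_{p,X}(v)\|_\pi=\|\mathbf{X}^n\|_\pi\,\|v\|$ (Proposition~\ref{insertiontest}), so the entire lower-bound problem collapses onto $\|\mathbf{X}^n\|_\pi$ itself, which is precisely what the development argument (Proposition~\ref{th:sig_lower_bound}) delivers; your detour through the injective norm and a ``path-adapted dual functional'' for $\mathscr{L}^n_{p,X}$ is unnecessary. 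Second, the existence of $k_n$ in the window is not supplied by the non-vanishing result of \citet{boedihardjo2019non} (which only justifies the standing assumption $\|\mathbf{X}^n\|\neq 0$); it comes from averaging the normalized coefficients $b_k=k!\,\|\mathbf{X}^k\|_\pi$ against a Poisson$(\alpha)$ weight in the development estimate and applying Chebyshev, which is where the threshold $n_1=\lfloor 4e^{2(M-1)K(\omega)}\rfloor$ actually originates.
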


Since the intervals $[n-n^{\nicefrac{3}{4}}, n + n^{\nicefrac{3}{4}}]$ shift towards infinity with a width that grows slower than $n$, we have the following corollary.

\begin{corollary}
	For any $i \in \{1, \dots, M\}$, there exists a strictly increasing sequence of integers $(k_n)$ and a sequence of positive integers $(p_n)$ such that $y^\ast_{p_n,k_n}$ converges to $\beta_i$ as $n$ increases.
\end{corollary}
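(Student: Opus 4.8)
The plan is to read off the Corollary directly from Theorem~\ref{th:insertion_cvg}: the right-hand side of the bound there tends to $0$ as the truncation index goes to infinity, so it suffices to produce a strictly increasing sequence of admissible indices along which Theorem~\ref{th:insertion_cvg} applies. Fix $i \in \{1, \dots, M\}$. For an integer $m \geqslant 1$, write
\begin{equation*}
	g(m) = 4\ell e^{(M-1)K(\omega)}\Big(\frac{1}{\sqrt{m+1}}\sqrt{\frac{1-(t_i-t_{i-1})}{t_i-t_{i-1}}} + 4\exp\Big(-\frac{m}{16}(t_i-t_{i-1})^2\Big)\Big),
\end{equation*}
so that $g$ is a (finite, positive) function of $m$ with $g(m) \to 0$ as $m \to \infty$, since $t_i - t_{i-1} \in \,]0,1]$ is fixed and the two terms in the bracket both vanish.

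Next I would set up the index bookkeeping. Let $n_0 = \max(n_1, \nicefrac{2}{(t_i - t_{i-1})})$ and, for each integer $n > n_0$, put $p(n) = \lfloor \nicefrac{(3t_i+t_{i-1})(n+1)}{4}\rfloor$; Theorem~\ref{th:insertion_cvg} then yields an integer $k_n \in [n - n^{\nicefrac34}, n + n^{\nicefrac34}]$ with $\| y^\ast_{p(n), k_n} - \beta_i\| \leqslant g(k_n)$. Two elementary checks are needed so that $y^\ast_{p(n),k_n}$ is a legitimate object. First, $p(n) \geqslant 1$ for $n$ large, because $3t_i + t_{i-1} \geqslant 3t_1 > 0$. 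Second, writing $c_i = \nicefrac{(3t_i+t_{i-1})}{4}$, we have $c_i < 1$ since $t_{i-1} < t_i \leqslant 1$; hence $p(n) \leqslant c_i(n+1) \leqslant n - n^{\nicefrac34} + 1 \leqslant k_n + 1$ once $(1-c_i)(n+1) \geqslant n^{\nicefrac34}$, which holds for all large $n$. Thus for all $n$ beyond some threshold $n_0'$, $p(n) \in \{1, \dots, k_n+1\}$ and the displayed bound holds.

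Finally I would extract a monotone subsequence. Since $k_n \geqslant n - n^{\nicefrac34} = n(1 - n^{-\nicefrac14}) \to +\infty$, the sequence $(k_n)_{n > n_0'}$ is unbounded, so one can choose greedily indices $n_0' < m_1 < m_2 < \cdots$ with $k_{m_1} < k_{m_2} < \cdots$ (having picked $m_j$, pick $m_{j+1} > m_j$ with $k_{m_{j+1}} > k_{m_j}$, possible because $k_n \to \infty$). Setting $\widetilde k_j = k_{m_j}$ and $\widetilde p_j = p(m_j)$ gives a strictly increasing sequence of integers $(\widetilde k_j)$ and a sequence of positive integers $(\widetilde p_j)$ with $\| y^\ast_{\widetilde p_j, \widetilde k_j} - \beta_i\| \leqslant g(\widetilde k_j) \to 0$ as $j \to \infty$, since $\widetilde k_j \to \infty$. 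Relabeling the index $j$ as $n$ gives the statement. There is essentially no obstacle here beyond Theorem~\ref{th:insertion_cvg} itself; the only points requiring a little care are that the $k_n$ produced by the theorem need not be monotone in $n$ — handled by the trivial subsequence extraction above using $k_n \to \infty$ — and that $p(n)$ must remain a valid insertion index, which is exactly where the bounds $0 < 3t_i + t_{i-1} < 4$ and $t_i - t_{i-1} > 0$ are used.
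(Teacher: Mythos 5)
Your argument is correct and takes essentially the same route as the paper, which offers only the one-line justification that the intervals $[n-n^{\nicefrac{3}{4}}, n+n^{\nicefrac{3}{4}}]$ shift to infinity with width growing slower than $n$ — precisely the fact you use to extract a strictly increasing subsequence of the $k_n$. Your additional checks that $p(n)$ stays a valid insertion index for order $k_n$ are careful details the paper leaves implicit.
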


Let us comment on Theorem \ref{th:insertion_cvg}. %First, the parameter $\omega$ encapsulates the regularity of the path, in the sense that the smaller $\omega$, the steeper the angle between two linear segments of the path, and the more difficult the reconstruction. Indeed when $\omega$ goes to zero, both $K(\omega)$ and $n_1$ go to infinity. 
To simplify the interpretation, consider the case of a regular partition, where $t_i = \nicefrac{i}{M}$. Then, Theorem \ref{th:insertion_cvg} becomes: for any $n > \max(n_1, 2M)$ and $p = \nicefrac{(2M-1)(n+1)}{4M}$,
\begin{equation}
\label{eq:bound_thm_regular_case}
		\| y^{\ast}_{p,k_n}- \beta_i \| \leqslant 4 \ell e^{(M-1)K(\omega)}\Big(\sqrt{\frac{M-1}{k_n+1}} +  4\exp\Big(- \frac{k_n}{16M^2}\Big) \Big).
\end{equation}
As the number $M$ of piecewise linear pieces increases (or, equivalently, the length of the interval $t_i - t_{i-1}$ decreases), it becomes harder to recover $\beta_i$---both the right-hand-side of \eqref{eq:bound_thm_regular_case} and the truncation order $n$ necessary for the result to hold increase.

Theorem \ref{th:insertion_cvg} relies heavily on the following two propositions, the proofs of which are given in the Supplementary Material.
\begin{proposition}
\label{thm:insertion_X'_theta_cvg}
Let $X$ be a piecewise linear path as defined by \eqref{eq:def_X_piecewise_linear}, following assumptions $(A_1)$ and $(A_2)$. For any $ i \in \{1, \dots, M \}$, $n \geqslant \nicefrac{2}{(t_i - t_{i-1})}$, and $p=\lfloor \nicefrac{(3t_i + t_{i-1})(n+1)}{4} \rfloor$, one has
\begin{align*}
	&\| \mathscr{L}^n_{p,X}(\beta_i) - (n+1)\mathbf{X}^{n+1}\| \\
 & \quad \leqslant \frac{\ell^{n+1}}{n!} \Big(\frac{1}{\sqrt{n+1}} \sqrt{\frac{1 - (t_i - t_{i-1})}{t_i - t_{i-1}}} +  4\exp\Big(- \frac{n}{16}(t_i-t_{i-1})^2\Big) \Big).
\end{align*}
\end{proposition}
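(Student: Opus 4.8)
The plan is to split $X$ at the two endpoints $t_{i-1},t_i$ of the segment whose slope $\beta_i$ we want, exploiting that the middle piece is linear (so its order-$b$ signature is $\tfrac{L^b}{b!}\beta_i^{\otimes b}$ by \eqref{eq:signature_linear_path}, with $L:=t_i-t_{i-1}$). Note that $(A_2)$ forces $\lVert X'_t\rVert=\ell$ for almost every $t$, whence $\lVert\mathbf X^m_{[u,v]}\rVert\le(\ell(v-u))^m/m!$ by Proposition~\ref{prop:up_bound_norm_sig}. Iterating Chen's identity (Proposition~\ref{prop:basic_sig_prop}$(i)$), for every $m$,
\[
\mathbf X^m=\sum_{a+b+c=m}P_a\otimes\tfrac{L^b}{b!}\beta_i^{\otimes b}\otimes Q_c,\qquad P_a:=\mathbf X^a_{[0,t_{i-1}]},\ Q_c:=\mathbf X^c_{[t_i,1]},
\]
with $\lVert P_a\rVert\le(\ell t_{i-1})^a/a!$ and $\lVert Q_c\rVert\le(\ell(1-t_i))^c/c!$. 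I would apply this to $\mathbf X^n$ inside $\mathscr L^n_{p,X}(\beta_i)$ and track where the inserted tensor slot $p$ falls in each summand $P_a\otimes\beta_i^{\otimes b}\otimes Q_c$ — inside $P_a$ if $p\le a$, inside the $\beta_i$-block if $a+1\le p\le a+b+1$, inside $Q_c$ if $p\ge a+b+2$ — so that $\mathscr L^n_{p,X}(\beta_i)=\mathscr M+\mathscr E$ with the ``middle'' part
\[
\mathscr M:=\sum_{\substack{a+b+c=n\\ a+1\le p\le a+b+1}}\tfrac{L^b}{b!}\,P_a\otimes\beta_i^{\otimes(b+1)}\otimes Q_c,
\]
and $\mathscr E$ collecting the terms $(P_a\text{ with }\beta_i\text{ inserted at slot }p)\otimes\beta_i^{\otimes b}\otimes Q_c$ for $p\le a$ and $P_a\otimes\beta_i^{\otimes b}\otimes(Q_c\text{ with }\beta_i\text{ inserted at slot }p-a-b)$ for $p\ge a+b+2$.

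For $\mathscr E$: each inserted factor has norm at most $\lVert P_a\rVert\ell$, resp. $\lVert Q_c\rVert\ell$ (Proposition~\ref{insertiontest} and linearity), so every summand has norm $\le\ell^{n+1}t_{i-1}^a(1-t_i)^c/(a!c!)$; summing and using the trinomial identity $(t_{i-1}+L+(1-t_i))^n=1$ turns $\lVert\mathscr E\rVert$ into $\tfrac{\ell^{n+1}}{n!}\bigl(\Pr[\mathrm{Bin}(n,t_{i-1})\ge p]+\Pr[\mathrm{Bin}(n,1-t_i)\ge n-p+2]\bigr)$. Since $p=\lfloor(t_{i-1}+\tfrac{3}{4}L)(n+1)\rfloor$, both thresholds exceed the respective binomial means $nt_{i-1}$ and $n(1-t_i)$ by $\Theta(nL)$; using $n>2/L$ to swallow the $\lfloor\cdot\rfloor$ error, Hoeffding's inequality bounds each probability by $e^{-nL^2/16}$, so $\lVert\mathscr E\rVert\le\tfrac{2\ell^{n+1}}{n!}e^{-nL^2/16}$.

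For $\mathscr M$: reindex $b\mapsto b-1$ so that $\mathscr M$ and $(n+1)\mathbf X^{n+1}=\sum_{a+b+c=n+1}(n+1)\tfrac{L^b}{b!}P_a\otimes\beta_i^{\otimes b}\otimes Q_c$ run over the same index set; subtracting termwise and using the bounds on $\lVert P_a\rVert,\lVert Q_c\rVert$ yields, with $(A,B,C)\sim\mathrm{Mult}(n+1;t_{i-1},L,1-t_i)$ (so $B\sim\mathrm{Bin}(n+1,L)$ is the number of subdivision points landing in the $i$th segment),
\[
\lVert\mathscr M-(n+1)\mathbf X^{n+1}\rVert\le\frac{\ell^{n+1}}{(n+1)!}\,\mathbb{E}\Bigl|\tfrac{B}{L}\mathbf{1}_{G}-(n+1)\Bigr|,\qquad G:=\{B\ge1,\ A\le p-1\le A+B-1\}.
\]
On $G$ one has $|\tfrac BL-(n+1)|=\tfrac1L|B-(n+1)L|$, and $\mathbb{E}|B-(n+1)L|\le\sqrt{\mathrm{Var}(B)}=\sqrt{(n+1)L(1-L)}$ by Cauchy--Schwarz, contributing $\tfrac1{(n+1)!}\cdot\tfrac1L\sqrt{(n+1)L(1-L)}=\tfrac1{n!}\cdot\tfrac1{\sqrt{n+1}}\sqrt{\tfrac{1-L}{L}}$ — exactly the first term of the claim, with constant $1$. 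On $G^c=\{A\ge p\}\cup\{A+B\le p-1\}$ (which contains $\{B=0\}$), $|\tfrac BL\mathbf{1}_{G}-(n+1)|=n+1$; as $A\sim\mathrm{Bin}(n+1,t_{i-1})$ and $A+B\sim\mathrm{Bin}(n+1,t_i)$ are a distance $\Theta(nL)$ from $p$ and $p-1$, Hoeffding gives $\Pr[G^c]\le2e^{-nL^2/16}$, and the prefactor cancels: $\tfrac{n+1}{(n+1)!}\Pr[G^c]=\tfrac1{n!}\Pr[G^c]$. Combining the two estimates via $\lVert\mathscr L^n_{p,X}(\beta_i)-(n+1)\mathbf X^{n+1}\rVert\le\lVert\mathscr M-(n+1)\mathbf X^{n+1}\rVert+\lVert\mathscr E\rVert$ would give the stated bound.

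The conceptually clean heart of the argument is the reduction, in the $\mathscr M$-step, of the whole leading discrepancy to $\tfrac1L\mathbb{E}|B-(n+1)L|$ for $B\sim\mathrm{Bin}(n+1,L)$, controlled by its standard deviation — this is what produces $\tfrac1{\sqrt{n+1}}\sqrt{(1-L)/L}$ with the sharp constant. The main obstacle is the bookkeeping: correctly classifying the summands of $\mathscr L^n_{p,X}(\beta_i)$ under the Chen decomposition according to the position of the inserted slot $p$, and arranging the Hoeffding estimates so that every exponentially small contribution (from $\mathscr E$ and from $G^c$) collapses into a single $4e^{-nL^2/16}$. That last point is exactly where the hypothesis $n>2/(t_i-t_{i-1})$ enters: it makes the $\lfloor\cdot\rfloor$ error in $p$ negligible against the $\Theta(nL)$ Hoeffding margins.
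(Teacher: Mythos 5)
Your proposal is correct and follows essentially the same route as the paper's proof: a Chen-type decomposition over $[0,t_{i-1}]$, $[t_{i-1},t_i]$, $[t_i,1]$ classified by where the insertion slot $p$ falls, Hoeffding tail bounds with margin $\nicefrac{(t_i-t_{i-1})}{4}$ for the outer contributions, and a Cauchy--Schwarz bound on $\mathbb{E}|B-(n+1)L|$ for the middle term. The only difference is bookkeeping (you pair all of $(n+1)\mathbf{X}^{n+1}$ with the middle block via the indicator $\mathbf{1}_G$, whereas the paper pairs it range by range into $A_1+A_2+A_3$), and the four exponential tails tally identically.
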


\begin{proposition}
\label{th:sig_lower_bound}
Let $X$ be a piecewise linear path as defined by \eqref{eq:def_X_piecewise_linear}, following assumptions $(A_1)$ and $(A_2)$. For any $n > n_1$, there exists an integer $k_n \in [n-n^{\nicefrac{3}{4}}, n + n^{\nicefrac{3}{4}}]$ such that
\begin{equation*}
	\| \mathbf{X}^{k_n}\|_\pi \geqslant \ell^{k_n}\frac{e^{-(M-1)K(\omega)}}{2k_n!},
\end{equation*}
where $\| \cdot \|_\pi$ is the projective norm.
\end{proposition}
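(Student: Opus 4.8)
The plan is to bound $\|\mathbf{X}^n\|_\pi$ from below by pushing the signature through a carefully chosen finite-dimensional representation and reading off a \emph{development} of the path. For any linear map $\rho\colon\R^d\to M_N(\R)$ and any $a\in(\R^d)^{\otimes n}$ written as in \eqref{eq:representation_tensor_element}, applying $\rho$ factor-wise and multiplying in $M_N(\R)$ gives $\rho^{\otimes n}(a):=\sum_i\rho(a_{1,i})\cdots\rho(a_{n,i})$, so that $\|\rho^{\otimes n}(a)\|_{\mathrm{op}}\leqslant\|\rho\|_{\mathrm{op}}^n\sum_i\|a_{1,i}\|\cdots\|a_{n,i}\|$ for every such representation; since the left-hand side does not depend on the representation, taking the infimum gives $\|\mathbf{X}^n\|_\pi\geqslant\|\rho^{\otimes n}(\mathbf{X}^n)\|_{\mathrm{op}}/\|\rho\|_{\mathrm{op}}^n$, where $\|\cdot\|_{\mathrm{op}}$ is the operator norm on $M_N(\R)$. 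By Chen's identity (Proposition~\ref{prop:basic_sig_prop}) together with \eqref{eq:signature_linear_path}, the series $\Phi_\lambda:=\sum_{k\geqslant0}\lambda^k\rho^{\otimes k}(\mathbf{X}^k)$ of the piecewise linear path factorizes, for each $\lambda$, as the ordered product $\Phi_1\cdots\Phi_M$ with $\Phi_i=\exp\big(\lambda(t_i-t_{i-1})\rho(\beta_i)\big)$, and $\Phi_\lambda$ is an entire function of $\lambda$ whose coefficient of $\lambda^k$ has operator norm at most $\|\rho\|_{\mathrm{op}}^k\ell^k/k!$, by Proposition~\ref{prop:up_bound_norm_sig}.

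I take for $\rho$ the standard isometric embedding of $\R^d$ into the Lie algebra of $SO(d,1)$, sending a unit vector to the corresponding boost, so $\|\rho\|_{\mathrm{op}}=1$ and, for real $\lambda>0$, $\Phi_\lambda\in SO(d,1)$ is the development into hyperbolic space $\mathbb{H}^d$ of the path rescaled to length $\lambda\ell$. Fixing $\lambda=n/\ell$, the key geometric input is that this development moves the base point to hyperbolic distance at least $n-(M-1)K(\omega)$ from the origin: along a geodesic segment of length $s$ the distance to the origin increases by roughly $s$ when the motion points essentially outward, while a turn through a non-degenerate angle $\omega_i$ can only be absorbed at the cost of a bounded factor, dominated by $e^{K(\omega_i)}\leqslant e^{K(\omega)}$; one then chains the $M$ segments and $M-1$ vertices. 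Since an element of $SO(d,1)$ that displaces the origin by a distance $D$ has operator norm $e^{D}$, this yields $\|\Phi_{n/\ell}\|_{\mathrm{op}}\geqslant e^{\,n-(M-1)K(\omega)}$.

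Finally I convert this into a bound on a single coefficient $\mathbf{X}^{k_n}$ with $k_n$ near $n$. Suppose for contradiction that $\|\mathbf{X}^k\|_\pi<\tfrac12e^{-(M-1)K(\omega)}\ell^k/k!$ for every integer $k$ in the window $W=[\,n-n^{3/4},\,n+n^{3/4}\,]$. Using $\|\rho^{\otimes k}(\mathbf{X}^k)\|_{\mathrm{op}}\leqslant\|\mathbf{X}^k\|_\pi\leqslant\ell^k/k!$ for all $k$ and the assumed sharper bound on $W$,
\begin{equation*}
	\|\Phi_{n/\ell}\|_{\mathrm{op}}\ \leqslant\ \sum_{k\geqslant0}\Big(\tfrac{n}{\ell}\Big)^k\|\mathbf{X}^k\|_\pi\ \leqslant\ \sum_{k\notin W}\frac{n^k}{k!}+\tfrac12e^{-(M-1)K(\omega)}\sum_{k\in W}\frac{n^k}{k!}\ \leqslant\ e^n\Big(2e^{-c\sqrt n}+\tfrac12e^{-(M-1)K(\omega)}\Big),
\end{equation*}
where $\sum_{k\notin W}n^k/k!\leqslant 2e^{-c\sqrt n}e^n$ by a Chernoff bound for the tail of a Poisson$(n)$ variable, $c$ being a universal constant. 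For $n>n_1$, the calibration $n_1=\lfloor4e^{2(M-1)K(\omega)}\rfloor$ makes $2e^{-c\sqrt n}$ small enough that the right-hand side is strictly below $e^{\,n-(M-1)K(\omega)}$, contradicting the geometric lower bound; hence some $k_n\in W$ satisfies $\|\mathbf{X}^{k_n}\|_\pi\geqslant\tfrac12e^{-(M-1)K(\omega)}\ell^{k_n}/k_n!$, which is the claim. The main obstacle is the geometric estimate of the second paragraph: quantifying the loss of radial progress at a turn and showing it stays uniformly bounded as long as the angle is bounded away from the degenerate values — this is where the explicit constant $K(\omega)$, and hence the threshold $n_1$, come from, and it is essentially a hyperbolic-development inequality in the spirit of \citet{lyons2017hyperbolic}. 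The averaging width $n^{3/4}$ is needed because pairing with $\Phi_{n/\ell}$ only controls a Poisson-weighted mean of the $\|\mathbf{X}^k\|_\pi$ that concentrates on $W$, not any single coefficient (any width between $\sqrt n$ and $n$ would do, the upper constraint so that the resulting sequence $(k_n)$ can be made strictly increasing, as needed for the Corollary).
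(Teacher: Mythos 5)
Your proposal follows essentially the same route as the paper: the isometric boost embedding into the Lie algebra of $SO(d,1)$, the bound $\|\rho^{\otimes k}(\mathbf{X}^k)\|_{\mathrm{op}}\leqslant\|\mathbf{X}^k\|_\pi$ coming from the infimum defining the projective norm, the Hambly--Lyons estimate that the hyperbolic development of the rescaled path displaces the base point by at least $n-(M-1)K(\omega)$ while the operator norm of the development dominates $e^{d(y_0,\Gamma_1 y_0)}$, and finally a Poisson$(n)$ concentration argument to extract one large coefficient $k_n$ in the window of width $n^{\nicefrac{3}{4}}$. The only real divergence is the last step: the paper bounds $\proba(Z\notin J_n)$ by Chebyshev, giving exactly $\nicefrac{1}{\sqrt n}$, and $n_1=\lfloor 4e^{2(M-1)K(\omega)}\rfloor$ is calibrated precisely so that $\nicefrac{1}{\sqrt n}<\tfrac12 e^{-(M-1)K(\omega)}$ for $n>n_1$; your Chernoff form $2e^{-c\sqrt n}$ with an unspecified universal $c$ does not obviously clear this bar at $n=n_1+1$ (with the standard Poisson tail constant it is in fact borderline for small $(M-1)K(\omega)$), so you should replace it by the Chebyshev bound, after which your contradiction argument closes. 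You also correctly identify the geometric estimate as the main unproven input; note it requires rescaling so that each segment has hyperbolic length at least $K(\omega)$, which is the role of the factor $\alpha$ in the paper.
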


\begin{proof}[Theorem \ref{th:insertion_cvg}]
	The identity of Proposition \ref{insertiontest} is true for every admissible norm so it is especially true for the projective norm $\| \cdot \|_\pi$. Thus, for any $n \geqslant 1$,
	\begin{align*}
		\|\mathscr{L}^n_{p,X}(y^{\ast}_{p,n}) - \mathscr{L}^n_{p,X}(\beta_i) \|_\pi = \| \mathbf{X}^n\|_\pi \, \| y^{\ast}_{p,n}-\beta_i \|.
	\end{align*}
	Proposition \ref{th:sig_lower_bound} guarantees the existence of an integer $k_n \in [n-n^{\nicefrac{3}{4}},n+n^{\nicefrac{3}{4}}]$ such that
	\begin{equation*}
	\| \mathbf{X}^{k_n}\|_\pi \geqslant \ell^{k_n}\frac{e^{-(M-1)K(\omega)}}{2k_n!}.
\end{equation*}
	Moreover, Proposition \ref{thm:insertion_X'_theta_cvg} (which is valid for any tensor norm) yields 
	\begin{align*}
	     & \|\mathscr{L}^{k_n}_{p,X}(\beta_i)-(k_n+1)\ \mathbf{X}^{k_n+1} \|_\pi \\
      & \quad \leqslant \frac{\ell^{k_n+1}}{k_n!} \Big(\frac{1}{\sqrt{k_n+1}} \sqrt{\frac{1 - (t_i - t_{i-1})}{t_i - t_{i-1}}} +  4\exp\Big(- \frac{k_n}{16}(t_i-t_{i-1})^2\Big) \Big).
	\end{align*}
	By the very definition of $y^\ast_{p,n}$, we have
	\[ \|\mathscr{L}^{k_n}_{p,X}(y^\ast_{p,k_n})-(k_n+1)\ \mathbf{X}^{k_n+1} \|_\pi \leqslant  \|\mathscr{L}^{k_n}_{p,X}(\beta_i)-(k_n+1)\ \mathbf{X}^{k_n+1} \|_\pi.\]
%	(Note that since we are in a finite dimensional space all norms are equivalent so that we can take any norm in the minimization problem \eqref{eq:minproblem} and the previous inequality still holds.) 
Combining these inequalities, we obtain
	\begin{align*}
	 \| y^{\ast}_{p,k_n}- \beta_i \| & = \frac{1}{\| \mathbf{X}^{k_n}\|_\pi} \|\mathscr{L}^{k_n}_{p,X}(y^{\ast}_{p,k_n}) - \mathscr{L}^{k_n}_{p,X}(\beta_i) \|_\pi \\
	 & \leqslant  \frac{1}{\| \mathbf{X}^{k_n}\|_\pi} \big( \|\mathscr{L}^{k_n}_{p,X}(y^{\ast}_{p,k_n}) - (k_n+1)\ \mathbf{X}^{k_n+1} \|_\pi \\
  & \qquad + \| (k_n+1)\ \mathbf{X}^{k_n+1} - \mathscr{L}^{k_n}_{p,X}(\beta_i) \|_\pi \big) \\
	 & \leqslant  \frac{2}{\| \mathbf{X}^{k_n}\|_\pi} \|  \mathscr{L}^{k_n}_{p,X}(\beta_i) - (k_n+1)\ \mathbf{X}^{k_n+1}\|_\pi \\
	 & \leqslant 4 \ell e^{(M-1)K(\omega)}\Big(\frac{1}{\sqrt{k_n+1}} \sqrt{\frac{1 - (t_i - t_{i-1})}{t_i - t_{i-1}}} \\
  & \qquad \qquad \qquad \quad +  4\exp\Big(- \frac{k_n}{16}(t_i-t_{i-1})^2\Big) \Big).
\end{align*}
{\qed}
\end{proof}

\begin{remark}
    Note that Theorem \ref{th:insertion_cvg} holds for the projective norm. An extension of this theorem to other admissible norms is possible, provided a lower bound similar to Proposition \ref{th:sig_lower_bound} is available. However, the existence of such a bound is still a conjecture \citep[see][]{chang2018super}. On the other hand, computing the projective norm is computationally expensive, so in practice it is much better to solve \eqref{eq:min_problem_insertion_algo} with the Euclidean tensor norm, as we will do in the next subsection.
\end{remark}

\subsection{Algorithm}

The last step necessary to have a complete algorithm is to solve \eqref{eq:min_problem_insertion_algo} with the choice of the Euclidean tensor norm. Let $A_p \in \R^{d^{n+1} \times d}$ be the matrix representing the linear map $\mathscr{L}^n_{p,X}(\cdot)$ in the canonical basis of $\R^{d}$, where $(\R^d)^{\otimes n+1}$ is identified with $\R^{d^{n+1}}$. Thus, for any $y\in \R^d$, $\mathscr{L}^n_{p,X}(y)=A_p y$. %
The following lemma, which is essential for finding an explicit solution to \eqref{eq:min_problem_insertion_algo}, gives information about the form of the matrix $A_p^\top A_p$.
\begin{lemma}\label{lemma:singular_values_insertion_operator}
For any $p \in \{1, \dots, n+1 \}$, one has $A_p^\top A_p=\|\mathbf{X}^n \|^2 I_{d}$, where $I_{d}$ is the identity matrix.
%all the singular values of $A_p$ are identical and equal to $\norm{\mathbf{X}^n}$.
\end{lemma}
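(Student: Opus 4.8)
The plan is to leverage the fact that the computation carried out in the proof of Proposition~\ref{insertiontest} actually establishes an \emph{equality}, not merely a Lipschitz bound: for any admissible tensor norm and any $y,z\in\R^d$,
\[
\|\mathscr{L}^n_{p,X}(y)-\mathscr{L}^n_{p,X}(z)\|=\|\mathbf{X}^n\|\,\|y-z\|.
\]
The Euclidean tensor norm is admissible, so this identity holds for $\|\cdot\|_2$; and since $\mathscr{L}^n_{p,X}$ is linear with matrix $A_p$, taking $z=0$ yields
\[
\|A_p y\|_2=\|\mathbf{X}^n\|\,\|y\|_2\qquad\text{for every }y\in\R^d,
\]
i.e.\ $A_p$ is an isometric embedding of $(\R^d,\|\cdot\|_2)$ into $(\R^{d^{n+1}},\|\cdot\|_2)$ up to the scalar factor $\|\mathbf{X}^n\|$.

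First I would square this and read it off the Gram matrix $A_p^\top A_p$: for all $y\in\R^d$,
\[
y^\top\!\big(A_p^\top A_p\big)\,y=\|A_p y\|_2^2=\|\mathbf{X}^n\|^2\,\|y\|_2^2=y^\top\!\big(\|\mathbf{X}^n\|^2 I_d\big)\,y,
\]
so the symmetric matrices $A_p^\top A_p$ and $\|\mathbf{X}^n\|^2 I_d$ induce the same quadratic form on $\R^d$. I would then conclude with the standard fact that a real symmetric matrix is uniquely determined by its quadratic form: setting $B=A_p^\top A_p-\|\mathbf{X}^n\|^2 I_d$, the polarization identity $2\,y^\top B z=(y+z)^\top B(y+z)-y^\top B y-z^\top B z$ forces $y^\top B z=0$ for all $y,z\in\R^d$, hence $B=0$ and $A_p^\top A_p=\|\mathbf{X}^n\|^2 I_d$.

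There is essentially no obstacle here; the one point deserving attention is the observation that the chain of equalities inside the proof of Proposition~\ref{insertiontest} already gives norm preservation (not just continuity), and that it applies verbatim to the Euclidean tensor norm. Everything after that is elementary. As a by-product, since $\|\mathbf{X}^n\|\neq 0$ by the standing assumption, $A_p^\top A_p$ is invertible, which is precisely what will make the normal equations for \eqref{eq:min_problem_insertion_algo} solvable in closed form in the sequel.
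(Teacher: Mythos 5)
Your proof is correct, but it takes a different route from the paper. You derive the identity $\|A_p y\|=\|\mathbf{X}^n\|\,\|y\|$ from the equality chain in the proof of Proposition~\ref{insertiontest} (specialized to the Euclidean tensor norm, which the paper asserts is admissible), and then recover $A_p^\top A_p=\|\mathbf{X}^n\|^2 I_d$ by polarization; your reading that Proposition~\ref{insertiontest} proves an exact isometry-up-to-scalar, not merely a Lipschitz bound, is exactly how the paper itself later uses that proposition in the proof of Theorem~\ref{th:insertion_cvg}, so there is no gap there. The paper instead computes $A_p^\top A_p$ entrywise: each row of $A_p$, indexed by $(i_1,\dots,i_{n+1})$, has a single nonzero entry $S^{(i_1,\dots,i_{p-1},i_{p+1},\dots,i_{n+1})}$ sitting in column $i_p$, which immediately makes $A_p^\top A_p$ diagonal with diagonal entries $\sum (S^I)^2=\|\mathbf{X}^n\|^2$. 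Your argument is shorter and more conceptual, and it shows that the lemma is really just a restatement of admissibility of the Euclidean norm; the paper's computation has the side benefit of exhibiting the explicit sparsity pattern of $A_p$, which is relevant to the implementation (it explains why forming $A_p^\top\mathbf{X}^{n+1}$ is cheap). Both arguments ultimately rest on the same unproved-but-elementary fact that the Euclidean tensor norm satisfies Definition~\ref{def:admissible_norm}.
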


Lemma \ref{lemma:singular_values_insertion_operator}, proved in the Supplementary Material, allows us to obtain an explicit solution to Problem \eqref{eq:min_problem_insertion_algo}.
\begin{proposition}
For any $p \in \{1, \dots, n+1 \}$, Problem \eqref{eq:min_problem_insertion_algo} with the Euclidean tensor norm has a unique solution, equal to 
\[ y^\ast_{p,n} = (n+1) \frac{A_p^\top  \mathbf{X}^{n+1}}{\| \mathbf{X}^n \|^2}. \]
\end{proposition}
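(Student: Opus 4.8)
The plan is to recognize Problem \eqref{eq:min_problem_insertion_algo} with the Euclidean tensor norm as a finite-dimensional linear least-squares problem in the variable $y \in \R^d$ and to solve it directly with the help of Lemma \ref{lemma:singular_values_insertion_operator}. Since the Euclidean norm is nonnegative, minimizing $\|\mathscr{L}^n_{p,X}(y) - (n+1)\mathbf{X}^{n+1}\| = \|A_p y - (n+1)\mathbf{X}^{n+1}\|$ over $y$ is equivalent to minimizing the quadratic function
\[ f(y) = \|A_p y - (n+1)\mathbf{X}^{n+1}\|^2 = y^\top A_p^\top A_p\, y - 2(n+1) \langle y, A_p^\top \mathbf{X}^{n+1}\rangle + (n+1)^2 \|\mathbf{X}^{n+1}\|^2. \]

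Next I would substitute the identity $A_p^\top A_p = \|\mathbf{X}^n\|^2 I_d$ provided by Lemma \ref{lemma:singular_values_insertion_operator}, which turns $f$ into
\[ f(y) = \|\mathbf{X}^n\|^2 \|y\|^2 - 2(n+1)\langle y, A_p^\top \mathbf{X}^{n+1}\rangle + (n+1)^2\|\mathbf{X}^{n+1}\|^2. \]
By the standing non-degeneracy assumption, $\|\mathbf{X}^n\| \neq 0$, so the leading coefficient $\|\mathbf{X}^n\|^2$ is strictly positive and completing the square gives
\[ f(y) = \|\mathbf{X}^n\|^2 \Big\| y - (n+1)\frac{A_p^\top \mathbf{X}^{n+1}}{\|\mathbf{X}^n\|^2} \Big\|^2 + C, \]
where $C$ does not depend on $y$. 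Hence $f$ is strictly convex and is minimized if and only if the squared term vanishes, i.e., at the unique point $y^\ast_{p,n} = (n+1)\,A_p^\top \mathbf{X}^{n+1}/\|\mathbf{X}^n\|^2$, which is the claimed formula. Equivalently, one may argue via the normal equations: $f$ is differentiable with $\nabla f(y) = 2A_p^\top A_p\, y - 2(n+1)A_p^\top \mathbf{X}^{n+1}$, so any minimizer solves $A_p^\top A_p\, y = (n+1)A_p^\top \mathbf{X}^{n+1}$, and Lemma \ref{lemma:singular_values_insertion_operator} makes $A_p^\top A_p$ a nonzero multiple of the identity, hence invertible.

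There is essentially no obstacle here once Lemma \ref{lemma:singular_values_insertion_operator} is available; the proof is a short routine computation. The only point that genuinely requires attention is the invocation of the assumption $\|\mathbf{X}^n\| \neq 0$: it is exactly this that guarantees $A_p^\top A_p$ is invertible (equivalently, that $f$ is strictly convex rather than merely convex), and therefore that the minimizer is unique rather than an affine subspace of solutions.
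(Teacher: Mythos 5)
Your proof is correct and follows essentially the same route as the paper's: identify \eqref{eq:min_problem_insertion_algo} as a linear least-squares problem, write down the normal equations, and use Lemma \ref{lemma:singular_values_insertion_operator} to reduce $A_p^\top A_p$ to $\|\mathbf{X}^n\|^2 I_d$. Your version is in fact slightly more careful than the paper's, since you explicitly point out that the standing assumption $\|\mathbf{X}^n\| \neq 0$ is what makes $A_p^\top A_p$ invertible and hence the minimizer unique.
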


\begin{proof}
Problem \eqref{eq:min_problem_insertion_algo} takes the form
\[y^\ast_{p,n} \in \argmin_{y \in \R^d} \|A_p y- (n+1)\mathbf{X}^{n+1}\|^2.\]
This is exactly the same minimization problem than in least-squares regression. Therefore,
\begin{align*}
y^\ast_{p,n} &= (n+1)(A_p^\top A_p)^{-1}A_p^\top \mathbf{X}^{n+1}.
\end{align*}
By Lemma \ref{lemma:singular_values_insertion_operator}, $A_p^\top A_p= \|\mathbf{X}^n\|^2 I_d$, and thus
\[y^\ast_{p,n} = (n+1)\frac{A_p^\top  \mathbf{X}^{n+1}}{\| \mathbf{X}^n \|^2}.\]
{\qed}
\end{proof}

\begin{algorithm}[ht]
\caption{Insertion algorithm}\label{alg:inversion}
\KwData{$d$: dimension of the underlying path; $n$: truncation order of the signature; $\mathbf{X}^n$ and $\mathbf{X}^{n+1}$: signature of order $n$; $X_0$: starting point of the path.}
\KwResult{$\widetilde{X} = (X_0, \widetilde{X}_{\nicefrac{1}{n}}, \dots,  \widetilde{X}_{1})\in \R^{(n+1) \times d }$: array of the $(n+1)$ positions in $\R^d$ of the reconstructed path.}

Extract $\mathbf{X}^n$ and $\mathbf{X}^{n-1}$, the terms of order $n$ and $n-1$ of $S^n(X)$.

$\widetilde{X}_0=X_0$

\For{$p \in \left\{1,\dots,n \right\}$}{
Compute $A_p$, the matrix representing the linear map $\mathscr{L}^{n-1}_{p,X}(\cdot)$, function of $\mathbf{X}^{n-1}$.

$y_{p,n}^*=n\frac{A_p^\top \mathbf{X}^n}{\|\mathbf{X}^{n-1}\|^2}$

$\widetilde{X}_{\nicefrac{p}{n}}=\widetilde{X}_{\nicefrac{p-1}{n}}+ \frac{y_{p,n}^\ast}{n}$

}
\end{algorithm}

The insertion algorithm is presented in Algorithm \ref{alg:inversion}. Given a signature of order $n \geqslant 2$, the finest piecewise approximation we can get is to let $p$ vary in $\{1, \dots, n\}$. For each such $p$, we solve \eqref{eq:min_problem_insertion_algo}, which gives an approximation of the slope $y^\ast_{p,n}$ on $[\nicefrac{(p-1)}{n},\nicefrac{p}{n}]$. We then integrate to obtain a piecewise-linear path $\tilde{X}$. Note that since we cannot recover the initial position of the path, it is provided by the user.

It is important to note that Theorem \ref{th:insertion_cvg} applies only to piecewise linear paths, but this is not a restriction per se, and Algorithm \ref{alg:inversion} is in fact applicable to any continuous path of bounded variation. To understand this remark, one only has to remember that in practice the path $X$ is never observed continuously, but only at discrete-time points $t_1, \hdots, t_M$, $0\leqslant t_1 < \cdots <t_M\leqslant 1$, with values $X_{t_1}, \ldots, X_{t_M}$. The signature computed by the softwares is not the signature of $X$, but the signature of the piecewise linear path obtained by linear interpolation of $X_{t_1}, \ldots, X_{t_M}$. This operation is performed sequentially using Chen's identity, as explained in the remark after Proposition \ref{prop:basic_sig_prop}, with a complexity $\mathcal{O}(M d^n)$. Thus, Algorithm \ref{alg:inversion} should rather be understood as a method to approximate any continuous path by a linear path composed of $n$ pieces, with two signatures close to each other. Note that it can be shown that the signatures of a given path and of its piecewise linear approximation converge as the discretization gets thinner \citep[see, e.g.,][Lemma 3.3]{bleistein2023learning}, which theoretically justifies working with piecewise linear approximations.

A natural question is at what order $n$ the path should be reconstructed. If $n$ is chosen too small, the quality of the approximation is of course likely to be poor, but if $n$ is too large, we run into memory problems, since the size of the signature is of the order of $d^n$. In practice, $n$ is chosen to be as large as possible, given the memory constraints. Finally, Algorithm \ref{alg:inversion} is computationally cheap, since the most expensive operation inside the loop is the matrix multiplication $A_p^\top \mathbf{X}^n$ (of the order $\mathcal{O}(d^{n+1})$). Moreover, the algorithm can be easily parallelized to allow it to take as input a batch of $N$ signatures, each of order $n$, and thus of length $\nicefrac{(d^{n+1}-1)}{(d-1)}$ for $d \geqslant 1$. In this parallel context, the data has the form of an array of size $\R^{N \times  \nicefrac{(d^{n+1}-1)}{(d-1)}}$ and the algorithm outputs a tensor of multiple paths of size $\R^{N\times  (n+1) \times d}$. We refer the reader to the code that we have included in the Signatory package \citep{signatory} for more details.

%All operations in Algorithm \ref{alg:inversion} can indeed be performed on tensors of size $N \times \cdots$ instead of matrices.
	\begin{figure}[ht]
		\centering	
		\begin{subfigure}[b]{0.33\textwidth}
			\includegraphics[width=\textwidth]{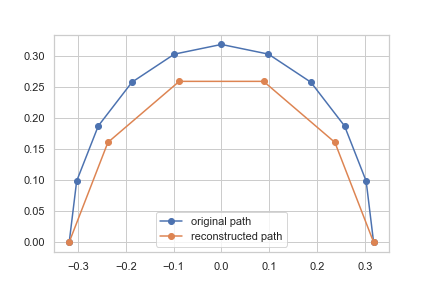}
			\caption{$n=5$}
			\label{fig:5_points_half_circle}
		\end{subfigure}%
		~
		\begin{subfigure}[b]{0.33\textwidth}
			\includegraphics[width=\textwidth]{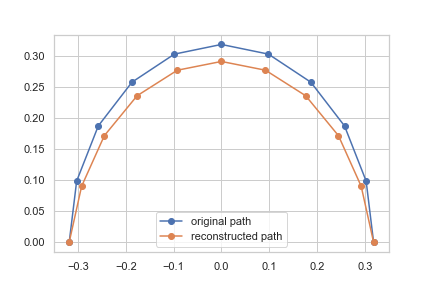}
			\caption{$n=10$}
			\label{fig:10_points_half_circle}
		\end{subfigure}%
		~
		\begin{subfigure}[b]{0.33\textwidth}
			\includegraphics[width=\textwidth]{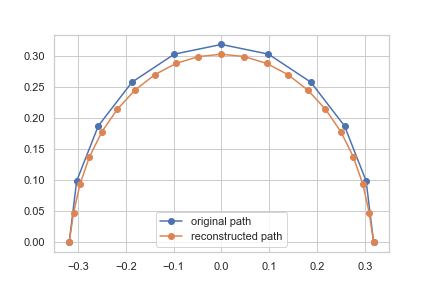}
			\caption{$n=20$}
			\label{fig:20_points_half_circle}
		\end{subfigure}%
		
		\caption{Signature inversion of a half circle.}
		\label{fig:half_circle_sig_inversion}
	\end{figure}

		\begin{figure}[ht]
		\centering	
		\begin{subfigure}[b]{0.33\textwidth}
			\includegraphics[width=\textwidth]{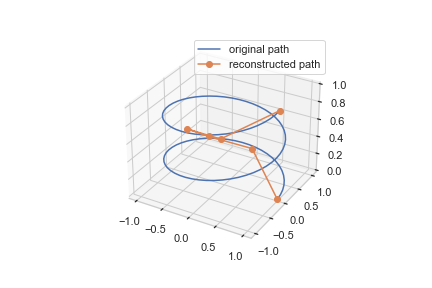}
			\caption{$n=5$}
			\label{fig:5_spiral}
		\end{subfigure}%
		~
		\begin{subfigure}[b]{0.33\textwidth}
			\includegraphics[width=\textwidth]{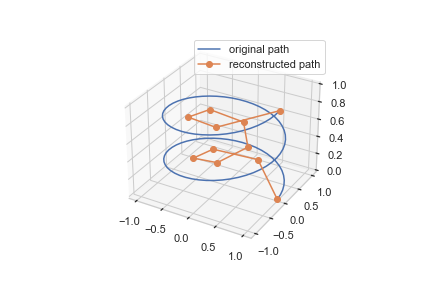}
			\caption{$n=10$}
			\label{fig:10_spiral}
		\end{subfigure}%
		~
		\begin{subfigure}[b]{0.33\textwidth}
			\includegraphics[width=\textwidth]{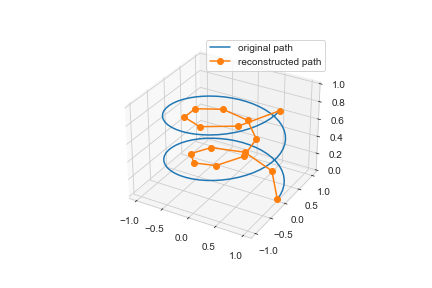}
			\caption{$n=15$}
			\label{fig:15_spiral}
		\end{subfigure}%
		
		\caption{Signature inversion of a spiral in 3d.}
		\label{fig:spiral_sig_inversion}
	\end{figure}

		\begin{figure}[ht]
		\centering
		\begin{subfigure}[b]{0.33\textwidth}
			\includegraphics[width=\textwidth]{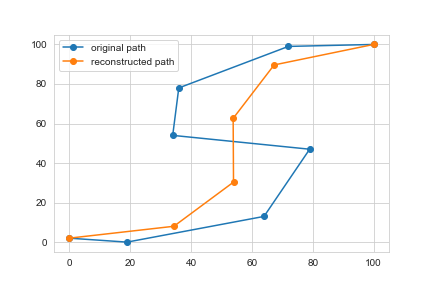}
			\caption{$n=5$}
			\label{fig:5_pendigit_5}
		\end{subfigure}%
		~
		\begin{subfigure}[b]{0.33\textwidth}
			\includegraphics[width=\textwidth]{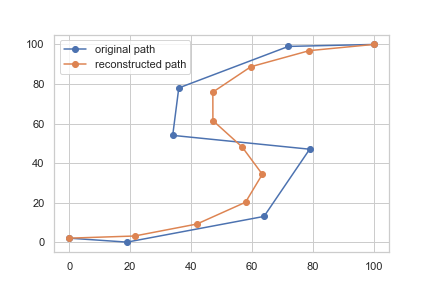}
			\caption{$n=10$}
			\label{fig:10_pendigit_5}
		\end{subfigure}%
		~
		\begin{subfigure}[b]{0.33\textwidth}
			\includegraphics[width=\textwidth]{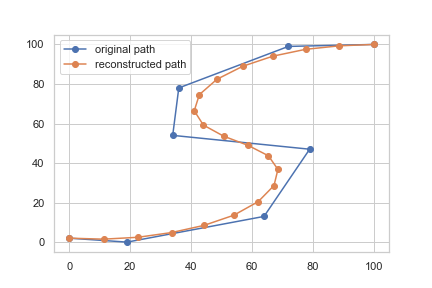}
			\caption{$n=20$}
			\label{fig:20_pendigit_5}
		\end{subfigure}%

		\caption{Signature inversion of an example of the class "5" from the Pendigits dataset.}
		\label{fig:pendigit_5_sig_inversion}
	\end{figure}

	\begin{figure}[ht]
		\centering	
		\begin{subfigure}[b]{0.33\textwidth}
			\includegraphics[width=\textwidth]{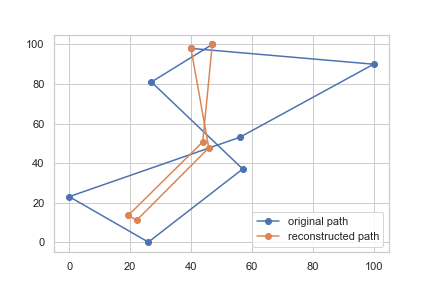}
			\caption{$n=5$}
			\label{fig:5_pendigit_8}
		\end{subfigure}%
		~
		\begin{subfigure}[b]{0.33\textwidth}
			\includegraphics[width=\textwidth]{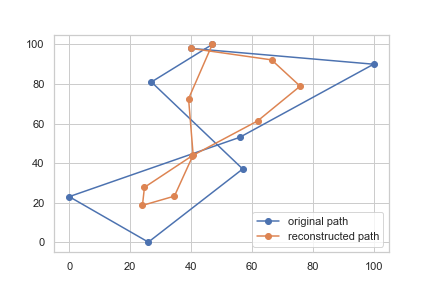}
			\caption{$n=10$}
			\label{fig:10_pendigit_8}
		\end{subfigure}%
		~
		\begin{subfigure}[b]{0.33\textwidth}
			\includegraphics[width=\textwidth]{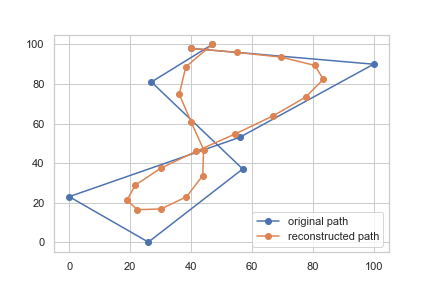}
			\caption{$n=20$}
			\label{fig:20_pendigit_8}
		\end{subfigure}%
		
		\caption{Signature inversion of an example of the class "8" from the Pendigits dataset.}
		\label{fig:pendigit_8_sig_inversion}
	\end{figure}
\section{Experimental results} \label{sec:experimental_results}

\subsubsection*{Examples of inversion}
%From a signature truncated at order $n$, we are able to reconstruct $n$ positions of the path with basic linear algebra tools. 
We show in Figures \ref{fig:half_circle_sig_inversion}, \ref{fig:spiral_sig_inversion}, \ref{fig:pendigit_5_sig_inversion}, and \ref{fig:pendigit_8_sig_inversion} some examples of reconstruction for paths in dimension 2 and 3. Figures \ref{fig:half_circle_sig_inversion} and \ref{fig:spiral_sig_inversion} are simulations of a half circle and a spiral, while Figures \ref{fig:pendigit_5_sig_inversion} and \ref{fig:pendigit_8_sig_inversion} are two examples of the Pendigits dataset from the UCI Machine Learning Repository \citep{dua2019}. In these four figures, we have knowledge of the ``true'' path, in blue, from which we compute its truncated signature with three different values of $n$. We then invert this signature using Algorithm \ref{alg:inversion} and plot the reconstructed path in orange. This way we can compare the quality of the reconstructed path with the original one. We observe that in all these experiments, the reconstructed paths are smooth and close to the original ones for $n=20$.

	\begin{figure}[ht]
		\centering	
		\begin{subfigure}[b]{0.33\textwidth}
			\includegraphics[width=\textwidth]{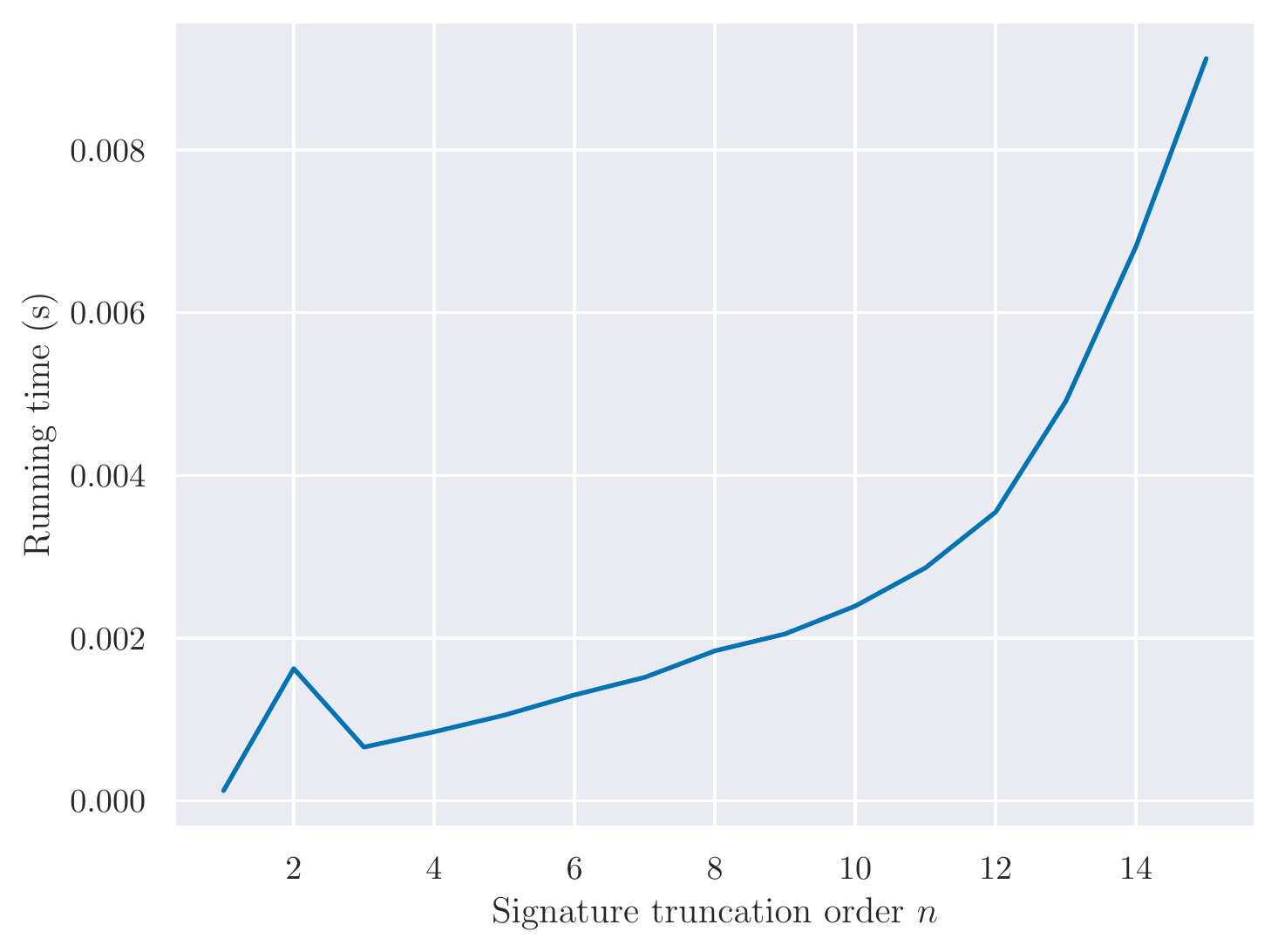}
			\caption{Running time when $n$ varies.}
			\label{fig:running_time_depth}
		\end{subfigure}%
		~
		\begin{subfigure}[b]{0.33\textwidth}
			\includegraphics[width=\textwidth]{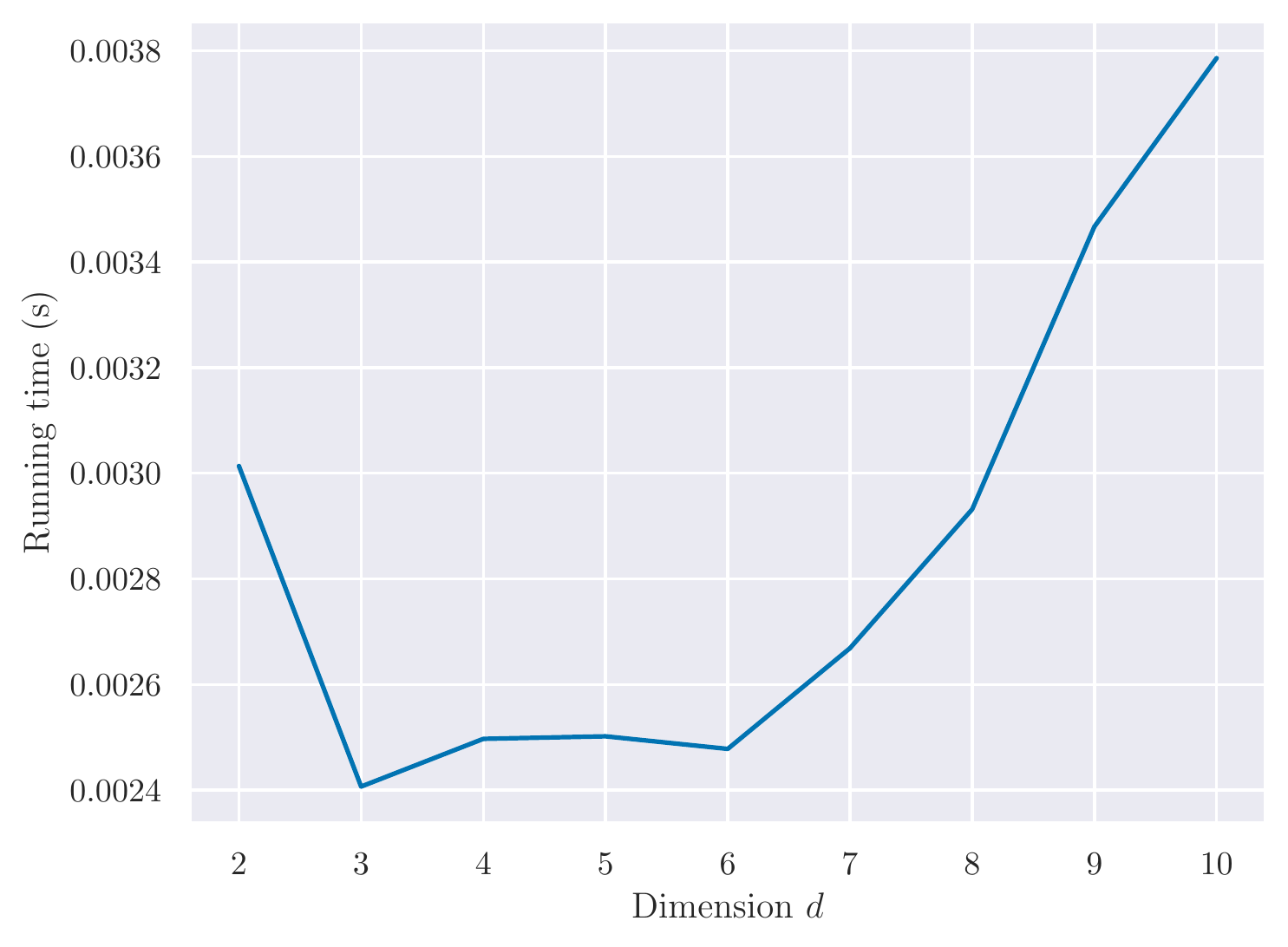}
			\caption{Running time when $d$ varies.}
			\label{fig:running_time_channels}
		\end{subfigure}%
		~
		\begin{subfigure}[b]{0.33\textwidth}
			\includegraphics[width=\textwidth]{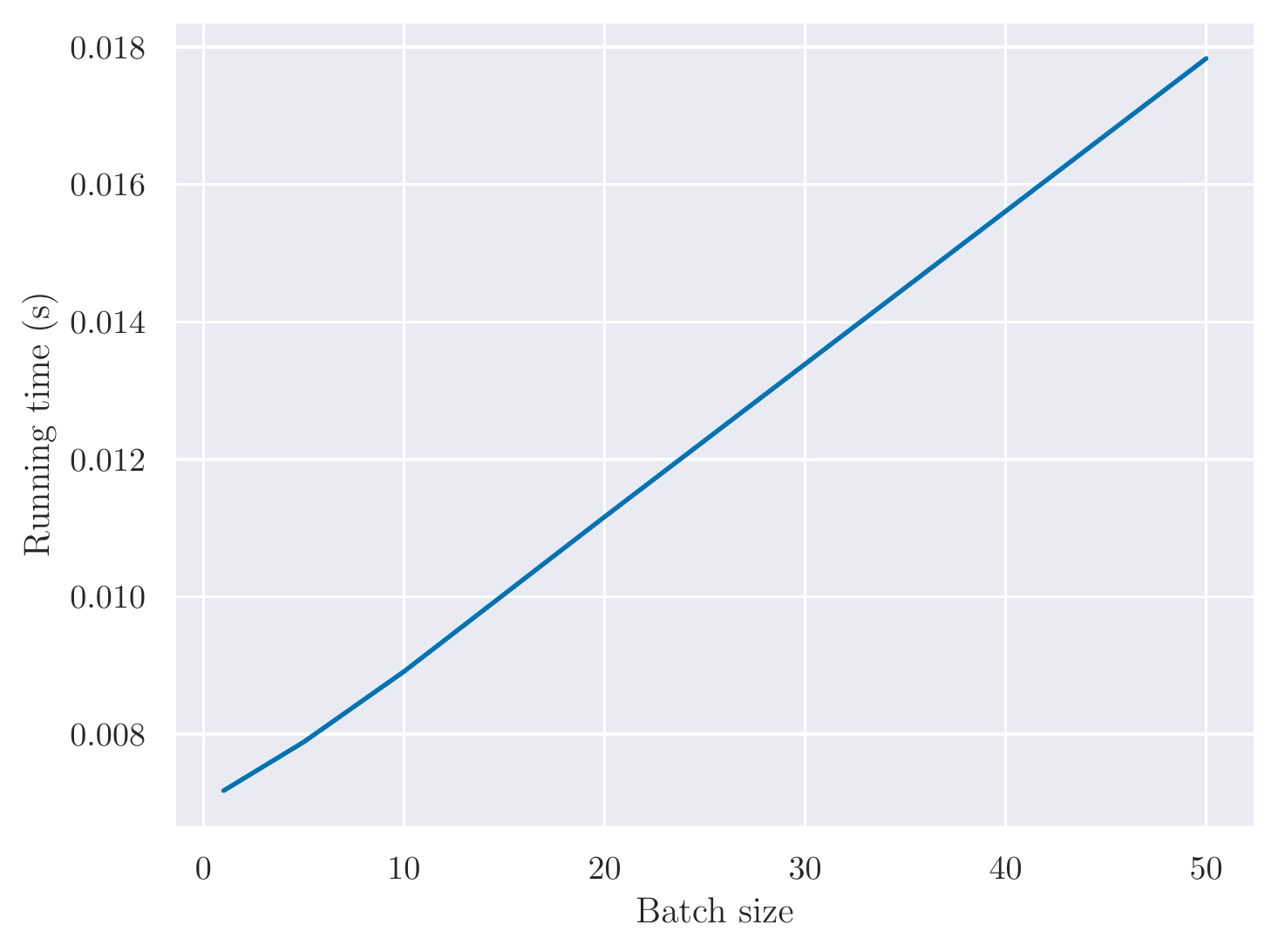}
			\caption{Running time when $N$ varies.}
			\label{fig:running_time_batch}
		\end{subfigure}%
		
		\caption{Running time in seconds to invert the signature of several paths, when we let some hyperparameters vary: the truncation order of the signature $n$, the dimension of the path $d$, and the number of signatures that are inverted.}
		\label{fig:running_times}
	\end{figure}

\subsubsection*{Running times} 
Figure \ref{fig:running_times} shows the running time of Algorithm \ref{alg:inversion} when several hyperparameters vary. The paths are randomly generated as piecewise linear paths with $M=10$ pieces, starting at 0. The endpoint of each linear piece is generated uniformly at random in $[0,1]^d$. The experiments were performed on a standard laptop computer.

In Figure \ref{fig:running_time_depth}, we invert one path in dimension $d=2$ for different values of $n$. In Figure \ref{fig:running_time_channels}, we set $n=4$ and let $d$ vary. In Figure \ref{fig:running_time_batch}, we set $n=10$ and $d=2$, and let the number of paths $N$ vary. We see that as $n$ increases, the running time increases exponentially, while the dependence on the dimension $d$ is polynomial, and the dependence on the number of paths $N$ is linear. In total, we can invert 50 signatures truncated at order 10 in about 0.02s.

	\begin{figure}[ht]
		\centering	
  		\begin{subfigure}[b]{0.45\textwidth}
			\includegraphics[width=\textwidth]{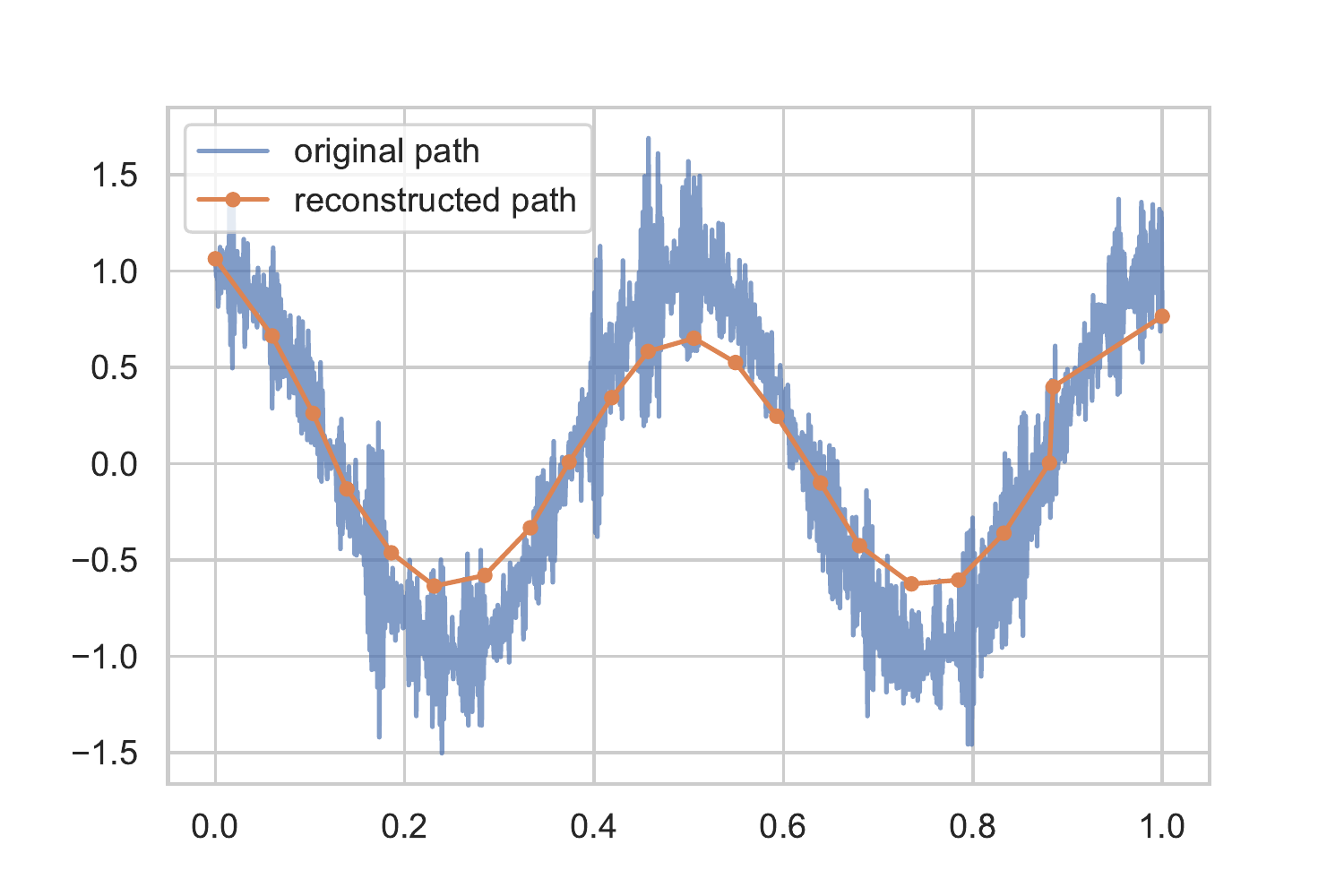}
			\caption{Simulation of a path as a cosine function plus an AR(1) noise.}
		\end{subfigure}%
  		~\hfill
		\begin{subfigure}[b]{0.45\textwidth}
			\includegraphics[width=\textwidth]{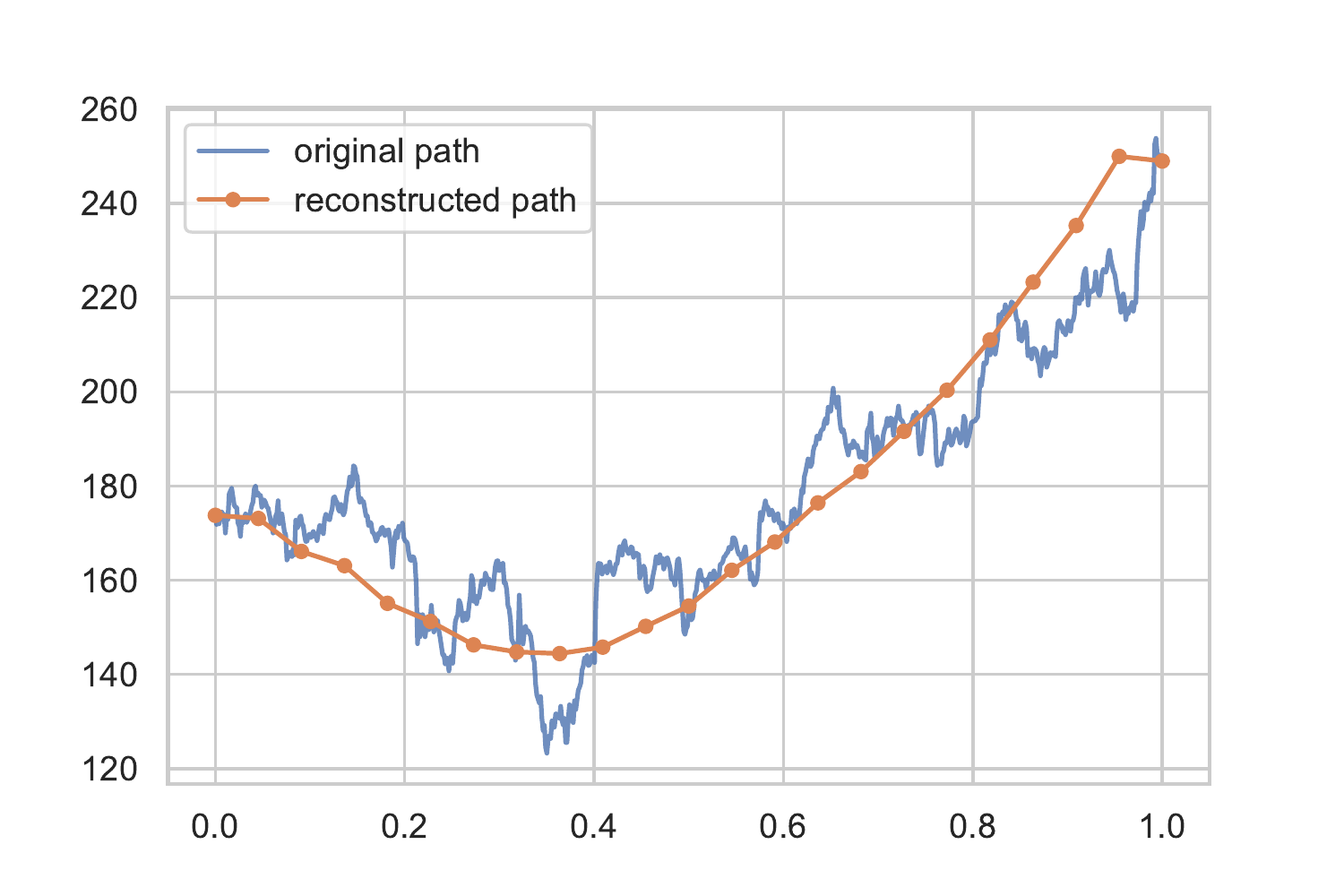}
			\caption{Daily open value of the FDX between 2015 and 2017.}
		\end{subfigure}%
		\caption{Examples of trend estimation: the signature of the original path in blue is computed with a truncation order of $n=21$, then it is inverted, which gives the orange reconstructed path.}
		\label{fig:trend_estimation}
	\end{figure}

\subsubsection*{Trend estimation} 

One interesting application of Algorithm \ref{alg:inversion} is to perform nonparametric trend estimation. In fact, if we compute the signature of a highly-sampled time series and invert it, we get an approximation of the original path sampled at a much lower frequency. In Figure \ref{fig:trend_estimation}, we show some results on both simulated and real-world financial data of this approach. We can see that the signatures are able to capture the overall trend of the curve. In this case, the choice of truncation order $n$ corresponds to the degree of smoothness desired for the trend, which can then be adapted to the application at hand. An interesting extension of our work would be to study both theoretically and empirically the resulting estimator of the trend, and compare it with traditional approaches such as running average. We can expect the signature-based estimator to be less sensitive to outliers and noise.

\begin{acknowledgement}
Terry Lyons was funded in part by the EPSRC [grant number EP/S026347/1], in part by The Alan Turing Institute under the EPSRC grant EP/N510129/1, the Data Centric Engineering Programme (under the Lloyd’s Register Foundation grant G0095), the Defence and Security Programme (funded by the UK Government) and the Office for National Statistics \& The Alan Turing Institute (strategic partnership), and in part by the Hong Kong Innovation and Technology Commission (InnoHK Project CIMDA). Adeline Fermanian thanks Linus Bleistein and Claire Boyer for stimulating discussions. The authors thank Davy Paindaveine and two anonymous referees for their valuable comments on a first version of the paper.
\end{acknowledgement}
\bibliographystyle{plainnat}
\bibliography{references}

\newpage
\begin{center}
  {\Large \textbf{Supplementary Material}}  
\end{center}
\section{Proof of Proposition \ref{thm:insertion_X'_theta_cvg}}
We start by proving a relation similar to Chen's identity (Proposition \ref{prop:basic_sig_prop}, $(i)$) for the insertion operator. For any $X \in BV(\R^d)$, $n \geqslant 1$, and $p \in \{1, \dots, n+1\}$, the insertion operator restricted to an interval $[s,t] \subset [0,1]$ is defined by
\begin{equation*}
	\mathscr{L}^n_{p,X; [s,t]}(y) = \int_{(u_1, \dots, u_n) \in \Delta_{n;[s,t]}} dX_{u_1} \otimes \dots \otimes dX_{u_{p-1}} \otimes y \otimes dX_{u_{p}} \otimes \dots \otimes dX_{u_n}.
\end{equation*}

\begin{lemma}
\label{lemma:chen_insertion_operator}
Let $X \in BV(\R^d)$, $n \geqslant 1$, $p \in \{1, \dots, n+1\}$, and $u, v, w \in  [0,1]$ such that $u<v<w$. Then
\begin{equation*}
\mathscr{L}^n_{p,X; [u,w]}(y) = \sum_{k=0}^{p-1} \mathbf{X}^k_{[u,v]} \otimes \mathscr{L}^{n-k}_{p-k, X; [v,w]}(y) + \sum_{k=p}^n  \mathscr{L}^k_{p, X; [u,v]}(y) \otimes \mathbf{X}^{n-k}_{[v,w]}.
\end{equation*}
\end{lemma}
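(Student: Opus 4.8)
The plan is to mimic the proof of Chen's identity (Proposition \ref{prop:basic_sig_prop}, $(i)$): partition the domain of integration $\Delta_{n;[u,w]}$ according to how many of the integration variables $u_1,\dots,u_n$ fall in $[u,v]$ rather than in $[v,w]$, factor the integral over each piece by a Fubini-type argument, and then carefully track on which side of the split the parameter-free slot occupied by $y$ ends up. The two sums in the statement are precisely the contributions from the pieces in which $y$ lands in the $[v,w]$-block versus those in which it stays in the $[u,v]$-block.

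More precisely, I would first record that, up to a set negligible for Riemann--Stieltjes integration against the continuous bounded-variation path $X$ (namely, the set where some $u_j$ equals $v$, which carries no mass because a continuous path of bounded variation has no atoms), one has the decomposition
\[
\Delta_{n;[u,w]} = \bigsqcup_{k=0}^{n}\big\{ u \leqslant u_1 \leqslant \cdots \leqslant u_k \leqslant v \leqslant u_{k+1} \leqslant \cdots \leqslant u_n \leqslant w\big\},
\]
each piece being the product simplex $\Delta_{k;[u,v]} \times \Delta_{n-k;[v,w]}$ --- this is the exact combinatorial fact behind Chen's identity, and the null-set justification is identical. Splitting the integral defining $\mathscr{L}^n_{p,X;[u,w]}(y)$ over these $n+1$ pieces and applying the iterated-integral (Fubini-type) factorization on each one reduces the whole proof to identifying, for each $k$, how
\[
dX_{u_1} \otimes \cdots \otimes dX_{u_{p-1}} \otimes y \otimes dX_{u_p} \otimes \cdots \otimes dX_{u_n}
\]
splits into a factor depending only on $(u_1,\dots,u_k)$ tensored with a factor depending only on $(u_{k+1},\dots,u_n)$.

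The one genuinely delicate point is deciding to which block the constant factor $y$ belongs. The $y$ sits between $dX_{u_{p-1}}$ and $dX_{u_p}$. If $k \geqslant p$, both neighbours are among the first $k$ factors, so $y$ remains in the left block, which is $dX_{u_1}\otimes\cdots\otimes dX_{u_{p-1}}\otimes y\otimes dX_{u_p}\otimes\cdots\otimes dX_{u_k}$ and integrates over $\Delta_{k;[u,v]}$ to $\mathscr{L}^{k}_{p,X;[u,v]}(y)$, while the right block integrates to $\mathbf{X}^{n-k}_{[v,w]}$; this gives the second sum, $k=p,\dots,n$. If $k \leqslant p-1$, the slot $y$ lands in the right block (at the borderline $k=p-1$ it becomes its first factor), and counting positions shows the right block $dX_{u_{k+1}}\otimes\cdots\otimes dX_{u_{p-1}}\otimes y\otimes dX_{u_p}\otimes\cdots\otimes dX_{u_n}$ has $n-k$ differentials with $y$ in position $p-k$, hence integrates over $\Delta_{n-k;[v,w]}$ to $\mathscr{L}^{n-k}_{p-k,X;[v,w]}(y)$, while the left block is $\mathbf{X}^{k}_{[u,v]}$; this gives the first sum, $k=0,\dots,p-1$. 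The extreme indices are absorbed by the usual conventions ($\Delta_0$ a point, $\mathbf{X}^0=1$), and the borderline terms $k=p-1$ and $k=p$ match the definitions of $\mathscr{L}$ at slots $1$ and $n+1$ respectively. I expect the only real obstacle to be exactly this index bookkeeping --- verifying that the position of $y$ inside the $[v,w]$-block is $p-k$ and that the two borderline cases agree with Definition~\ref{def:insertion_map} --- since the measure-theoretic and Fubini ingredients are literally the same as those used for Chen's identity and require no new input.
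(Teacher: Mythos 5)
Your proposal is correct and follows essentially the same route as the paper's proof: decompose $\Delta_{n;[u,w]}$ according to how many integration variables fall in $[u,v]$, factor each piece as $\Delta_{k;[u,v]}\times\Delta_{n-k;[v,w]}$, and split the sum at $k=p-1$ versus $k=p$ depending on which block contains the slot occupied by $y$. Your index bookkeeping (position $p-k$ in the right block, boundary cases $k=p-1$ and $k=p$) matches the paper exactly, and the remark about the negligible set $\{u_j=v\}$ is a small extra justification the paper leaves implicit.
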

\begin{proof}
	This formula is obtained by splitting the integration domain in two parts, with a running index $k$ corresponding to the number of integration variables in each part of the interval. We have
	\begin{align*}
	&\mathscr{L}^n_{p,X; [u,w]}(y) \\
    & = \idotsint\limits_{u \leqslant u_1 \leqslant \dots \leqslant u_n \leqslant w}dX_{u_1} \otimes \dots \otimes dX_{u_{p-1}} \otimes y \otimes \dots \otimes dX_{u_n} \\
	& = \sum_{k=0}^n \quad \idotsint\limits_{u \leqslant u_1 \leqslant \dots \leqslant u_k \leqslant v \leqslant u_{k+1} \leqslant \dots \leqslant u_n \leqslant w}dX_{u_1} \otimes \dots \otimes dX_{u_{p-1}} \otimes y \otimes \dots \otimes dX_{u_n}\\
	& = \! \sum_{k=0}^n \int_{(u_1, \dots, u_k) \in \Delta_{k;[u,v]}} \! \int_{(u_{k+1}, \dots, u_n) \in \Delta_{n-k; [v,w]}} \! dX_{u_1} \otimes \dots \otimes dX_{u_{p-1}} \otimes y \otimes \dots \otimes dX_{u_n},
	\end{align*}
	where we use the convention $u_0=u$ and $u_{n+1}=w$. To simplify this expression, we analyze separately the case $k \leqslant p-1$ and the case $k > p$. So,
	\begin{align*}
		&\mathscr{L}^n_{p,X; [u,w]}(y) \\
		& \quad = \sum_{k=0}^{p-1} \int_{\Delta_{k;[u,v]}} \int_{\Delta_{n-k; [v,w]}} dX_{u_1} \otimes \dots \otimes dX_{u_{p-1}} \otimes y \otimes \dots \otimes dX_{u_n}  \\
		& \qquad + \sum_{k=p}^n\int_{\Delta_{k;[u,v]}} \int_{\Delta_{n-k; [v,w]}}  dX_{u_1} \otimes \dots \otimes dX_{u_{p-1}} \otimes y \otimes \dots \otimes dX_{u_n} \\
		& \quad = \sum_{k=0}^{p-1} \Big( \int_{(u_1, \dots, u_k) \in \Delta_{k;[u,v]}} dX_{u_1} \otimes \dots \otimes dX_{u_{k}} \Big) \\
		& \qquad \qquad  \otimes \Big(  \int_{(u_{k+1}, \dots, u_n) \in \Delta_{n-k; [v,w]}} dX_{u_{k+1}} \otimes \dots \otimes dX_{u_{p-1}} \otimes y \otimes \dots \otimes dX_{u_n} \Big) \\
		& \qquad  + \sum_{k=p}^n \Big( \int_{(u_1, \dots, u_k) \in \Delta_{k;[u,v]}}dX_{u_1} \otimes \dots \otimes dX_{u_{p-1}} \otimes y \otimes \dots \otimes dX_{u_{k}} \Big) \\
		&\quad \qquad \qquad \otimes \Big( \int_{(u_{k+1}, \dots, u_n) \in \Delta_{n-k; [v,w]}} dX_{u_{k+1}} \otimes \dots \otimes dX_{u_n} \Big) \\
		&\quad  = \sum_{k=0}^{p-1} \mathbf{X}^k_{[u,v]} \otimes \mathscr{L}^{n-k}_{p-k, X; [v,w]}(y) + \sum_{k=p}^n  \mathscr{L}^k_{p, X; [u,v]}(y) \otimes \mathbf{X}^{n-k}_{[v,w]}.
	\end{align*}
{\qed}
\end{proof}

We are now ready to prove Proposition \ref{lemma:chen_insertion_operator}. The first step uses Proposition \ref{prop:basic_sig_prop}, $(i)$ and Lemma \ref{lemma:chen_insertion_operator} to split both the signature $\mathbf{X}^{n+1}$ and the insertion operator $\mathscr{L}^n_{p,X}(y)$ on the intervals $[0, t_{i-1}]$, $[t_{i-1}, t_i]$, and $[t_i, 1]$. With respect to the insertion operator, we split the sum depending on the location of $p$, similar to the proof of Lemma \ref{lemma:chen_insertion_operator}. Let $\mathscr{S}$ denote the set
\begin{equation*}
\mathscr{S} = \big\{(n_1, n_2, n_3) \in \{0, \dots, n+1\} \mid n_1 + n_2 + n_3 = n+1 \big\}.
\end{equation*}
For a fixed $p \in \{1, \dots, n+1 \}$, we use the following partition, illustrated in Figure \ref{fig:partition_chen_3_intervals}:
\begin{align*}
\mathscr{S} &= \big\{(n_1, n_2, n_3) \in \mathscr{S} \mid n_1 \geqslant p \big\} \cup \big\{(n_1, n_2, n_3) \in \mathscr{S} \mid n_1 < p, \, n_1 + n_2 \geqslant p \big\} \\
& \quad \cup \big\{(n_1, n_2, n_3) \in \mathscr{S} \mid n_1 + n_2 < p \big\}.
\end{align*}
% Colors
\colorlet{lightgray}{gray!10}
% Path colors
\definecolor{colorp1}{HTML}{0173b2}
\definecolor{colorp2}{HTML}{de8f05}
\definecolor{colorp3}{HTML}{029e73}

\global\def\plotrange{0, 1, 4, 8, 11, 15, 18, 19}
\def\myarr {
    (0.131579, 0.5)
    (0.2631579, 1.228486295)
    % (0.52631578, 1.7697186200000001)
    % (0.78947368, 2.14419923)
    (1.05263158, 2.37243038 - 0.3)
    % (1.31578948, 2.47491434 - 0.3)
    % (1.57894736, 2.47215338)
    % (1.84210526, 2.384649725)
    (2.10526316, 2.232905675 - 0.3)
    % (2.36842106, 2.037423455)
    % (2.63157894, 1.81870535)
    (2.89473684 - 0.1, 1.597253615)
    % (3.15789474, 1.39357049)
    % (3.42105264, 1.2281582599999998)
    % (3.68421052, 1.121519165)
    (3.94736842 - 0.3, 1.09415549 + 0.2)
    % (4.21052632, 1.166569475)
    % (4.47368422, 1.359263375)
    (4.7368421 - 0.5, 1.6927394599999999)
    (5.0 -0.3, 2.5)
}

\global\def\xs{{0.131579, 0.2631579, 0.52631578, 0.78947368, 1.05263158, 1.31578948, 1.57894736, 1.84210526, 2.10526316, 2.36842106, 2.63157894, 2.89473684 - 0.1, 3.15789474, 3.42105264, 3.68421052, 3.94736842 - 0.3, 4.21052632, 4.47368422, 4.7368421 - 0.5, 5.0 - 0.3}}

\global\def\ys{{0.5, 1.228486295, 1.7697186200000001, 2.14419923, 2.37243038 - 0.3, 2.47491434 - 0.3, 2.47215338, 2.384649725, 2.232905675 - 0.3, 2.037423455, 1.81870535, 1.597253615, 1.39357049, 1.2281582599999998, 1.121519165, 1.09415549 + 0.2, 1.166569475, 1.359263375, 1.6927394599999999, 2.5}}

\global\def\xlen{19}

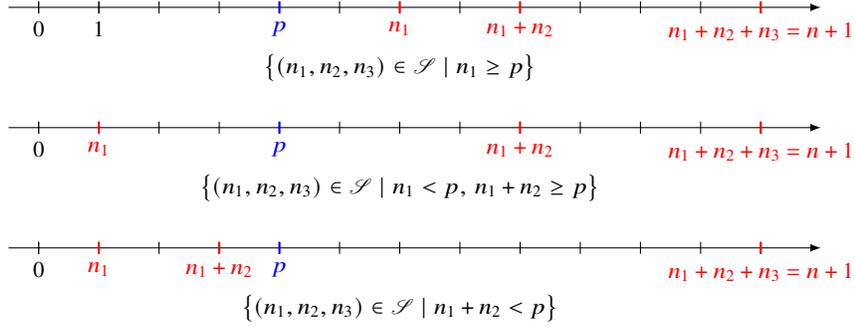
\begin{figure}[ht]
\centering
    \begin{tikzpicture}[scale=0.8]
      % Axis 1
      \draw[black, ->] (-0.5, 0) -- (13, 0);

      \foreach \x in {0,1,...,12} {
      \draw (\x,0.1) -- (\x,-0.1);
      }
      
      % xticks
      \draw (0,0.1) -- (0,-0.1) node[below] {$0$};
      %\draw (1,0.1) -- (1,-0.1) node[below] {$1$};
      \draw[thick, red] (12,0.1) -- (12,-0.1) node[below] {$n_1 + n_2 +n_3 = n+1$};
      \draw[thick, blue] (4,0.1) -- (4,-0.1) node[below] {$p$};
      \draw[thick, red] (1,0.1) -- (1,-0.1) node[below] {$n_1$};
      \draw[thick, red] (3,0.1) -- (3,-0.1) node[below] {$n_1 + n_2$};

      % title
      \draw (6, -1) node{$\big\{(n_1, n_2, n_3) \in \mathscr{S} \mid n_1 + n_2 < p \big\}$};

      % Axis 2
      \draw[black, ->] (-0.5, 2) -- (13, 2);

      \foreach \x in {0,1,...,12} {
      \draw (\x,2.1) -- (\x,1.9);
      }
      % xticks
      \draw (0,2.1) -- (0,1.9) node[below] {$0$};
      %\draw (1,2.1) -- (1,1.9) node[below] {$1$};
      \draw[thick, red] (12,2.1) -- (12,1.9) node[below] {$n_1 + n_2 +n_3 = n+1$};
      \draw[thick, blue] (4,2.1) -- (4,1.9) node[below] {$p$};
      \draw[thick, red] (1,2.1) -- (1,1.9) node[below] {$n_1$};
      \draw[thick, red] (8,2.1) -- (8,1.9) node[below] {$n_1 + n_2$};

      % title
      \draw (6, 1) node{$\big\{(n_1, n_2, n_3) \in \mathscr{S} \mid n_1 < p, \, n_1 + n_2 \geq p \big\}$};

      % Axis 3
      \draw[black, ->] (-0.5, 4) -- (13, 4);

      % xticks
      \foreach \x in {0,1,...,12} {
      \draw (\x,4.1) -- (\x,3.9);
      }
      
      % xlabels
      \draw (0,4.1) -- (0,3.9) node[below] {$0$};
       \draw (1,4.1) -- (1,3.9) node[below] {$1$};
      \draw[thick, red] (12,4.1) -- (12,3.9) node[below] {$n_1 + n_2 +n_3 = n+1$};
      \draw[thick, blue] (4,4.1) -- (4,3.9) node[below] {$p$};
      \draw[thick, red] (6,4.1) -- (6,3.9) node[below] {$n_1$};
      \draw[thick, red] (8,4.1) -- (8,3.9) node[below] {$n_1 + n_2$};

      % title
      \draw (6, 3) node{$\big\{(n_1, n_2, n_3) \in \mathscr{S} \mid n_1 \geq p \big\}$};

    \end{tikzpicture}

\caption{Partition of $\mathscr{S}$.}
\label{fig:partition_chen_3_intervals}
\end{figure}
With this notation, we have
\begin{align*} 
 \mathscr{L}^n_{p,X}(y) & = \sum_{\substack{(n_1, n_2, n_3 ) \in \mathscr{S}\\ p \leqslant n_1 }} \mathscr{L}^{n_1 -1}_{p, X; [0,t_{i-1}]}(y) \otimes \mathbf{X}^{n_2}_{[t_{i-1}, t_i]} \otimes \mathbf{X}^{n_3}_{[t_i, 1]}  \\
 & \quad + \sum_{\substack{(n_1, n_2, n_3 ) \in \mathscr{S} \\ n_1 < p \leqslant n_1 + n_2 }} \mathbf{X}^{n_1}_{[0, t_{i-1}]} \otimes \mathscr{L}^{n_2 - 1}_{p - n_1, X; [t_{i-1},t_{i}]}(y) \otimes \mathbf{X}^{n_3}_{[t_i, 1]} \\
& \quad +  \sum_{\substack{(n_1, n_2, n_3 ) \in \mathscr{S} \\ n_1 + n_2 < p }} \mathbf{X}^{n_1}_{[0, t_{i-1}]} \otimes \mathbf{X}^{n_2}_{[t_{i-1}, t_i]}\otimes \mathscr{L}^{n_3 -1}_{p - n_1 - n_2 , X; [t_{i},1]}(y)  \\
& = \sum_{k=p-1}^{n } \mathscr{L}^{k}_{p, X; [0,t_{i-1}]}(y) \otimes \mathbf{X}^{n-k}_{[t_{i-1},1]}   \\
& \quad + \sum_{\substack{(n_1, n_2, n_3 ) \in \mathscr{S} \\ n_1 < p \leqslant n_1 + n_2 }} \mathbf{X}^{n_1}_{[0, t_{i-1}]} \otimes \mathscr{L}^{n_2 - 1}_{p - n_1, X; [t_{i-1},t_{i}]}(y) \otimes \mathbf{X}^{n_3}_{[t_i, 1]}  \\
& \quad + \sum_{k=0}^{p-1} \mathbf{X}^k_{[0, t_{i}]} \otimes \mathscr{L}^{n-k}_{p - k , X; [t_{i},1]}(y). 
\end{align*}
With respect to $\mathbf{X}^{n+1}$, a straightforward extension of Proposition \ref{prop:basic_sig_prop} yields
\begin{align*} %\label{eq:proof_chen_sig_n+1}
\mathbf{X}^{n+1} &= \sum_{(n_1, n_2, n_3 ) \in \mathscr{S}} \mathbf{X}^{n_1}_{[0,t_{i-1}]} \otimes \mathbf{X}^{n_2}_{[t_{i-1},t_i]} \otimes \mathbf{X}^{n_3}_{[t_i,1]},
\end{align*}
and thus
\begin{align*} %\label{eq:decomposition_signature_in_3}
 \mathbf{X}^{n+1} & = \sum_{k=p-1}^{n} \mathbf{X}^{k+1}_{[0, t_{i-1}]} \otimes \mathbf{X}^{n+1-k}_{[t_{i-1}, 1]} + \sum_{\substack{(n_1, n_2, n_3 ) \in \mathscr{S} \\ n_1 < p \leqslant n_1 + n_2 }} \mathbf{X}^{n_1}_{[0, t_{i-1}]} \otimes \mathbf{X}^{n_2}_{[t_{i-1},t_{i}]} \otimes \mathbf{X}^{n_3}_{[t_i, 1]} \nonumber \\
 & \quad + \sum_{k=0}^{p-1} \mathbf{X}^k_{[0,t_i]} \otimes \mathbf{X}^{n+1-k}_{[t_i,1]}.
\end{align*}
Putting everything together, we are led to
\begin{align}\label{eq:chen_decomposition_diff_insertion_sig}
&n!(\mathscr{L}^n_{p,X} (\beta_i) - (n+1)\mathbf{X}^{n+1}) \nonumber \\
 & \quad = n! \sum_{k=p - 1 }^{n} (\mathscr{L}^{k}_{p,X; [0,t_{i-1}]}(\beta_i) - (n+1) \mathbf{X}^{k+1}_{[0, t_{i-1}]}) \otimes \mathbf{X}^{n-k}_{[t_{i-1},1]} \nonumber \\
& \qquad+ n!  \sum_{\substack{(n_1, n_2, n_3 ) \in \mathscr{S} \\ n_1 < p \leqslant n_1 + n_2 }} \mathbf{X}^{n_1}_{[0, t_{i-1}]} \otimes (\mathscr{L}^{n_2-1}_{p - n_1,X; [t_{i-1},t_{i}]}(\beta_i) - (n+1) \mathbf{X}^{n_2}_{[t_{i-1}, t_i]})\otimes \mathbf{X}^{n_3}_{[t_i, 1]} \nonumber \\
& \qquad + n! \sum_{k=0}^{p-1} \mathbf{X}^k_{[0, t_{i}]} \otimes (\mathscr{L}^{n-k}_{p - k,X; [t_{i},1]}(\beta_i) - (n+1) \mathbf{X}^{n+1 -k}_{[t_i, 1]}) \nonumber \\
& \quad := A_1 + A_2 + A_3.
\end{align}
We now bound $A_1$, $A_2$, and $A_3$ separately. Note that the same arguments as in the proof of Proposition \ref{insertiontest} give for any admissible norm $\|\mathscr{L}^n_{p,X;[s,t]}(y) \|= \| \mathbf{X}^n\| \cdot \|y\|$ and thus, by  Proposition \ref{prop:up_bound_norm_sig},
\[\| \mathscr{L}^n_{p,X;[s,t]}(y) \| \leqslant \frac{\|X\|_{TV;[s,t]}^n}{n!} \|y\|.\]
This yields, together with the triangle inequality, 
%For the first term, we have by the triangle inequality and properties of admissible tensor norms,
\begin{align*}
\|A_1\| & = \Big\| n! \sum_{k=p - 1 }^{n} (\mathscr{L}^{k}_{p,X; [0,t_{i-1}]}(\beta_i) - (n+1) \mathbf{X}^{k+1}_{[0, t_{i-1}]}) \otimes \mathbf{X}^{n-k}_{[t_{i-1},1]}\Big\| \\
& \leqslant n! \sum_{k=p - 1 }^{n} \big\| \mathscr{L}^{k}_{p,X; [0,t_{i-1}]}(\beta_i) \big\| \cdot \big\| \mathbf{X}^{n-k}_{[t_{i-1},1]}  \big\| \\
& \quad + (n+1)! \sum_{k=p - 1 }^{n} \big\| \mathbf{X}^{k+1}_{[0, t_{i-1}]} \big\| \cdot \big\| \mathbf{X}^{n-k}_{[t_{i-1},1]}  \big\| \\
& \leqslant n! \sum_{k=p - 1 }^{n} \|\beta_i\| \frac{\|X\|_{TV; [0,t_{i-1}]}^{k}}{k!} \frac{\|X\|_{TV; [t_{i-1}, 1]}^{n-k}}{(n-k)!}\\
& \quad + (n+1)! \sum_{k=p - 1 }^{n} \frac{\|X\|_{TV; [0,t_{i-1}]}^{k+1}}{(k+1)!} \frac{\|X\|_{TV; [t_{i-1}, 1]}^{n-k}}{(n-k)!}.
%& \leqslant \|\beta\|_\infty^{n + 1} \Big( \sum_{k=p - 1 }^{n} \frac{1}{k! (n-k)!} t_{i-1}^{k} (1-t_{i-1})^{n-k} + \sum_{k=p }^{n +1} \frac{(n+1)}{k! (n +1-k)!} t_{i-1}^{k} (1-t_{i-1})^{n + 1-k} \Big).
\end{align*}
Recall that our choice of parameterization of $X$ ensures that for any $j \in \{1, \dots, M\}$, $\| \beta_j\| = \ell$. Therefore,
\[ \|X\|_{TV; [0,t_{i-1}]} = \sum_{j=1}^{i-1} \| \beta_j\| (t_j - t_{j-1}) = \ell \sum_{j=1}^{i-1}(t_j - t_{j-1}) = \ell t_{i-1}.\]
Similarly, $\|X\|_{TV; [t_{i-1}, 1]} = \ell (1-t_{i-1})$. It follows that
\begin{align}\label{eq:result_bound_A_1}
\|A_1\| & \leqslant \ell^{n+1}\bigg( \sum_{k=p - 1 }^{n} \binom{n}{k} t_{i-1}^k (1-t_{i-1})^{n-k} + \sum_{k=p}^{n+1} \binom{n+1}{k} t_{i-1}^k (1-t_{i-1})^{n+1-k} \bigg).
\end{align}
Bounding $A_3$ in a similar way, we obtain
\begin{align}\label{eq:result_bound_A_3}
\| A_3\| & \leqslant \ell^{n+1} \bigg(\sum_{k=0}^{p-1} \binom{n}{k} t_{i}^k (1-t_{i})^{n-k} + \sum_{k=0}^{p-1} \binom{n+1}{k} t_{i}^k (1-t_{i})^{n+1-k} \bigg).
\end{align}
We now turn to the term $A_2$. We have
\begin{align*}
\|A_2\| %&= \Big\| n! \sum_{\substack{(n_1, n_2, n_3 ) \in \mathscr{S} \\ n_1 < p \leqslant n_1 + n_2 }} \mathbf{X}^{n_1}_{[0, t_{i-1}]} \otimes (\mathscr{L}^{n_2-1}_{p - n_1,X; [t_{i-1},t_{i}]}(\beta_i) - (n+1) \mathbf{X}^{n_2}_{[t_{i-1}, t_i]})\otimes \mathbf{X}^{n_3}_{[t_i, 1]} \Big\| \\
& \leqslant n! \! \sum_{\substack{(n_1, n_2, n_3 ) \in \mathscr{S} \\ n_1 < p \leqslant n_1 + n_2 }} \big\|\mathbf{X}^{n_1}_{[0, t_{i-1}]} \big\| \, \big\| \mathscr{L}^{n_2-1}_{p - n_1,X; [t_{i-1},t_{i}]}(\beta_i) - (n+1) \mathbf{X}^{n_2}_{[t_{i-1}, t_i]} \big\| \, \big\| \mathbf{X}^{n_3}_{[t_i, 1]} \big\| \\
& \leqslant \!  n! \sum_{(n_1, n_2, n_3 ) \in \mathscr{S}} \frac{t_{i-1}^{n_1} (1 - t_{i})^{n_3} \ell^{n_1} \ell^{n_3} }{n_1!n_3!} \big\| \mathscr{L}^{n_2-1}_{p - n_1,X; [t_{i-1},t_{i}]}(\beta_i) - (n+1) \mathbf{X}^{n_2}_{[t_{i-1}, t_i]}\big \|.
\end{align*}
Since $X$ is linear on $[t_{i-1}, t_i]$,
\begin{equation*}
\mathbf{X}^{n_2}_{[t_{i-1}, t_i]} = \frac{(t_i -t_{i-1})^{n_2}}{n_2!} \beta_i^{\otimes n_2}, \quad \mathscr{L}^{n_2-1}_{p - n_1,X; [t_{i-1},t_{i}]}(\beta_i) = \frac{(t_i -t_{i-1})^{n_2-1}}{(n_2-1)!} \beta_i^{\otimes n_2}.
\end{equation*}
So,
\begin{align*}
\big \| \mathscr{L}^{n_2-1}_{p - n_1,X; [t_{i-1},t_{i}]}(\beta_i) - (n+1) \mathbf{X}^{n_2}_{[t_{i-1}, t_i]}\big \| %&= \Big\| \frac{(t_i -t_{i-1})^{n_2-1}}{(n_2-1)!} \beta_i^{\otimes n_2} -  \frac{(n+1)(t_i -t_{i-1})^{n_2}}{ n_2!} \beta_i^{\otimes n_2} \Big\| \nonumber \\
&= \frac{(t_i -t_{i-1})^{n_2}}{n_2!} \Big|\frac{n_2}{t_i - t_{i-1}} -(n+1) \Big| \cdot \|\beta_i^{\otimes n_2}\| \nonumber \\
& = \frac{(t_i -t_{i-1})^{n_2} \ell^{n_2}}{n_2!} \Big|\frac{n_2}{t_i - t_{i-1}} - (n+1) \Big|,
\end{align*}
and
\begin{align}\label{eq:result_bound_A_2}
& \|A_2\| \nonumber \\
& \quad \leqslant  n! \! \sum_{(n_1, n_2, n_3 ) \in \mathscr{S}} \frac{t_{i-1}^{n_1} \ell^{n_1}}{n_1!} \frac{(t_i -t_{i-1})^{n_2} \ell^{n_2}}{n_2!} \Big|\frac{n_2}{t_i - t_{i-1}} - (n+1) \Big| \frac{(1 - t_{i})^{n_3}\ell^{n_3}}{n_3!} \nonumber \\
& \quad  = \ell^{n+1} \! \sum_{(n_1, n_2, n_3 ) \in \mathscr{S}} \frac{(n+1)!}{n_1! n_2! n_3!} t_{i-1}^{n_1}(t_i -t_{i-1})^{n_2}(1 - t_{i})^{n_3}\Big|\frac{n_2}{(n+1)(t_i - t_{i-1})} - 1 \Big| \nonumber\\
& \quad = \ell^{n+1} \sum_{k=0}^{n+1} \bigg( \frac{1}{k!} \Big(\sum_{n_1=0}^{k} \frac{k!}{n_1! (k-n_1)!} t_{i-1}^{n_1}(1 - t_{i})^{k-n_1} \Big) \nonumber \\
& \quad  \qquad \qquad \quad  \times \frac{(n+1)!}{(n+1-k)!}(t_i -t_{i-1})^{n+1-k} \Big|\frac{n +1 -k}{(n+1)(t_i - t_{i-1})} - 1 \Big| \bigg) \nonumber \\
& \quad  = \ell^{n+1} \sum_{k=0}^{n+1} \frac{(n+1)!}{k!(n+1-k)!} (1-(t_i-t_{i-1}))^{k}(t_i -t_{i-1})^{n+1-k} \Big|\frac{n +1 -k}{(n+1)(t_i - t_{i-1})} - 1 \Big| \nonumber \\
& \quad  = \ell^{n+1} \! \sum_{k=0}^{n+1}\binom{n+1}{k} (1-(t_i-t_{i-1}))^{n+1-k}(t_i -t_{i-1})^{k} \Big|\frac{k}{(n+1)(t_i - t_{i-1})} - 1 \Big|.
\end{align}
The right-hand sides of \eqref{eq:result_bound_A_1}, \eqref{eq:result_bound_A_3}, and \eqref{eq:result_bound_A_2} correspond to probability mass functions of binomial random variables. Indeed, let
\begin{equation*}
Y_{1,n} \sim \text{Binom}(n, t_{i-1}), \quad Y_{2,n} \sim \text{Binom}(n, t_i - t_{i-1}), \quad \text{and} \quad Y_{3,n} \sim \text{Binom}(n, t_i).
\end{equation*}
Then
\begin{align*}
\| A_1 \| &\leqslant \ell^{n+1} \big( \proba(Y_{1,n} \geqslant p-1) + \proba(Y_{1,n+1} \geqslant p ) \big),\\
 \| A_3 \| & \leqslant \ell^{n+1} \big( \proba(Y_{3,n} \leqslant p-1) + \proba(Y_{3,n+1} \leqslant p-1) \big),\\
\| A_2 \| &\leqslant \ell^{n+1} \esp\Big[\Big|\frac{Y_{2,n+1}}{(n+1)(t_i - t_{i-1})} - 1 \Big|\Big]. 
\end{align*}
First, since $\esp Y_{2,n+1} = (n+1)(t_i-t_{i-1})$ and $\text{Var}(Y_{2,n+1}) = (n+1)(t_i-t_{i-1})(1-(t_i-t_{i-1}))$, by Hölder's inequality,
\begin{align*}
\esp\Big[\Big|\frac{Y_{2,n+1}}{(n+1)(t_i - t_{i-1})} -  1 \Big|\Big] &= \frac{1}{(n+1)(t_i-t_{i-1})}\esp\big[|Y_{2,n+1} - \esp Y_{2,n+1}|\big] \\
& \leqslant \frac{1}{(n+1)(t_i-t_{i-1})}\esp \big[|Y_{2,n+1} - \esp Y_{2,n+1}|^2 \big]^{\nicefrac{1}{2}} \\
& \leqslant \frac{1}{(n+1)(t_i-t_{i-1})} \sqrt{(n+1)(t_i-t_{i-1})(1-(t_i-t_{i-1}))} \\
& \leqslant \frac{1}{\sqrt{n+1}} \sqrt{\frac{1 - (t_i - t_{i-1})}{t_i - t_{i-1}}}.
\end{align*}
The other terms decay exponentially fast if $p$ is well chosen. We give the details for the bound on $\|A_1\|$ below but the one on $\|A_3\|$ is treated in the same way. First, we have
\begin{align*}
\| A_1 \| &\leqslant \ell^{n+1} \big( \proba(Y_{1,n} \geqslant p-1) + \proba(Y_{1,n+1} \geqslant p ) \big),\\
& \leqslant \ell^{n+1} \big( \proba(Y_{1,n} - \esp Y_{1,n} \geqslant p-1 -nt_{i-1}) \big. \\
& \qquad \quad  +\big. \proba(Y_{1,n+1} - \esp Y_{1,n+1} \geqslant p -(n+1)t_{i-1}) \big).
\end{align*}
Recall that $p = \floor{\nicefrac{(n+1)(3t_i + t_{i-1})}{4}}$. Thus, for $n \geqslant \nicefrac{2}{(t_i - t_{i-1})}$, we have
\begin{align*}
\frac{t_i + 3t_{i-1}}{4} \leqslant \frac{p}{n+1} < \frac{3t_i + t_{i-1}}{4}.
\end{align*}
% Colors
\colorlet{lightgray}{gray!10}
% Path colors
\definecolor{colorp1}{HTML}{0173b2}
\definecolor{colorp2}{HTML}{de8f05}
\definecolor{colorp3}{HTML}{029e73}

\global\def\plotrange{0, 1, 4, 8, 11, 15, 18, 19}
\def\myarr {
    (0.131579, 0.5)
    (0.2631579, 1.228486295)
    % (0.52631578, 1.7697186200000001)
    % (0.78947368, 2.14419923)
    (1.05263158, 2.37243038 - 0.3)
    % (1.31578948, 2.47491434 - 0.3)
    % (1.57894736, 2.47215338)
    % (1.84210526, 2.384649725)
    (2.10526316, 2.232905675 - 0.3)
    % (2.36842106, 2.037423455)
    % (2.63157894, 1.81870535)
    (2.89473684 - 0.1, 1.597253615)
    % (3.15789474, 1.39357049)
    % (3.42105264, 1.2281582599999998)
    % (3.68421052, 1.121519165)
    (3.94736842 - 0.3, 1.09415549 + 0.2)
    % (4.21052632, 1.166569475)
    % (4.47368422, 1.359263375)
    (4.7368421 - 0.5, 1.6927394599999999)
    (5.0 -0.3, 2.5)
}

\global\def\xs{{0.131579, 0.2631579, 0.52631578, 0.78947368, 1.05263158, 1.31578948, 1.57894736, 1.84210526, 2.10526316, 2.36842106, 2.63157894, 2.89473684 - 0.1, 3.15789474, 3.42105264, 3.68421052, 3.94736842 - 0.3, 4.21052632, 4.47368422, 4.7368421 - 0.5, 5.0 - 0.3}}

\global\def\ys{{0.5, 1.228486295, 1.7697186200000001, 2.14419923, 2.37243038 - 0.3, 2.47491434 - 0.3, 2.47215338, 2.384649725, 2.232905675 - 0.3, 2.037423455, 1.81870535, 1.597253615, 1.39357049, 1.2281582599999998, 1.121519165, 1.09415549 + 0.2, 1.166569475, 1.359263375, 1.6927394599999999, 2.5}}

\global\def\xlen{19}

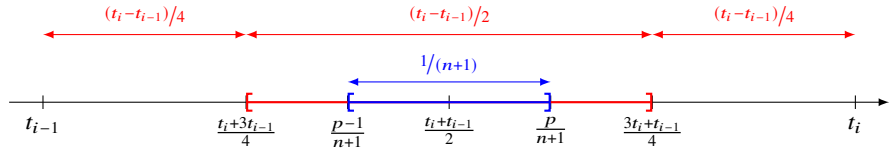
\begin{figure}[ht]
\centering

    \begin{tikzpicture}[scale=0.9]
      % Axis
      \draw[black, ->] (-0.5, 0) -- (12.5, 0);
      
      % xticks
      \draw (0,0.1) -- (0,-0.1) node[below] {$t_{i-1}$};
      \draw (12,0.1) -- (12,-0.1) node[below] {$t_i$};
      \draw (6,0.1) -- (6,-0.1) node[below] {$\frac{t_i + t_{i-1}}{2}$};
      \draw (3,0.1) -- (3,-0.1) node[below] {$\frac{t_i + 3t_{i-1}}{4}$};
      \draw (9,0.1) -- (9,-0.1) node[below] {$\frac{3t_i + t_{i-1}}{4}$};
      \draw (7.5,0.1) -- (7.5,-0.1) node[below] {$\frac{p}{n+1}$};
      \draw (4.5,0.1) -- (4.5,-0.1) node[below] {$\frac{p-1}{n+1}$};
      
      % intervals
      \draw[{[-]}, thick, red] (3,0) -- (9,0);      
      \draw[{[-]}, thick, blue] (4.5,0) -- (7.5,0);

      % interval distances
      \draw[blue, <->] (4.5, 0.3) -- (7.5, 0.3) node[pos=0.5,above] {$\nicefrac{1}{(n+1)}$};
      \draw[red, <->] (3, 1) -- (9, 1) node[pos=0.5,above] {$\nicefrac{(t_i - t_{i-1})}{2}$};
      \draw[red, <->] (0, 1) -- (3, 1) node[pos=0.5,above] {$\nicefrac{(t_i - t_{i-1})}{4}$};
      \draw[red, <->] (9, 1) -- (12, 1) node[pos=0.5,above] {$\nicefrac{(t_i - t_{i-1})}{4}$};     
        
    \end{tikzpicture}

\caption{Illustration on the choice of $p$.}
\label{fig:position_p_linear_interval}
\end{figure}
This is summarized in Figure \ref{fig:position_p_linear_interval}. In particular,
\begin{align*}
p -(n+1)t_{i-1} &= (n+1) \Big(\frac{p}{n+1} - t_{i-1}\Big) \\
&\geqslant (n+1) \Big(\frac{t_i + 3t_{i-1}}{4} - t_{i-1}\Big) = (n+1) \frac{t_i - t_{i-1}}{4} > 0,
\end{align*}
and 
\begin{align*}
p -1 - nt_{i-1} = n \Big(\frac{p}{n+1} - t_{i-1} + \frac{p}{n+1} - \frac{p-1}{n}\Big) &\geqslant n \Big(\frac{t_i - t_{i-1}}{4} + \frac{n+1-p}{n(n+1)}\Big) \\
& > n \frac{t_i - t_{i-1}}{4} > 0.
\end{align*}
By Hoeffding's inequality, we obtain
\begin{align*}
\| A_1 \| % &\leqslant \ell^{n+1} \big( \proba(Y_{1,n} - \esp Y_{1,n} \geqslant p-1 -nt_{i-1}) \big. \\
%& \qquad \quad  \big.+ \proba(Y_{1,n+1} - \esp Y_{1,n+1} \geqslant p -(n+1)t_{i-1}) \big) \\
&\leqslant \ell^{n+1}\Big( \exp\Big(-n\Big( \frac{p-1}{n} -t_{i-1}\Big)^2 \Big) +\exp\Big(-(n+1)\Big( \frac{p}{n+1} -t_{i-1}\Big)^2 \Big)\Big) \\
& \leqslant \ell^{n+1} \Big( \exp\Big(- \frac{n(t_i-t_{i-1})^2}{16}\Big) + \exp\Big(-\frac{(n +1)(t_i-t_{i-1})^2}{16}\Big)\Big)\\
& \leqslant 2 \ell^{n+1} \exp\Big(- \frac{n(t_i-t_{i-1})^2}{16}\Big).
\end{align*}
With similar arguments, we obtain
\begin{align*}
\| A_3 \| & \leqslant 2 \ell^{n+1} \exp\Big(- \frac{n(t_i-t_{i-1})^2}{16}\Big).
\end{align*}

Finally, combining \eqref{eq:chen_decomposition_diff_insertion_sig} with the bounds above on $\|A_1\|$, $\|A_2\|$, and $\|A_3\|$, we conclude that
\begin{align*}
&\| \mathscr{L}^n_{p,X} (\beta_i) - (n +1) \mathbf{X}^{n+1}\| \\
& \quad \leqslant \frac{\ell^{n+1}}{n!} \Big(\frac{1}{\sqrt{n+1}} \sqrt{\frac{1 - (t_i - t_{i-1})}{t_i - t_{i-1}}} +  4\exp\Big(- \frac{n(t_i-t_{i-1})^2}{16}\Big) \Big).
%&= \mathscr{O}\Big(\frac{\ell^{n+1}}{n! \sqrt{n+1}}\Big).
\end{align*}

\section{Proof of Proposition \ref{th:sig_lower_bound}}

The proof is based on several ingredients of \citet{hambly2010uniqueness}. Their core idea is to move the path to a hyperbolic space, which is called the ``development'' of the path. First, note that if $Y \in BV(\R^d)$ is defined by $Y_t = X_t/\ell$, where $\ell = \|X\|_{TV}$ is the length of $X$, then 
\begin{equation*}
\mathbf{Y}^n = \int_{(u_1, \dots, u_n) \in \Delta_n} dY_{u_1} \otimes \dots \otimes dY_{u_n} = \int_{(u_1, \dots, u_n) \in \Delta_n} \frac{1}{\ell}dX_{u_1} \otimes \dots \otimes\frac{1}{\ell} dX_{u_n} = \frac{1}{\ell^n} \mathbf{X}^n.
\end{equation*}
Therefore, we will assume without loss of generality that $\ell = 1$, and the lower bound obtained must be multiplied by $\ell^n$ to return to the general case.

We start the proof with a series of definitions.
\begin{definition}
The hyperboloid model is the subspace of $\R^{d+1}$ defined by
\begin{equation*}
\mathbb{H}=\{ y \in \R^{d+1} \mid B(y,y)=-1 \},
\end{equation*}
where, for any $x,y \in \R^{d+1}$, 
\begin{equation*}
\label{eq:def_B}
B(x,y)=\sum_{i=1}^d x_iy_i - x_{d+1}y_{d+1}.
\end{equation*}
\end{definition}

This hyperbolic space has several well-known good properties \citep{cannon1997hyperbolic,paupert2016introduction,loustau2020hyperbolic}. The main one is that the hyperbolic distance between two points $x,y \in \mathbb{H}$, denoted by $d$, can be easily computed as
	\begin{equation*}
		d(x,y)=\text{arcosh}(-B(x, y)).
	\end{equation*}
For $E_1$ and $E_2$ two finite-dimensional normed vector spaces, we let $L(E_1,E_2)$ be the vector space of linear operators from $E_1$ to $E_2$. Equipped with the operator norm, $L(E_1,E_2)$ is itself a normed vector space. Recall that for $f \in L(E_1,E_2)$,
\begin{equation*}
\| f\|_{L(E_1,E_2)} = \underset{x \in E_1, \|x\|_{E_1}=1}{\sup} \| f(x) \|_{E_2}.
\end{equation*}

%\begin{definition}
%The set of bounded linear operators between two normed vector spaces ($E$, $\| \cdot \|_E)$ and $(F, \| \cdot \|_F)$ is denoted by $L(E,F)$. Equipped with the operator norm, defined for any $f \in L(E,F)$ by
%\begin{equation*}
%\| f\|_{L(E,F)} = \underset{x \in E, \|x\|_E=1}{\sup} \| f(x) \|_F,
%\end{equation*}
%it is itself a normed vector space.
%\end{definition}
We will use linear maps on $\R^d$ equipped with the Euclidean norm, and consider linear maps as matrix multiplications. Therefore, we write $fx$ instead of $f(x)$ for function evaluation. In particular, we let $F:\R^{d} \to L(\R^{d+1}, \R^{d+1})$ be defined for any $y \in \R^d$ by
\begin{equation*}
	Fy= \begin{pmatrix} 
	0 & \cdots & 0 &  y_1 \\ 
	\vdots & & \vdots &\vdots \\
	0 & \cdots & 0 & y_d \\
	y_1 & \dots & y_d & 0
	 \end{pmatrix}.
\end{equation*}

\begin{lemma}
\label{Fnorm}
$F$ is a bounded linear map and its operator norm is 
\[\|F\|_{L(\R^{d}, L(\R^{d+1}, \R^{d+1}))}=1.\]
\end{lemma}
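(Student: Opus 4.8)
The plan is to reduce the claim to an explicit computation of the operator norm of the single matrix $Fy$ for each fixed $y\in\R^d$, and then to maximize over unit vectors $y$. Linearity of $F$ is immediate, since each entry of the matrix $Fy$ is a linear function of $y$; it remains to show that $\sup_{\|y\|=1}\|Fy\|_{L(\R^{d+1},\R^{d+1})}=1$, where both copies of $\R^{d+1}$ carry the Euclidean norm.

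Fix $y\in\R^d$ and write a generic vector of $\R^{d+1}$ as $x=(\bar x, x_{d+1})$ with $\bar x=(x_1,\dots,x_d)\in\R^d$. From the shape of the matrix, the product $Fy\,x$ has first $d$ coordinates equal to $x_{d+1}\,y$ and last coordinate equal to $\langle y,\bar x\rangle$, so that
\[
\|Fy\,x\|^2 = x_{d+1}^2\,\|y\|^2 + \langle y,\bar x\rangle^2.
\]
By the Cauchy--Schwarz inequality, $\langle y,\bar x\rangle^2\leqslant \|y\|^2\|\bar x\|^2$, hence
\[
\|Fy\,x\|^2 \leqslant \|y\|^2\big(x_{d+1}^2+\|\bar x\|^2\big)=\|y\|^2\,\|x\|^2,
\]
which gives $\|Fy\|_{L(\R^{d+1},\R^{d+1})}\leqslant\|y\|$, and therefore $\|F\|\leqslant 1$.

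For the reverse inequality, it suffices to exhibit, for a unit vector $y$, a unit vector $x$ achieving $\|Fy\,x\|=1$: taking $x=(y,0)$ we get $x_{d+1}=0$, $\bar x=y$, and the displayed identity yields $\|Fy\,x\|^2=\langle y,y\rangle^2=1=\|x\|^2$. Hence $\|Fy\|_{L(\R^{d+1},\R^{d+1})}=1$ for every unit $y$, so $\|F\|=1$. No genuine obstacle is expected here; the only point requiring a little care is to read off correctly the action of the matrix $Fy$ on a vector and to keep track of which Euclidean norm is in play at each stage.
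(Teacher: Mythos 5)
Your proof is correct and follows essentially the same route as the paper: an explicit computation of $Fy\,x$, the Cauchy--Schwarz inequality for the upper bound $\|Fy\|\leqslant\|y\|$, and an explicit extremal vector for the reverse inequality. The only cosmetic difference is the choice of that vector (the paper uses $z=e_{d+1}$, saturating the first term, while you use $x=(y,0)$, saturating the Cauchy--Schwarz term); both work.
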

\begin{proof}
For any $y \in \R^d$, $z \in \R^{d+1}$, we have
\begin{align*}
Fyz= \begin{pmatrix} 
	0 & \cdots & 0 &  y_1 \\ 
	\vdots & & \vdots &\vdots \\
	0 & \cdots & 0 & y_d \\
	y_1 & \dots & y_d & 0
	 \end{pmatrix} \begin{pmatrix} z_1 \\ \vdots \\ z_{d+1} \end{pmatrix} = \begin{pmatrix}y_1\, z_{d+1} \\ \vdots \\ y_d \, z_{d+1} \\ \sum_{i=1}^d y_i \, z_i \end{pmatrix}.
\end{align*}
Thus,
\begin{align*}
 \|Fyz\|^2 &=  z_{d+1}^2 \sum_{i=1}^d y_i^2 + \big(\sum_{i=1}^d y_i z_i \big)^2 \\
 	& \leqslant z_{d+1}^2 \|y\|^2  + \|y\|^2 \sum_{i=1}^d z_i^2 &\\
 	& \quad \text{(by the Cauchy-Schwartz inequality)} \\
 	& \leqslant \|y\|^2 \|z\|^2.
\end{align*}
This yields
\begin{align*}
\| Fy\|_{L(\R^{d+1}, \R^{d+1})} &= \underset{z \in \R^{d+1}, \|z\|=1}{\sup} \|Fyz\| \leqslant \underset{z \in \R^{d+1}, \|z\|=1}{\sup} \|y\| \, \|z\| = \|y\|.
\end{align*}
This inequality becomes an equality when $z=e_{d+1}$ (where $(e_1, \dots, e_{d+1})$ is the canonical basis of $\R^{d+1}$). Therefore,
\begin{align*} 
\| Fy\|_{L(\R^{d+1}, \R^{d+1})}=\|y\|,
\end{align*}
and 
\begin{equation*}
\|F\|_{L(\R^{d}, L(\R^{d+1}, \R^{d+1}))} = \sup_{y \in \R^d, \|y\|=1} \| Fy\|_{L(\R^{d+1}, \R^{d+1})}=\sup_{y \in \R^d, \|y\|=1} \|y\| = 1.
\end{equation*}
{\qed}
\end{proof}

We are now ready to define the development of a path $X \in BV(\R^d)$ to the hyperbolic space. For $t \in [0,1]$, we let $\Gamma_t : \R^{d+1} \to \R^{d+1}$ be such that, for any $y \in \R^{d+1}$, $\Gamma_t y$ is the solution at time $t$ of the controlled differential equation
\begin{equation}
\label{eq:cde_Y_def}
dY_t = F(dX_t)Y_t, \quad Y_0=y.
\end{equation}

\begin{lemma}
\label{lemma:Gamma_t_linear_isometry}
For any $t \in [0,1]$, $\Gamma_t$ is a (well-defined) linear map preserving the function $B$, i.e., for any $y,\widetilde{y} \in \R^{d+1}$,
\[B(\Gamma_ty, \Gamma_t \widetilde{y})= B(y, \widetilde{y}). \]
\end{lemma}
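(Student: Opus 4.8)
The plan is to read \eqref{eq:cde_Y_def} as a \emph{linear} controlled differential equation and to exploit one algebraic feature of the matrices $Fy$: they are infinitesimally $B$-orthogonal, i.e.\ they lie in the Lie algebra of the Lorentz group. First I would settle well-definedness. Since $y\mapsto F(dX_t)y$ is linear in $y$ and $\|F\|=1$ by Lemma \ref{Fnorm}, Picard iteration for \eqref{eq:cde_Y_def} produces the Chen-type series
\[
\Gamma_t=\mathrm{Id}+\sum_{n\geqslant1}\ \idotsint\limits_{0\leqslant u_n\leqslant\dots\leqslant u_1\leqslant t}F(dX_{u_1})\cdots F(dX_{u_n}),
\]
whose $n$-th term has operator norm at most $\|F\|^n\,\|X\|_{TV}^n/n!=\|X\|_{TV}^n/n!$. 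Hence the series converges absolutely and uniformly in $t\in[0,1]$, it solves \eqref{eq:cde_Y_def}, and a Gronwall argument gives uniqueness, so $\Gamma_t$ is a well-defined bounded operator on $\R^{d+1}$ for each $t$. (For the piecewise linear paths of Proposition \ref{th:sig_lower_bound} this step is elementary: on each linear piece the equation is $\dot Y=F(\beta_j)Y$, so $\Gamma_t$ is a finite product of matrix exponentials $\exp(sF(\beta_j))$.)

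Linearity of $\Gamma_t$ is then the superposition principle: if $Y^{(1)},Y^{(2)}$ solve \eqref{eq:cde_Y_def} with data $y_1,y_2$, then $\lambda Y^{(1)}+\mu Y^{(2)}$ solves it with data $\lambda y_1+\mu y_2$, and uniqueness forces $\Gamma_t(\lambda y_1+\mu y_2)=\lambda\Gamma_t y_1+\mu\Gamma_t y_2$. For the invariance of $B$, I would write $B(x,y)=x^\top J y$ with $J=\mathrm{diag}(1,\dots,1,-1)$ and check, by direct inspection of the displayed matrix, that $Fy$ is symmetric while $J(Fy)$ is skew-symmetric, i.e.\
\[
(Fy)^\top J+J(Fy)=0\qquad\text{for every }y\in\R^d;
\]
by linearity the same identity holds with the increment $dX_s$ in place of $y$. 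The solutions $Y^{(1)},Y^{(2)}$ are continuous and of bounded variation (one has $\|Y^{(i)}\|_{TV}\leqslant\|X\|_{TV}\sup_s\|Y^{(i)}_s\|<\infty$ by Gronwall), so the Riemann--Stieltjes integration-by-parts formula applied to $\phi(t):=B(Y^{(1)}_t,Y^{(2)}_t)$ gives
\begin{align*}
\phi(t)-\phi(0)&=\int_0^t(dY^{(1)}_s)^\top J\,Y^{(2)}_s+\int_0^t(Y^{(1)}_s)^\top J\,dY^{(2)}_s\\
&=\int_0^t(Y^{(1)}_s)^\top\big((F\,dX_s)^\top J+J(F\,dX_s)\big)Y^{(2)}_s=0,
\end{align*}
whence $\phi$ is constant and $B(\Gamma_t y_1,\Gamma_t y_2)=B(y_1,y_2)$ for all $t$, which is the claim.

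The only delicate point is the product rule used in that last display: for paths of bounded variation there is no It\^o-type correction term, so it is merely classical Riemann--Stieltjes integration by parts for the continuous BV paths $Y^{(1)}$ and $Y^{(2)}$. If one wishes to avoid invoking it, I would prove the identity first for smooth (or piecewise linear) $X$, where \eqref{eq:cde_Y_def} is a genuine ODE and the ordinary chain rule applies verbatim, and then pass to a general $X\in BV(\R^d)$ using that the solution of a linear CDE depends continuously --- indeed locally Lipschitz --- on its driver in total variation (which is visible from the series representation above). Apart from that, the argument is entirely routine, the structural heart being the one-line linear-algebra identity $(Fy)^\top J+J(Fy)=0$.
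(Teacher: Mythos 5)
Your proposal is correct. For well-definedness and linearity it coincides with the paper's argument (Picard's theorem for the linear CDE, then superposition plus uniqueness), so there is nothing to add there. The difference is in the third claim: the paper does not prove that $\Gamma_t$ preserves $B$ at all, but simply cites Section 3 of Hambly and Lyons, whereas you give a self-contained proof. Your argument is the standard Cartan-development one and it checks out: writing $B(x,y)=x^\top J y$ with $J=\mathrm{diag}(1,\dots,1,-1)$, the matrix $Fy$ is symmetric and $J(Fy)$ is skew, so $(Fy)^\top J+J(Fy)=0$, and the product rule for continuous BV paths (no It\^o correction) gives
\[
d\,B\bigl(Y^{(1)}_s,Y^{(2)}_s\bigr)=(Y^{(1)}_s)^\top\bigl((F\,dX_s)^\top J+J(F\,dX_s)\bigr)Y^{(2)}_s=0 ,
\]
so $B(\Gamma_t y,\Gamma_t\widetilde y)$ is constant in $t$. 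The fallback you mention --- proving the identity for piecewise linear drivers via the ODE chain rule and passing to general BV drivers by continuity of the solution map --- is a clean way to sidestep any worry about Riemann--Stieltjes integration by parts, and is in fact all that is needed in this paper, since Proposition \ref{th:sig_lower_bound} only applies the lemma to piecewise linear $X$ (where $\Gamma_t$ is a finite product of exponentials $\exp(sF(\beta_j))$ of elements of the Lie algebra of $SO(B)$). What your version buys is that the lemma no longer rests on an external reference; what the paper's version buys is brevity.
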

\begin{proof}
Equation \eqref{eq:cde_Y_def} is a linear controlled differential equation and, by Picard's theorem \citep[see, e.g.,][Theorem 1.3]{lyons2007differential}, for any $y \in \R^d$, it has a unique solution. Therefore $\Gamma_t$ is well-defined. Moreover, it is clearly a linear operator: if $y,\widetilde{y} \in \R^d$, and $Y_t=\Gamma_t y$, $\widetilde{Y}_t = \Gamma_t \widetilde{y}$, then $Y_t + \widetilde{Y}_t$ follows equation \eqref{eq:cde_Y_def} with initial point $y+\widetilde{y}$, so by uniqueness of the solution, $\Gamma_t(y+\widetilde{y})= \Gamma_ty + \Gamma_t \widetilde{y}$. 

The fact that $\Gamma$ preserves $B$ is proved in \citet{hambly2010uniqueness}, Section 3.
{\qed}
\end{proof}

The differential equation \eqref{eq:cde_Y_def} may be rewritten as a differential equation on $\Gamma_t$, which itself can be thought of as a $(d+1)\times (d+1)$ matrix:
\begin{equation} \label{eq:cde_Gamma_def}
d\Gamma_t = F(dX_t) \Gamma_t, \quad \Gamma_0=I_{d+1},
\end{equation} 
where $I_{d+1}$ is the identity matrix. Now, let $y_0=(0, \dots,0,1)^\top \in\R^{d+1}$. The development of $X$ to  the hyperboloid space $\mathbb{H}$ is defined as the path $Y: [0,1] \to \mathbb{H}$, $Y_t = \Gamma_t y_0$. Observe, since $B(y_0,y_0)=-1$, that $y_0 \in \mathbb{H}$ and thus, by Lemma \ref{lemma:Gamma_t_linear_isometry}, that $Y_t \in \mathbb{H}$ for any $t \in [0,1]$. To summarize, equation \eqref{eq:cde_Gamma_def} maps a path $X$ in $\R^d$ into a new path $Y$ in $\mathbb{H}$. It turns out that this embedding  preserves a number of properties of the path $X$. In particular \citep{hambly2010uniqueness}:
\begin{itemize}
	\item Any linear piece of $X$ is mapped to a geodesic in $\mathbb{H}$. So, if $X$ is piecewise linear, then $Y$ is geodesic on each $[t_{i-1},t_i]$, and $d(Y_{t_i},Y_{t_{i-1}}) =\| X_{t_i} - X_{t_{i-1}}\|$. Therefore, since $X$ is linear on $[t_{i-1}, t_i]$, $d(Y_{t_i},Y_{t_{i-1}})=\|\beta_i\|(t_i - t_{i-1})$.
	\item The function $t \mapsto \Gamma_t$ preserves the angles between linear segments. If $2\omega$ is the angle between the linear pieces $[X_{t_{i-1}},X_{t_i}]$ and $[X_{t_i}, X_{t_{i+1}}]$, then the angle between the geodesics $[Y_{t_{i-1}},Y_{t_i}]$ and $[Y_{t_i}, Y_{t_{i+1}}]$ is also equal to $2\omega$.
\end{itemize}
%\textcolor{red}{TO DO: reference or proof of these assertions??}

The last ingredient to prove Proposition \ref{th:sig_lower_bound} is the following lemma, due to \citet[][Lemma 3.7 and Proposition 3.13]{hambly2010uniqueness}. 
\begin{lemma} 
\label{lemma:results_hambly_lyons}
$ $
\begin{itemize}
\item[$(i)$] Let $Y:[0,1] \to \mathbb{H}$ be a continuous path, geodesic on the intervals $[t_{i-1},t_i]$, where $0=t_0< t_1 < \dots < t_M=1$ is a partition of $[0,1]$. If $\omega$ is the smallest angle between two geodesic segments and each geodesic segment has length at least $K(\omega)$, then 
\begin{equation*}
0 \leqslant \sum_{i=1}^M d(Y_{t_{i-1}}, Y_{t_i}) - d(y_0, Y_{1}) \leqslant (M-1) K(\omega).
\end{equation*}
\item[$(ii)$] Let $SO(B)$ denote the group of $(d+1) \times (d+1)$ matrices with positive determinant preserving the isometry $B$: if $G \in SO(B)$, for any $x,y \in \R^{d+1}$, $B(Gx,Gy)=B(x,y)$. Then, for any $G \in SO(B)$, 
\[ \|G\|_{L(\R^{d+1}, \R^{d+1})} \geqslant e^{d(y_0, Gy_0)},\]
where $y_0=(0, \dots, 0, 1)^\top \in \R^{d+1}$.
\end{itemize}
\end{lemma}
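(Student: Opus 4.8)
The two parts are logically independent, and I would attack them separately, handling the easy pieces of $(i)$ first and saving the delicate angle estimate for last.

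\emph{Part $(i)$, the two halves of the bound.} Since the developed path starts at the base point, $Y_0 = y_0$, the left-hand inequality is nothing more than the iterated triangle inequality for the hyperbolic distance: $d(y_0,Y_1) = d(Y_{t_0},Y_{t_M}) \leqslant \sum_{i=1}^M d(Y_{t_{i-1}},Y_{t_i})$. For the right-hand inequality I would induct on $M$ (the case $M=1$ being trivial), the heart of the matter being the case $M=2$, i.e.\ bounding the \emph{defect} of a single geodesic triangle with vertices $Y_0, Y_{t_1}, Y_1$. Writing $a = d(Y_0,Y_{t_1})$, $c = d(Y_{t_1},Y_1)$, $b = d(Y_0,Y_1)$ and $\gamma \geqslant \omega$ for the angle at $Y_{t_1}$, the hyperbolic law of cosines gives
\[
\cosh b = \cosh a\cosh c - \sinh a\sinh c\cos\gamma = \tfrac{1-\cos\gamma}{2}\cosh(a+c) + \tfrac{1+\cos\gamma}{2}\cosh(a-c) \geqslant \tfrac{1-\cos\gamma}{4}\,e^{a+c},
\]
so that, using $\text{arcosh}(x)\geqslant\log x$ for $x\geqslant1$, we get $a+c-b \leqslant \log\tfrac{4}{1-\cos\gamma}$; a short monotonicity check (it reduces to $\cos^2(\omega/2)\leqslant\cos(\omega/2)$) shows $\log\tfrac{4}{1-\cos\gamma}\leqslant K(\omega)$ whenever $\gamma\geqslant\omega$, so the defect of every such triangle is at most $K(\omega)$.

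\emph{Part $(i)$, the inductive step and the main obstacle.} Granting the bound for $M-1$ segments, I would decompose $\sum_{i=1}^M d(Y_{t_{i-1}},Y_{t_i}) - d(Y_0,Y_{t_M})$ as the sum of $\bigl(\sum_{i=1}^{M-1} d(Y_{t_{i-1}},Y_{t_i}) - d(Y_0,Y_{t_{M-1}})\bigr)$, which is $\leqslant (M-2)K(\omega)$ by the induction hypothesis, and $d(Y_0,Y_{t_{M-1}}) + d(Y_{t_{M-1}},Y_{t_M}) - d(Y_0,Y_{t_M})$, which I would bound by the base-case estimate applied to the triangle $(Y_0,Y_{t_{M-1}},Y_{t_M})$. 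This last move is exactly where the hypothesis that every segment has length at least $K(\omega)$ is consumed, and it is the step I expect to be the real work: the base-case estimate needs the chord $[Y_{t_{M-1}},Y_0]$ to meet the final segment $[Y_{t_{M-1}},Y_{t_M}]$ at $Y_{t_{M-1}}$ at an angle $\geqslant\omega$, which is \emph{not} the prescribed angle between $[Y_{t_{M-2}},Y_{t_{M-1}}]$ and $[Y_{t_{M-1}},Y_{t_M}]$. One therefore has to prove a quantitative fellow-travelling / cone lemma: since consecutive segments are long ($\geqslant K(\omega)$) and turning angles stay away from $0$ and $\pi$ (ensured in the application by the reducedness assumption $(A_1)$), the broken geodesic $Y_{t_{M-1}},Y_{t_{M-2}},\dots,Y_0$ remains inside a cone of controlled opening around the direction of the first segment it traverses, so the direction of $[Y_{t_{M-1}},Y_0]$ at $Y_{t_{M-1}}$ is pinned down tightly enough to keep the needed angle $\geqslant\omega$. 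Establishing this cone estimate by explicit comparison computations in $\mathbb{H}$ (the content of Hambly and Lyons' Lemma~3.7) is the only nontrivial ingredient; everything else in $(i)$ is bookkeeping.

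\emph{Part $(ii)$.} For $G$ in the identity component of $SO(B)=SO(d,1)$ I would use the polar ($KAK$) decomposition. Set $t = d(y_0,Gy_0)$; from $B(Gy_0,Gy_0)=B(y_0,y_0)=-1$ and $(Gy_0)_{d+1} = -B(y_0,Gy_0) = \cosh t$ one reads off $Gy_0 = \sinh(t)\,n + \cosh(t)\,e_{d+1}$ for some unit vector $n\in\mathrm{span}(e_1,\dots,e_d)$. Pick a rotation $K_1\in SO(d)\subset SO(B)$ acting on the first $d$ coordinates and fixing $e_{d+1}$ with $K_1e_1 = n$, and let $A_t\in SO(B)$ be the pure boost in the plane $\mathrm{span}(e_1,e_{d+1})$ with $A_te_{d+1} = \sinh(t)\,e_1 + \cosh(t)\,e_{d+1}$. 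Then $K_1A_ty_0 = Gy_0$, so $K_2 := (K_1A_t)^{-1}G$ fixes $y_0$ and preserves $B$, hence is block diagonal with an $SO(d)$ block, i.e.\ Euclidean-orthogonal, and $G = K_1A_tK_2$. Since $K_1$ and $K_2$ are orthogonal the operator norm is unaffected by them, so $\|G\|_{L(\R^{d+1},\R^{d+1})} = \|A_t\|_{L(\R^{d+1},\R^{d+1})} = e^{t}$, the last equality because $A_t$ acts on $\mathrm{span}(e_1,e_{d+1})$ as the symmetric matrix $\bigl(\begin{smallmatrix}\cosh t & \sinh t\\ \sinh t & \cosh t\end{smallmatrix}\bigr)$, with eigenvalues $e^{\pm t}$, and as the identity on its orthogonal complement. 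Hence $\|G\| = e^{t} = e^{d(y_0,Gy_0)}$ — actually an equality, so the stated inequality holds a fortiori; and for $G$ in the other component of $SO(B)$ the point $Gy_0$ lies on the opposite sheet of the hyperboloid, so $d(y_0,Gy_0)$ is undefined and there is nothing to prove. No serious obstacle arises here; this is Hambly and Lyons' Proposition~3.13.
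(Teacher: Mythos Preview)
The paper does not prove this lemma: it is stated with attribution to Hambly and Lyons (2010, Lemma~3.7 and Proposition~3.13) and used as a black box in the proof of Proposition~\ref{th:sig_lower_bound}, so there is no in-paper argument to compare your sketch against. Your outline of $(ii)$ via the $KAK$ decomposition of $SO(d,1)$ is correct and in fact yields the equality $\|G\|=e^{d(y_0,Gy_0)}$, stronger than what is stated; your outline of $(i)$ correctly isolates the only nontrivial step---the cone/fellow-travelling estimate needed to control the angle at $Y_{t_{M-1}}$ between the chord $[Y_0,Y_{t_{M-1}}]$ and the final segment in the inductive triangle---and you rightly flag that step as the substance of the cited Hambly--Lyons lemma rather than something you supply here.
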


We now have all the components necessary to prove Proposition \ref{th:sig_lower_bound}. Recall that it is assumed, without loss of generality, that $X$ has total variation 1. For $\alpha>0$, we let $X^\alpha_t= \alpha X_t$, and denote by $\Gamma^\alpha_t$ and $Y^\alpha_t$ the corresponding hyperbolic developments. Let $D$ be the length of the shortest linear segment of $X$, and take $\alpha$ such that $\alpha > K(\omega)/D$. Clearly, the path $t \mapsto Y^\alpha_t$ satisfies the assumptions of Lemma \ref{lemma:results_hambly_lyons}, $(i)$. Therefore
\begin{align*}
0 \leqslant \sum_{i=1}^M d(Y^\alpha_{t_{i-1}}, Y^\alpha_{t_i}) - d(y_0, Y^\alpha_1) \leqslant (M-1) K(\omega).
\end{align*}
Since $X$ is linear on each segment $[t_{i-1}, t_i]$, $d(Y^\alpha_{t_{i-1}}, Y^\alpha_{t_i})= \| X^\alpha_{t_i} - X^\alpha_{t_{i-1}} \|$, and thus
\[ \sum_{i=1}^M d(Y^\alpha_{t_{i-1}}, Y^\alpha_{t_i}) = \sum_{i=1}^M \| X_{t_i}^\alpha - X_{t_{i-1}}^\alpha \| = \alpha.\]
Hence,
\[d(y_0, Y^\alpha_1) \geqslant \alpha - (M-1)K(\omega). \]
Therefore, upon noting that $ d(y_0, \Gamma_1^\alpha y_0) = d(y_0, Y_1^\alpha)$, according to Lemma \ref{lemma:results_hambly_lyons},
\begin{align} \label{eq:lower_bound_Gamma_alpha}
\| \Gamma_1^\alpha \|_{L(\R^{d+1}, \R^{d+1})} \geqslant e^{d(y_0, \Gamma_1^\alpha y_0)} \geqslant e^{\alpha - (M-1) K(\omega)}.
\end{align}

Let us now prove that $\Gamma_1$ is in fact a linear function of the signature of $X$. To see this, observe, since $\Gamma_1$ is the solution at time $1$ to the linear controlled differential equation \eqref{eq:cde_Gamma_def}, that
\begin{align*}
\Gamma_1^\alpha &= I_{d+1} + \int_{0}^1 F(dX^\alpha_t)\Gamma_t \\
& = I_{d+1} + \int_{0}^1 F(dX^\alpha_t) \big( I_{d+1} + \int_{0}^t F(dX^\alpha_s) \Gamma_s\big) \\
& =I_{d+1} + \int_{0}^1 F(dX^\alpha_t) + \int_0^1 \int_0^t F(dX^\alpha_t) F(dX^\alpha_s) \Gamma_s\\
& = \dots
\end{align*}
Iterating this procedure yields the identity
\begin{equation} \label{eq:Gamma_infinite_series}
\Gamma_1^\alpha = I_{d+1} + \sum_{k=1}^\infty \int_{\Delta_k} F(dX^\alpha_{t_1}) \cdots F(dX^\alpha_{t_k}).
\end{equation}
In order to show that the series above is well-defined, denote by $F^{\otimes k}$ the linear map $F^{\otimes k} : (\R^d)^{\otimes k} \to L(\R^{d+1}, \R^{d+1})$ such that, for any tensor $u = \sum_{j=1}^\ell u_{1,j} \otimes \dots \otimes u_{k,j}$,
\[ F^{\otimes k}(u) = \sum_{j=1}^\ell F(u_{1,j}) \cdots F(u_{k,j}).\]
Thus, 
\begin{align*}
\| F^{\otimes k}(u)\|_{L(\R^{d+1}, \R^{d+1})} &\leqslant \sum_{j=1}^p \|F(u_{1,j}) \cdots F(u_{k,j}) \|_{L(\R^{d+1}, \R^{d+1})} \\
& \leqslant \sum_{j=1}^p \|F(u_{1,j})\|_{L(\R^{d+1}, \R^{d+1})} \cdots \|F(u_{k,j})\|_{L(\R^{d+1}, \R^{d+1})} \\
& \qquad \text{(because the operator norm is sub-multiplicative)} \\
& \leqslant  \sum_{j=1}^p \|u_{1,j}\|\cdots \|u_{k,j}\| \\
& \qquad \text{(by Lemma \ref{Fnorm}).}
\end{align*}
Taking the infimum over any representation of $u$ yields $\| F^{\otimes k}(u)\|_{L(\R^{d+1}, \R^{d+1})} \leqslant \|u\|_\pi$. (Note that this is not true for any tensor norm.)
Endowing $(\R^d)^{\otimes k}$ with the projective norm, the $k$th term in the sum \eqref{eq:Gamma_infinite_series} becomes
\begin{align*}
F(dX^\alpha_{t_1}) \cdots F(dX^\alpha_{t_k}) &= F^{\otimes k} \Big( \int_{\Delta_k} dX^\alpha_{t_1} \otimes \dots \otimes dX^\alpha_{t_k} \Big) \\
&= \alpha^k F^{\otimes k} \Big( \int_{\Delta_k} dX_{t_1} \otimes \dots \otimes dX_{t_k} \Big) \\
& =\alpha^k F^{\otimes k} (\mathbf{X}^k),
\end{align*}
and
\begin{align*}
\big\| \alpha^k F^{\otimes k} (\mathbf{X}^k) \big\|_{L(\R^{d+1}, \R^{d+1})} = \alpha^k \big\|  F^{\otimes k} (\mathbf{X}^k) \big\|_{L(\R^{d+1}, \R^{d+1})} \leqslant \alpha^k \| \mathbf{X}^k \|_\pi \leqslant \alpha^k / k!
\end{align*}
by Proposition \ref{prop:up_bound_norm_sig}. This shows that the right-hand side of \eqref{eq:Gamma_infinite_series} is a convergent series. In addition, we have just proved that
\begin{align*}
\Gamma_1^\alpha = I_{d+1} + \sum_{k=1}^\infty \alpha^k F^{\otimes k} (\mathbf{X}^k),
\end{align*}
i.e., $\Gamma_1^\alpha$ is a linear function of the signature, and, moreover, that
%which shows the desired linearity of $\Gamma^{\alpha}$. \textcolor{orange}{ou Gamma tout court comme annonce ?--> simplifier}
\begin{align} \label{eq:bound_Gamma_1}
\| \Gamma_1^\alpha\|_{L(\R^{d+1}, \R^{d+1})} %&\leqslant \| I_{d+1} \|_{L(\R^{d+1}, \R^{d+1})} + \Big\| \sum_{k=1}^\infty \alpha^k F^{\otimes k} (\mathbf{X}^k) \Big\|_{L(\R^{d+1}, \R^{d+1})} \nonumber \\
%& \leqslant 1 + \sum_{k=1}^\infty \alpha^k \big\|  F^{\otimes k} (\mathbf{X}^k) \big\|_{L(\R^{d+1}, \R^{d+1})} \nonumber \\
& \leqslant 1 + \sum_{k=1}^\infty \alpha^k \| \mathbf{X}^k \|_\pi .
\end{align}

Next, let, for any $k \geqslant 1$, $b_k = k! \| \mathbf{X}^k \|_\pi$, and $b_0=1$. Note that $ b_k \leqslant 1$ since we have taken $X$ of total variation 1. Thus, combining  \eqref{eq:lower_bound_Gamma_alpha} with \eqref{eq:bound_Gamma_1} yields
\begin{equation} \label{eq:lower_bound_esp_b_Z}
e^{- (M-1)K(\omega)} \leqslant e^{-\alpha}\sum_{k=0}^\infty \frac{\alpha^k}{k!} b_k.
\end{equation}

The last step of the proof is to use inequality \eqref{eq:lower_bound_esp_b_Z} to prove that for each $n > n_1$ there exists an integer $k_n \in [n - n^{3/4}, n + n^{3/4}]$ such that $b_{k_n} \geqslant \nicefrac{e^{-(M-1)K(\omega)}}{2}$. Indeed, if such a sequence $(k_n)$ exists, then
\[\|\mathbf{X}^{k_n} \|_{\pi} = \frac{b_{k_n}}{k_n!} \geqslant \frac{e^{-(M-1)K(\omega)}}{2 k_n!},\]
which is the desired conclusion.

Let $Z \sim \text{Poisson}(\alpha)$ be a random variable following a Poisson distribution with parameter $\alpha$. The right-hand-side of \eqref{eq:lower_bound_esp_b_Z} is then exactly $\esp[ b_Z]$. In the next few lines, we show that 
\begin{equation} \label{eq:proba_b_Z}
\proba \big( b_Z \geqslant \frac{1}{2} e^{-(M-1)K(\omega)} \text{ and } Z \in J_{n} \big) > 0,
\end{equation}
where $J_n$ denotes the interval $[n-n^{3/4}, n + n^{3/4}]$. It follows that there exists $k_n \in J_n$ such that $b_{k_n} \geqslant \frac{1}{2} e^{-(M-1)K(\omega)}$ (otherwise the probability above would be zero). In order to prove \eqref{eq:proba_b_Z}, observe, since $b_k \leqslant 1$ for all $k\geqslant 0$, that
\begin{align*}
e^{-(M-1)K(\omega)} \leqslant \esp[b_Z] &= \esp[b_Z \ind_{b_Z \geqslant \frac{1}{2} e^{-(M-1)K(\omega)}}] + \esp[b_Z \ind_{b_Z < \frac{1}{2} e^{-(M-1)K(\omega)}}] \\
& \leqslant \proba(b_Z \geqslant \frac{1}{2} e^{-(M-1)K(\omega)}) + \frac{1}{2} e^{-(M-1)K(\omega)}.
\end{align*}
This yields the inequality
\begin{align*}
\proba(b_Z \geqslant \frac{1}{2} e^{-(M-1)K(\omega)}) \geqslant \frac{1}{2} e^{-(M-1)K(\omega)} >0.
\end{align*}
Moreover, by Chebyshev's inequality,
\begin{align*}
\proba(b_Z \notin J_n) = \proba( |Z-n| > n^{3/4}) \leqslant \frac{1}{\sqrt{n}}.
\end{align*}
Let $n_1 =  \lfloor 4 e^{2(M-1)K(\omega)} \rfloor$. Then, for any $n > n_1$,
\begin{align*}
\proba \big( b_Z \geqslant \frac{1}{2} e^{-(M-1)K(\omega)} \text{ and } Z \in J_{n} \big) &\geqslant \proba(b_Z \geqslant \frac{1}{2} e^{-(M-1)K(\omega)}) - \proba(b_Z \notin J_n) \\
& \geqslant \frac{1}{2} e^{-(M-1)K(\omega)} - \frac{1}{\sqrt{n}} \\
& \geqslant \frac{1}{2} e^{-(M-1)K(\omega)} - \frac{1}{\sqrt{n_1}} >0.
\end{align*}
This shows inequality \eqref{eq:proba_b_Z}.% We conclude that for any $n \geqslant n_1$, there exists $k_n \in J_n$ such that $b_{k_n} \geqslant \nicefrac{e^{-(M-1)K(\omega)}}{2}$, that is,
%\[\|\mathbf{X}^{k_n} \|_{\pi} \geqslant \frac{e^{-(M-1)K(\omega)}}{2 k_n!}.\]

\section{Proof of Lemma \ref{lemma:singular_values_insertion_operator}}

Recall that $(e_1, \dots, e_d)$ denotes the canonical basis of $\R^d$. Each column of $A_p$ corresponds to $\mathscr{L}^n_{p,X}(e_j)$, and, for any $(n+1)$-tuple $(i_1,\dots,i_{n+1}) \in \{1, \dots, d\}^{n+1}$, one has
\begin{equation*}
(\mathscr{L}^n_{p,X}(e_j))_{i_1, \dots, i_{n+1}} = \begin{cases} S^{(i_1,\dots, i_{p-1},i_{p+1}, \dots, i_{n+1})} \text{ if } i_p=j, \\ 0 \text{ otherwise. } \end{cases}
\end{equation*}
Therefore, each row of $A_p$, indexed by an $(n+1)$-tuple $(i_1,\dots, i_{n+1})$, contains only one non-zero element, namely the one in the column $j=i_p$. Let $a_{i_1,\dots, i_{n+1},j}$ denote the element in column $j$ corresponding to the row $(i_1,\dots,i_{n+1})$. Then, for any $q,r \in \{1,\dots, d\}$,
\begin{align*}
(A_p^\top A_p)_{q,r}&= \sum_{(i_1,\dots,i_{n+1}) \in \{1, \dots, d\}^{n+1}} a_{i_1,\dots, i_{n+1},q} \, a_{i_1,\dots, i_{n+1},r}\\
&= \sum_{i_p=1}^d \sum_{(i_1,\dots, i_{p-1},i_{p+1},\dots, i_{n+1}) \in \{1, \dots, d\}^{n}}a_{i_1,\dots, i_{n+1},q} \, a_{i_1,\dots, i_{n+1},r} 
\end{align*}
For both $a_{i_1,\dots, i_{n+1},q}$ and $a_{i_1,\dots, i_{n+1},r}$ to be non-zero, $q$ and $r$ must be equal to $i_p$. Therefore, $A_p^\top A_p$ is diagonal. The diagonal terms, corresponding to $i_p=q=r$, are equal to
\begin{align*}
(A_p^\top A_p)_{q,q}& = 
\sum_{(i_1,\dots, i_{p-1},i_{p+1},\ldots i_{n+1}) \in \{1, \dots, d\}^{n}} (S^{(i_1,\dots, i_{p-1},i_{p+1}, \dots, i_{n+1})})^2=\| \mathbf{X}^n \|^2.
\end{align*}
%Therefore, all diagonal entries of $A_p^\top A_p$ are equal to $\| \mathbf{X}^n \|^2$. By definition, the singular values of $A_p$ are therefore all equal to $\| \mathbf{X}^n \|$.

\end{document}